\numberwithin{equation}{section}
\newtheorem{theorem}{Theorem}[section]
\newtheorem{lemma}[theorem]{Lemma}
\newtheorem{proposition}[theorem]{Proposition}
\theoremstyle{definition}
 \pgfplotsset{compat=1.18}
\definecolor{dgreen}{HTML}{006600}
\definecolor{lgreen}{HTML}{B3FFB3}
\newcommand{\cA}{\mathcal{A}}
\newcommand{\cB}{\mathcal{B}}
\newcommand{\cD}{\mathcal{D}}
\newcommand{\cE}{\mathcal{E}}
\newcommand{\cF}{\mathcal{F}}
\newcommand{\cH}{\mathcal{H}}
\newcommand{\cN}{\mathcal{N}}
\newcommand{\cS}{\mathcal{S}}
\newcommand{\cU}{\mathcal{U}}
\newcommand{\kS}{\mathfrak{S}}
\newcommand{\bC}{\mathbb{C}}
\newcommand{\one}{\mathds{1}}
\newcommand{\eps}{\varepsilon}
\DeclareMathOperator{\tr}{tr}
\DeclareMathOperator{\id}{id}
\DeclareMathOperator{\GL}{GL}
\DeclareMathOperator{\SL}{SL}
\DeclareMathOperator{\Sym}{Sym}
\DeclareMathOperator{\End}{End}
\DeclareMathOperator{\gl}{\mathfrak{gl}}
\DeclareMathOperator{\ssyt}{SSYT}
\newcommand{\sumi}{\sum\nolimits}
\newcommand{\ox}{\otimes}
\newcommand{\bb}[1]{\text{BB84},#1}
\newcommand{\twoP}[1]{\text{2P},#1}
\newcommand{\dep}[1]{\text{dep},#1}
\newcommand{\be}{\begin{equation}}
\newcommand{\ee}{\end{equation}}
\title{Improving quantum communication rates\\[.25em] with permutation-invariant codes}
\author[1]{Sujeet Bhalerao\thanks{\href{mailto:sgb4@illinois.edu}{\texttt{sgb4@illinois.edu}}}}
\author[1,2]{Felix Leditzky\thanks{\href{mailto:leditzky@illinois.edu}{\texttt{leditzky@illinois.edu}}}}
\affil[1]{Department of Mathematics, University of Illinois Urbana-Champaign, Urbana, IL 61801, USA}
\affil[2]{Illinois Quantum Information Science and Technology Center, University of Illinois Urbana-Champaign, Urbana, IL 61801, USA}
\begin{document}
\maketitle

\begin{abstract}
In this work we improve the quantum communication rates of various quantum channels of interest using permutation-invariant quantum codes.
We focus in particular on parametrized families of quantum channels and aim to improve bounds on their quantum capacity threshold, defined as the lowest noise level at which the quantum capacity of the channel family vanishes.
These thresholds are important quantities as they mark the noise level up to which faithful quantum communication is theoretically possible.
Our method exploits the fact that independent and identically distributed quantum channels preserve any permutation symmetry present at the input.
The resulting symmetric output states can be described succinctly using the representation theory of the symmetric and general linear groups, which we use to derive an efficient algorithm for computing the channel coherent information of a permutation-invariant code.
Our approach allows us to evaluate coherent information values for a large number of channel copies, e.g., at least 100 channel copies for qubit channels.
We apply this method to various physically relevant channel models, including general Pauli channels, the dephrasure channel, the generalized amplitude damping channel, and the damping-dephasing channel.
For each channel family we obtain improved lower bounds on their quantum capacities.
For example, for the 2-Pauli and BB84 channel families we significantly improve the best known quantum capacity thresholds derived in [Fern, Whaley 2008].
These threshold improvements are achieved using a repetition code-like input state with non-orthogonal code states, which we further analyze in our representation-theoretic framework.
\end{abstract}

\tableofcontents

\section{Introduction}

Quantum information processing offers great promise in efficiently solving hard and practically relevant computational and simulation problems with various applications to real-world problems \cite{aaronson2025future}.
This promise has resulted in substantial attention from academia, industry, and the public sector.
However, quantum systems are notoriously prone to errors induced by noise, posing significant challenges to control quantum systems so that a useful quantum computation can be performed.
Noise in quantum systems is typically modeled as a \emph{quantum channel}.
The \emph{quantum capacity} of a quantum channel characterizes, in a precise information-theoretic way, the fundamental limits of quantum information processing in the presence of noise modeled by the channel.
In this characterization, the task of protecting quantum information from noise is cast as a communication problem in which Alice aims to send quantum information to Bob through a noisy quantum channel connecting them.
This noisy communication link is equivalent to the environmental noise affecting a quantum device, and both are modeled mathematically as a quantum channel.

The quantum capacity of a quantum channel can be expressed in terms of an entropic optimization problem involving the channel coherent information \cite{schumacher1996entanglement,schumacher1996quantum,barnum1998information,barnum2000capacities,lloyd1997capacity,shor2002quantum,devetak2005private} (see \Cref{sec:quantum-capacity}).
However, this formula involves a so-called regularization of the channel coherent information over an unbounded number of channel copies, which turns the quantum capacity formula into a generally intractable optimization problem.
Indeed, regularization is necessary for many quantum channels because of the effect of \emph{superadditivity}, whereby the value of the channel coherent information can strictly increase with the number of channel copies \cite{shor1996syndrome,divincenzo1998capacity,smith2006degenerate,smith2008zero,fern2008lower,smith2011quantum,brandao2012noise,cubitt2015unbounded,jackson2017degenerate,leditzky2018dephrasure,lim2018activation,lim2019activation,noh2020enhanced,bausch2020neural,bausch2021errorthresholds,siddhu2021positivity,siddhu2021entropic,filippov2021capacity,filippov2022multipartite,sidhardh2022exploring,singh2023simultaneous,leditzky2023generic,siddhu2024dampingdephasing,wu2025superadditivity,wu2025small}.
There are a few solvable classes of quantum channels for which the quantum capacity can be calculated exactly (see \Cref{sec:degradable-antidegradable}), but for many channels of interest superadditivity prevents us from determining their quantum capacity.
As a result, for such channels we do not know the exact fundamental limits of faithful quantum information processing, which obstructs the design of efficient error correction and quantum communication protocols achieving the fundamental Shannon-theoretic limit.
In this context, particular interest lies in determining the exact quantum capacity \emph{threshold} of parametrized channel families, marking the noise level up to which faithful quantum information processing is possible.

The intractability of determining these thresholds---and more generally the exact value of the quantum capacity---challenges us to develop techniques for finding good lower bounds on the quantum capacity of a quantum channel.
This typically involves the optimization of the channel coherent information of a given channel over some (possibly restricted) ansatz of multipartite quantum states, together with an efficient algorithm to compute entropies of the resulting output states.
Some techniques in the existing literature make use of quantum error-correcting codes \cite{shor1996syndrome,divincenzo1998capacity,smith2006degenerate,fern2008lower,jackson2017degenerate,leditzky2018dephrasure}, neural network states \cite{bausch2020neural}, graph states \cite{bausch2021errorthresholds}, and methods inspired from condensed matter theory and high energy physics \cite{fan2024overcoming,steinberg2024far,lee2025exact,niwa2025coherent}.

In this work, we consider a restricted optimization of the channel coherent information over multipartite input states, which we call ``codes'' in this work, with suitable symmetries that are preserved by the quantum channel.
These symmetries enable us to use tools from representation theory to develop an efficient algorithm to compute the channel coherent information of such states.
We will show in the present paper that, despite this symmetry restriction in the ansatz, our method yields improved bounds on the quantum capacity and increased quantum capacity thresholds for a number of important quantum channels.

\subsection{Main results} 

We develop a representation-theoretic method to compute the channel coherent information of a family of permutation-invariant codes given by convex mixtures of independent and identically distributed (i.i.d.) quantum states.
Our main technical result is \Cref{thm:coherent-info-iid-mixture}, which presents a formula for the channel coherent information of such states in terms of representation-theoretic data that can be efficiently evaluated.
We also provide two alternative simplifications of this formula for convex mixtures of \emph{pure} i.i.d.~states in \Cref{thm: coherent-info-pure-iid,thm: coherent-info-pure-iid-purified}.

We then apply our optimization method to a variety of quantum channels that are known to exhibit superadditivity of coherent information
in \Cref{sec:results}.
We first focus on Pauli channels, in particular the 2-Pauli channel and the BB84 channel.
For both channels we find new codes that significantly increase the best known quantum capacity thresholds (\Cref{fig: 2paulithreshold,fig: bb84threshold}, respectively).
The optimal states achieving these thresholds resemble repetition codes, except the code states are non-orthogonal.
We carry out a representation-theoretic analysis of how the channel coherent information of these channels behaves as a function of the angle between the non-orthogonal code states, in order to explain what channels benefit from non-orthogonal repetition codes (\Cref{fig:2pauli-CI-by_partition,fig:bb84-CI-by-partition}, respectively).
We also perform a similar analysis for arbitrary Pauli channels (\Cref{fig: pauli-simplex-thresholds}), and investigate why non-orthogonal repetition codes do \emph{not} increase the depolarizing channel threshold over that of usual (orthogonal) repetition codes (\Cref{fig:depol-CI-by-partition}).
We then study the dephrasure channel, the generalized amplitude damping channel, and the damping-dephasing channel.
Our method yields improved achievable rates for the dephrasure and damping-dephasing channels in the mid-noise regime (\Cref{fig:dephrasure,fig: damp-dephasing,fig: damp-dephasing-higher-k}), and improved rates and thresholds for the generalized amplitude damping channel (\Cref{fig: gadc-thresholds}).
MATLAB code used to obtain these numerical results is publicly available at \cite{perm-inv-codes-github}.

\subsection{Structure of the paper}
Our paper is structured as follows.
In \Cref{sec:preliminaries} we fix notation and basic definitions, and give a brief review of the quantum capacity of a quantum channel, solvable channel models, and superadditivity of coherent information.
In \Cref{sec:rep-theory} we review some basics of representation theory, Schur-Weyl duality, and explicit constructions of the irreducible representations of the unitary group.
We use these results in \Cref{sec:coherent-information-symmetries} to derive our main theoretical result, an efficient algorithm to compute the channel coherent information of certain permutation-invariant input codes.
In \Cref{sec:results} we apply this formula to study achievable rates of quantum information transmission and quantum capacity thresholds for a number of channels.
We conclude in \Cref{sec:conclusion} with a discussion of our main results and future directions of research.
In the appendices, we list optimal codes found in this paper (\Cref{app:optimal-perminv-codes}), give an explicit construction of the irreducible representations of $\GL(2)$ on symmetric subspaces (\Cref{app:gl2_irreps}), derive explicit formulas for the coherent information of weighted repetition codes for Pauli channels (\Cref{app:ci-pauli-rep}) and the damping-dephasing channel (\Cref{app:ci-damp-deph-rep}), and give details on our numerical studies (\Cref{app: numerics}).

\section{Preliminaries}
\label{sec:preliminaries}

\subsection{Notation and definitions}

In this paper we consider finite-dimensional Hilbert spaces that we denote by $\cH$, and we use labels $A$, $B$, etc.~to denote subsystems.
The identity operator on a Hilbert space $\cH_A$ is denoted by $\one_A$.
A quantum state or density matrix is a positive semidefinite operator with unit trace.
A quantum state $\psi$ of rank 1 is called pure. 
It may be identified with a normalized vector $|\psi\rangle\in\cH$ satisfying $\psi = \ketbra{\psi}$.
In this situation we use the symbols $\psi$ and $\ketbra{\psi}$ interchangeably.

A quantum channel $\mathcal{N}\colon \mathcal{B}(\mathcal{H}_A) \rightarrow \mathcal{B}(\mathcal{H}_B)$ (or $\cN\colon A\to B$ for short) is a linear completely positive trace-preserving map from the algebra $\mathcal{B}(\mathcal{H}_A)$ of linear operators on $\cH_A$ to those on $\cH_B$.
The identity channel on a system $A$ is denoted by $\id_A$. We use $\log$ to denote the logarithm to the base $2.$
For every quantum channel $\cN\colon A\to B$ there exists an environment space $\cH_E$ and an isometry $V\colon \cH_A\to \cH_B\otimes \cH_E$ such that $\cN(\rho)=\tr_E(V\rho V^\dagger)$ \cite{wilde2013quantum}.
Tracing out the output system $B$ instead of $E$ yields a \emph{complementary channel} $\cN_c$ modeling the loss of information to the environment, defined as $\cN_c(\rho) = \tr_B(V\rho V^\dagger)$.
Note that, given a channel isometry $V$ for a quantum channel $\cN$, every isometry of the form $(\one_B\otimes U_E)V$ for some unitary $U_E$ on $\cH_E$ is also a channel isometry for $\cN$, and hence complementary channels are not unique.
However, all entropic quantities considered in this paper are invariant under this unitary degree of freedom, and hence the particular choice of $\cN_c$ does not matter.

We denote by $\kS_n$ the symmetric group of $\lbrace 1,\dots,n\rbrace$, and by $\GL(d)$ the group of invertible operators acting on $\mathbb{C}^d$.
We write $\Lambda(n,d)$ for the set of partitions of $n$ into at most $d$ parts, or equivalently, the set of Young diagrams with $n$ boxes and at most $d$ rows.

\subsection{Quantum capacity}
\label{sec:quantum-capacity}

\subsubsection{Definition and coding theorem}

The quantum capacity $Q(\mathcal{N})$ of a quantum channel $\mathcal{N}\colon \mathcal{B}(\mathcal{H}_A) \rightarrow \mathcal{B}(\mathcal{H}_B)$ quantifies the maximum rate at which quantum information can be reliably transmitted through the channel in the asymptotic limit of many channel uses. It can be expressed as a regularized entropic quantity \cite{schumacher1996entanglement,schumacher1996quantum,barnum1998information,barnum2000capacities,lloyd1997capacity,shor2002quantum,devetak2005private}:
\begin{align}
Q(\mathcal{N}) = \lim_{n \rightarrow \infty} \frac{1}{n} Q^{(1)}(\mathcal{N}^{\otimes n}),
\label{eq:quantum-capacity}
\end{align}
where $Q^{(1)}(\mathcal{N}) = \max_{\rho} I_c(\mathcal{N}, \rho)$ is the \emph{channel coherent information}, with the coherent information $I_c(\mathcal{N}, \rho)$ defined as
\begin{align}
I_c(\mathcal{N}, \rho) = S(\mathcal{N}(\rho)) - S((\mathcal{N} \otimes \mathrm{id}_R)(\ketbra{\psi})).\label{eq:coherent-info-purification}
\end{align}
Here, $\rho$ is the input to the channel, $\ket{\psi}$ is a purification of $\rho$ to a reference system $R$, and $S(\rho)=-\tr\rho\log\rho$ denotes the von Neumann entropy. The coherent information $I_c(\cN,\rho)$ can be equivalently expressed in terms of a complementary channel $\mathcal{N}_c$ to the environment as
\begin{align}
I_c(\mathcal{N}, \rho) = S(\mathcal{N}(\rho)) - S(\mathcal{N}_c(\rho)).
\label{eq:coherent-info-complementary}
\end{align}
Because of the unitary invariance of von Neumann entropy the coherent information $I_c(\cN,\rho)$ does not depend on the particular choice of complementary channel.

\subsubsection{Degradable and antidegradable channels}
\label{sec:degradable-antidegradable}

The formula \eqref{eq:quantum-capacity} for the quantum capacity of a quantum channel $\cN$ involves a regularization over $n$, the number of channel copies.
This regularization makes the quantum capacity intractable to compute in most cases (see \Cref{sec:superadditivity} below).
However, there are a few classes of channels for which the quantum capacity can be determined exactly.
Important examples of such solvable channel models are degradable and antidegradable channels, which we now define.

Recall that for a given channel $\cN\colon A\to B$ one can choose an isometry $V\colon \cH_A\to\cH_B\otimes \cH_E$ such that $\cN(X) = \tr_E V X V^\dagger$.
A complementary channel $\cN_c\colon A\to E$ is then defined via $\cN_c(X)=\tr_BV X V^\dagger$.
A channel $\cN$ is called \emph{degradable}, if there exists another quantum channel $\cD\colon B\to E$ satisfying $\cD\circ\cN = \cN_c$.
Intuitively, the loss of information to the environment can be simulated locally by the receiver, a concept originally considered in classical network information theory.
The ability to locally degrade the channel output to that of the environment has a profound consequence on a channel's ability to transmit quantum information: The channel coherent information becomes additive, $Q^{(1)}(\cN^{\otimes n}) = n Q^{(1)}(\cN)$ for all $n\in\mathbb{N}$, and hence the regularization in \eqref{eq:quantum-capacity} vanishes and we obtain a \emph{single-letter} formula for the quantum capacity, $Q(\cN) = Q^{(1)}(\cN)$ \cite{devetak2005simultaneous}.

The dual concept is that of an \emph{antidegradable} channel $\cN\colon A\to B$ (with complementary channel $\cN_c\colon A\to E$), for which there exists a quantum channel $\cA\colon E\to B$ satisfying $\cN = \cA\circ\cN_c$.
A data-processing argument shows that the channel coherent information of an antidegradable channel vanishes, $Q^{(1)}(\cN^{\otimes n})=0$ for all $n\in\mathbb{N}$, and hence we also have $Q(\cN)=0$ for such channels.
The latter identity can also be understood as an instance of the no-cloning theorem \cite{park1970transition,wootters1982nocloning,dieks1982epr}.

There are many examples of degradable and antidegradable channels.
For example, dephasing channels $\rho\mapsto (1-p)\rho + p Z\rho Z$ are degradable for all $p\in[0,1]$, and amplitude damping channels $\cA_{\gamma}$ (equal to $\cA_{\gamma,0}$ as defined in \eqref{eq:gadc}) are degradable (resp.~antidegradable) for $\gamma\in[0,1/2]$ (resp.~$\gamma\in[1/2,1]$).
From an information-theoretic point of view, the capabilities of (anti-)degradable channels to transmit quantum information are completely understood due to their additivity properties.\footnote{We note here that there are channel classes that are neither degradable nor antidegradable, yet their quantum capacity can be determined exactly. 
Examples include `less noisy' channels \cite{watanabe2012capable}, entanglement-binding channels \cite{horodecki2000binding}, direct sums of partial traces \cite{gao2018tro}, the `platypus' channels \cite{leditzky2023platypus} (up to the validity of the `spin alignment conjecture' \cite{leditzky2023platypus,alhejji2024spin}), multi-level generalizations of amplitude damping channels \cite{chessa2021mad,chessa2023resonant}, partially coherent direct sum channels \cite{chessa2021partially}, and the channel considered in \cite{smith2025additivity}.
}
The situation changes dramatically when considering compositions of such channels, with recent examples including the dephrasure channel (see \Cref{sec:dephrasure}) \cite{leditzky2018dephrasure} or the damping-dephasing channel (see \Cref{sec:damping-dephasing}) \cite{siddhu2024dampingdephasing}.
These channels exhibit superadditivity of coherent information, which we discuss in the next section.

\subsubsection{Superadditivity}
\label{sec:superadditivity}

The channel coherent information $Q^{(1)}$ exhibits the phenomenon of superadditivity, whereby entangled inputs across multiple channel uses can achieve higher communication rates than product inputs.
There are at least two types of superadditivity of coherent information:
In the first, two \emph{different} channels jointly used together yield a larger coherent information than the sum of the individual quantities. That is, there are channels $\cN_1$ and $\cN_2$ such that $Q^{(1)}(\mathcal{N}_1 \otimes \mathcal{N}_2) > Q^{(1)}(\mathcal{N}_1) + Q^{(1)}(\mathcal{N}_2)$, and this superadditivity may even hold for the quantum capacity $Q$ itself \cite{smith2008zero,smith2011quantum,brandao2012noise,lim2018activation,lim2019activation,leditzky2023generic,wu2025superadditivity,wu2025small}.
The first example of this type of superadditivity is known as \emph{superactivation} of quantum capacity, whereby two channels that are useless on their own (that is, each having zero quantum capacity) can be used together to faithfully transmit quantum information \cite{smith2008zero}.

The other type of superadditivity, and the one relevant for our work, involves multiple tensor copies of the \emph{same} channel: 
there are channels $\cN$ and natural numbers $n$ such that \cite{shor1996syndrome,divincenzo1998capacity,smith2006degenerate,fern2008lower,cubitt2015unbounded,jackson2017degenerate,leditzky2018dephrasure,noh2020enhanced,bausch2020neural,bausch2021errorthresholds,siddhu2021positivity,siddhu2021entropic,filippov2021capacity,filippov2022multipartite,sidhardh2022exploring,singh2023simultaneous,siddhu2024dampingdephasing,wu2025small}
\begin{align}
    Q^{(1)}(\cN^{\otimes n}) > n Q^{(1)}(\cN).
\end{align}
Early examples of this type of superadditivity were observed in \cite{shor1996syndrome,divincenzo1998capacity}, showing that for the qubit depolarizing channel $\cD_q(X) = (1-q)X+q \tr(X) \one/2$ there exist $n$ and values of $q$ such that
\begin{align} 
    Q(\cD_q)\geq \frac{1}{n}Q^{(1)}(\cD_q^{\otimes n})> Q^{(1)}(\cD_q),
    \label{eq:depolarizing-superadd}
\end{align}
and this may occur even when $Q^{(1)}(\cD_q)=0$.
Hence, superadditivity can significantly increase achievable rates for faithful quantum information transmission.
However, it also turns the quantum capacity formula \eqref{eq:quantum-capacity} into an unbounded optimization problem that is generally intractable to solve except for solvable channel models like those mentioned in \Cref{sec:degradable-antidegradable}.
For example, determining the exact quantum capacity of the qubit depolarizing channel for $q\in(0,1/3)$ remains a major open problem in quantum information theory.
Remarkably, superadditivity of coherent information has recently been observed experimentally for the dephrasure channel \cite{leditzky2018dephrasure,yu2020experimental}.

The superadditivity in \eqref{eq:depolarizing-superadd} for the coherent information of the qubit depolarizing channel (as well as many other channels) is achieved by a simple repetition code $|\phi_n\rangle_{RA^n} = (|0\rangle_R\otimes |0\rangle_A^{\otimes n}+|1\rangle_R\otimes |1\rangle_A^{\otimes n})/\sqrt{2}$, where $R$ denotes a reference system and $A^n$ are the channel input qubits.
This code corresponds to the mixed input state 
\begin{align}
    \phi_n = \frac{1}{2}\left( \ketbra{0}^{\otimes n} + \ketbra{1}^{\otimes n}\right).
\end{align}
This quantum state is \emph{permutation-invariant}, and an i.i.d.~quantum channel $\cN^{\otimes n}$ preserves this symmetry.
This observation is the starting point of our analysis of coherent information using representation-theoretic methods, which we develop in \Cref{sec:coherent-information-symmetries}.
To prepare this discussion, we first review the necessary background in representation theory  in the following section.

\section{Representation theory of the general linear group}
\label{sec:rep-theory}

\subsection{Basics of representation theory}
\label{sec:rep-theory-basics}

A representation $(\varphi,V)$ of a group $G$ consists of a vector space $V$ (also called representation space) and a group homomorphism $\varphi\colon G \to \GL(V)$, where $\GL(V)$ denotes the group of invertible operators acting on $V$.
A subspace $W\leq V$ is called \emph{invariant} if $\varphi(g)w\in W$ for all $g\in G$ and $w\in W$.
A representation $(\varphi,V)$ is called \emph{irreducible} if $\lbrace 0_V\rbrace$ (with $0_V$ denoting the zero vector in $V$) and $V$ are the only invariant subspaces.
We often abbreviate ``irreducible representation'' as ``irrep''.
Two representations $(\varphi,V)$ and $(\psi,W)$ of a group $G$ are called \emph{equivalent} if there exists an isomorphism $\alpha\colon V\to W$ such that $\psi(g)\circ \alpha = \alpha\circ\varphi(g)$ for all $g\in G$.

In the following we restrict our discussion to representations over $\mathbb{C}$.
A representation $(\varphi,V)$ is called completely reducible if it can be written as a direct sum of irreducible representations.
Every finite-dimensional representation of a finite or compact group (over $\mathbb{C}$) can be chosen unitary (that is, $\varphi(g)$ is unitary for all $g\in G$), and is then also completely reducible.
In particular, any finite-dimensional representation $(\varphi,V)$ of a finite or compact group $G$ can be decomposed as
\begin{align}
    V \cong \bigoplus\nolimits_{\lambda} V_\lambda \otimes M_\lambda, \label{eq:representation-decomposition}
\end{align}
where the direct sum runs over finitely many pairwise inequivalent irreps $(\varphi_\lambda,V_\lambda)$ of $G$ indexed by $\lambda$, and $M_\lambda$ is the \emph{multiplicity space} of the irrep $(\varphi_\lambda,V_\lambda)$.
That is, the dimension of $M_\lambda$ is the number of times the irrep $V_\lambda$ appears in the decomposition \eqref{eq:representation-decomposition}.
This decomposition is known as \emph{isotypical decomposition} of a representation.
The summands $W_\lambda = V_\lambda\otimes M_\lambda\cong V_\lambda^{\oplus \dim M_\lambda}$ collecting all irreps $V_\lambda$ of the same type are unique, whereas decomposing $W_\lambda$ into $\dim M_\lambda$ copies of $V_\lambda$ essentially corresponds to a suitable basis choice \cite[Chs.~2.6, 2.7]{serre1977linear}.

The following standard result of representation theory is well-known and easy to prove \cite{fulton2013representation,etingof2011introduction}, but immensely useful in applications:
\begin{lemma}[Schur's Lemma]
\label{lem:schur}
    Let $(\varphi,V)$ and $(\psi,W)$ be irreducible representations of a group $G$ over $\mathbb{C}$, and let $f\colon V\to W$ be a $G$-linear map, that is, a linear map satisfying $f\circ \varphi(g) = \psi(g)\circ f$ for all $g\in G$.
    Then the following holds:
    \begin{enumerate}[{\normalfont (i)}]
        \item Either $f$ is an isomorphism (and $(\varphi,V)$ and $(\psi,W)$ are equivalent), or $f=0$.
        \item If $V=W$, then $f = \lambda \one$ for some $\lambda\in\mathbb{C}$.
    \end{enumerate}
\end{lemma}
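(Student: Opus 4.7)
The plan is to exploit the standard observation that the kernel and image of a $G$-linear map are automatically $G$-invariant subspaces, and then leverage the hypothesis that $(\varphi,V)$ and $(\psi,W)$ are irreducible to conclude that these subspaces must be trivial or the whole space. The only ingredient beyond this that is needed for part~(ii) is the fact that we work over $\mathbb{C}$, which is algebraically closed, so that every endomorphism has an eigenvalue.

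For part~(i), I would first verify the two invariance claims. Given $v \in \ker f$, for any $g \in G$ we have $f(\varphi(g) v) = \psi(g) f(v) = 0$, so $\varphi(g) v \in \ker f$, showing $\ker f \leq V$ is invariant. Similarly, for $w = f(v) \in \im f$ and any $g \in G$, one has $\psi(g) w = \psi(g) f(v) = f(\varphi(g) v) \in \im f$, so $\im f \leq W$ is invariant. Irreducibility of $(\varphi,V)$ forces $\ker f \in \{\{0_V\},V\}$, and irreducibility of $(\psi,W)$ forces $\im f \in \{\{0_W\},W\}$. If $f \neq 0$, then $\ker f \neq V$ and $\im f \neq \{0_W\}$, so necessarily $\ker f = \{0_V\}$ and $\im f = W$, i.e.\ $f$ is bijective. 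Being bijective and $G$-linear, $f$ is an isomorphism of representations, so $(\varphi,V)$ and $(\psi,W)$ are equivalent.

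For part~(ii), with $V = W$ and $\varphi = \psi$, the map $f$ is an endomorphism of a finite-dimensional complex vector space, hence by algebraic closure of $\mathbb{C}$ it has some eigenvalue $\lambda \in \mathbb{C}$. Consider the map $f - \lambda \one \colon V \to V$. This map is again $G$-linear because $\one$ commutes with all $\varphi(g)$, and its kernel is nonzero since it contains any $\lambda$-eigenvector of $f$. Applying part~(i) to $f - \lambda \one$ (viewed as a map between the irreducible representations $V$ and $V$), the only options are that it is an isomorphism or zero; since its kernel is nonzero, it cannot be an isomorphism, and therefore $f - \lambda \one = 0$, i.e.\ $f = \lambda \one$.

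There is no real obstacle in this proof; the only subtle point is making sure that part~(ii) genuinely uses the complex ground field, as it is the existence of an eigenvalue that drives the second statement. Over a non-algebraically-closed field the conclusion of~(ii) can fail, so this hypothesis must be highlighted explicitly in the writeup.
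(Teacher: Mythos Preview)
Your proof is correct and is exactly the standard argument; the paper itself does not give a proof but simply cites references, calling the result ``well-known and easy to prove.'' The only minor point worth making explicit in a writeup is that part~(ii) tacitly assumes $V$ is finite-dimensional (so that an eigenvalue exists), which is consistent with the paper's standing finite-dimensionality assumption.
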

Schur's Lemma has an important consequence for $G$-invariant operators that is crucial for our results: Let $(\varphi,V)$ be a representation of a finite or compact group $G$, and let $X$ be an operator on $V$ satisfying $\varphi(g)X=X\varphi(g)$ for all $g\in G$.
Then, with respect to the decomposition \eqref{eq:representation-decomposition}, the operator $X$ has the form 
\begin{align}
    X \cong \bigoplus\nolimits_\lambda \one_{V_\lambda}\otimes X_\lambda,
    \label{eq:G-invariant-operators}
\end{align}
where each $X_\lambda$ acts on the multiplicity space $M_\lambda$.

\subsection{Schur-Weyl duality}
\label{sec:schur-weyl}

We consider the following representations of the symmetric group $\kS_n$ and the general linear group $\GL(d)$ on the tensor space $(\mathbb{C}^d)^{\otimes n}$. 
The group $\GL(d)$ acts diagonally by 
\begin{align}
	g \cdot (|v_1\rangle \otimes \cdots \otimes |v_n\rangle) &= g|v_1\rangle \otimes \cdots \otimes g |v_n\rangle \quad\text{for $g \in \GL(d)$,}
	\intertext{and $\kS_n$ acts by permuting tensor factors,}
	\sigma\cdot (|v_1\rangle \otimes \cdots \otimes |v_n\rangle) &= |v_{\sigma^{-1}(1)}\rangle \otimes \cdots \otimes |v_{\sigma^{-1}(n)}\rangle  \quad\text{for $\sigma \in \kS_n.$}
\end{align}  

These two actions commute, and furthermore span each other's commutant in $\cB((\mathbb{C}^d)^{\otimes n})$, which is known as Schur-Weyl duality \cite{fulton2013representation,etingof2011introduction}.
This duality provides a useful decomposition of the representation space $(\mathbb{C}^d)^{\otimes n}$ under the action of $\GL(d)\times \kS_n$:
\begin{align}
(\mathbb{C}^d)^{\otimes n} \cong \bigoplus_{\lambda \vdash_d n} V_\lambda^d \otimes S_\lambda,
\label{eq:schur-weyl-duality}
\end{align}
where $\lambda \vdash_d n$ denotes a partition of $n$ with at most $d$ parts, $V_\lambda^d$ is the irreducible $\GL(d)$-representation indexed by $\lambda$, and $S_\lambda$ is the irreducible $\kS_n$-representation indexed by $\lambda$.
Note that \eqref{eq:schur-weyl-duality} is an instance of the decomposition of a representation into irreps discussed in \Cref{sec:rep-theory-basics}, applied to either the $\GL(d)$- or the $\kS_n$-representation.
That is, the $\kS_n$-irreps $S_\lambda$ are the multiplicity spaces for the $\GL(d)$-irreps $V_\lambda^d$, and vice versa.

The dimension of $S_\lambda$ is given by the hook length formula
\begin{align}
\dim S_\lambda = \frac{n!}{\prod_{\square \in \lambda} h_\square},
\end{align}
where $h_\square$ is the hook length of the box $\square$ in the Young diagram $\lambda$, defined as the total number of boxes in the diagram below and to the right of that box, including the box itself.
The dimension of $V_\lambda^d$ is given by Weyl's dimension formula
\begin{align}
    \dim V_\lambda^d = \prod_{1\leq i<j\leq d} \frac{\lambda_i-\lambda_j+j-i}{j-i} \leq (n+1)^{d(d-1)/2},
    \label{eq:weyl-dimension-formula}
\end{align}
where the polynomial upper bound follows from a simple counting argument \cite{christandl2006phd}.

We refer to \cite{fulton2013representation,etingof2011introduction} for further details of this decomposition, and to \cite{harrow2005phd,christandl2006phd} for a treatment in the context of quantum information theory.
Here, we focus on explicit constructions of the irreducible representations $V_\lambda^d$ appearing in \eqref{eq:schur-weyl-duality}, for which we first introduce the concepts of semistandard Young tableaux and Gelfand-Tsetlin patterns.

\subsection{Semistandard Young tableaux and Gelfand-Tsetlin patterns}
\label{sec:ssyt-gt}

A \textit{semistandard Young tableau} (SSYT) of shape $\lambda$ with entries in $\{1, 2, \ldots, d\}$ is a filling of the Young diagram $\lambda$ such that entries are weakly increasing along rows from left to right, and strictly increasing along columns from top to bottom. 
We will sometimes write $\ssyt_d(\lambda)$ for the set of all SSYT with entries in $\{1, 2, \ldots, d\}$ of shape $\lambda$.
The number of SSYT of shape $\lambda$ with entries in $\{1, \ldots, d\}$ equals $\dim V_\lambda^d$.

A \textit{Gelfand-Tsetlin pattern} (GT pattern) for $\mathrm{GL}(d)$ is a triangular array of integers
\begin{align}
	\begin{matrix}
		\lambda_{d,1} & & \lambda_{d,2} & \cdots & \lambda_{d,d-1} && \lambda_{d,d} \\
		&  \lambda_{d-1,1} & & \cdots &&\lambda_{d-1,d-1} & \\
		&&  \ddots & & \iddots & \\
		& &  &\lambda_{1,1} &&&
	\end{matrix}
	\label{eq:gelfand-tsetlin}
\end{align}
satisfying the interlacing conditions
\begin{align}
\lambda_{k,i} \geq \lambda_{k-1,i} \geq \lambda_{k,i+1}
\end{align}
for all valid indices. 
The top row $(\lambda_{d,1}, \ldots, \lambda_{d,d})$ with $\lambda_{d,1} \geq \cdots \geq \lambda_{d,d} \geq 0$ determines a partition $\lambda=(\lambda_{d,1},\dots,\lambda_{d,d})$ with $d$ parts. 
There is a bijection between SSYTs of shape $\lambda$ with entries in $\{1, \ldots, d\}$ and GT patterns with top row $\lambda$ (see, e.g., \cite[Ch.~7]{stanley2023enumerative}).

\subsection{Constructing \texorpdfstring{$\GL(d)$}{GL(d)}-irreps via Lie algebra representations}
\label{sec:molev}

Let $\lambda$ be a Young diagram of height less than or equal to $d$. 
The irreducible representation $(q_\lambda,V_\lambda^d)$ of $\GL(d)$ associated with $\lambda$ is a polynomial representation, i.e., the matrix elements of the representation matrix $q_\lambda(g)$ are polynomial functions of the matrix elements of $g\in\GL(d)$.
It is conveniently defined in terms of irreps of the Lie algebra $\gl(d)$ consisting of all complex $(d\times d)$-matrices with commutator $[X,Y]=XY-YX$ as the Lie bracket.
Here, a representation of a Lie algebra $\mathfrak{g}$ on a vector space $V$ is a map $\varphi\colon \mathfrak{g}\to \End(V)$ such that $\varphi([g,h]_\mathfrak{g})=[\varphi(g),\varphi(h)]_V$, where $[\cdot,\cdot]_\mathfrak{g}$ is the Lie bracket on $\mathfrak{g}$, and $[\cdot,\cdot]_V$ is the usual matrix commutator on $V$.
A $\gl(d)$-irrep $(\phi_\lambda,L_\lambda^d)$, whose construction we describe below, defines the $\GL(d)$-irrep $V_\lambda^d$ via the exponential map:
If $A=\exp(X)\in\GL(d)$ for $X\in\gl(d)$, then $A\mapsto \exp(\phi_\lambda(X))$ defines an irreducible representation of $\GL(d)$ on $V_\lambda^d$.

We now give a brief overview of how to construct the irreps $(\phi_\lambda,L_\lambda^d)$ of $\gl(d)$, referring to \cite{humphreys2012introduction,molev2006gelfand} for details.
Let $E_{ij}$ for $i,j = 1, \ldots, d$ be the matrix with a $1$ in the $(i,j)$-th position and zeros elsewhere. 
The $E_{ij}$ form the standard basis of $\gl(d)$, and the commutation relations are given by $[E_{ij}, E_{kl}] = \delta_{jk}E_{il} - \delta_{il}E_{kj}$.
Because of these relations it suffices to restrict our attention to the subset $\lbrace E_{k,k}\rbrace\cup \lbrace E_{k,k+1}\rbrace \cup \lbrace E_{k+1,k}\rbrace$.

Finite-dimensional irreducible representations of $\gl(d)$ are highest weight representations. 
A highest weight $\lambda$ is a $d$-tuple of complex numbers $(\lambda_1, \ldots, \lambda_d)$ such that the differences $\lambda_i - \lambda_{i+1}$ are non-negative integers for $i=1, \ldots, d-1$. 
For polynomial representations of $\GL(d)$, these $\lambda_i$ are integers forming a partition $\lambda_1 \ge \lambda_2 \ge \cdots \ge \lambda_d \ge 0$.
Since all $\GL(d)$-irreps appearing in the Schur-Weyl decomposition \eqref{eq:schur-weyl-duality} of $(\mathbb{C}^d)^{\otimes n}$ are polynomial, we will also restrict our discussion to polynomial $\gl(d)$-irreps.
Each such irrep, denoted $L_\lambda^d$, contains a unique (up to a scalar multiple) non-zero vector $\xi$, called the highest weight vector. 
This vector satisfies $E_{ii} \xi = \lambda_i \xi$ for all $i=1, \ldots, d$, and $E_{ij} \xi = 0$ for all $1 \le i < j \le d$.

The following discussion is taken from \cite{molev2006gelfand}.
There exists an orthogonal basis of $L_\lambda^d$ parametrized by Gelfand-Tsetlin (GT) patterns (as defined in \Cref{sec:ssyt-gt}) corresponding to the highest weight $\lambda=(\lambda_1, \ldots, \lambda_d)$. 
For a given GT pattern $\Lambda=(\lambda_{k,i})$ we denote these basis vectors by $\xi_{\Lambda}$.
For convenience, we also define $l_{k,i} = \lambda_{k,i} - i + 1$.
We then use the following result:

\begin{theorem}[{\cite[Thm.~2.3]{molev2006gelfand}}]
\label{thm:molev}
There exists a basis $\{\xi_{\Lambda}\}$ of $L_\lambda^d$, parametrized by GT patterns $\Lambda$ associated with the highest weight $\lambda$, such that the action of the generators of $\gl(d)$ is given by the following formulas:
\begin{itemize} 
\item 
For $k=1, \ldots, d$,
\begin{align}
E_{k,k} \xi_{\Lambda} = \left( \sum_{i=1}^{k} \lambda_{k,i} - \sum_{i=1}^{k-1} \lambda_{k-1,i} \right) \xi_{\Lambda}
\end{align}
where the second sum is zero if $k=1$.

\item 
For $k=1, \ldots, d-1$,
\begin{align}
E_{k,k+1} \xi_{\Lambda} &= -\sum_{i=1}^{k} \frac{(l_{k,i} - l_{k+1,1})\cdots(l_{k,i} - l_{k+1,k+1})}{(l_{k,i} - l_{k,1})\cdots\wedge\cdots(l_{k,i} - l_{k,k})} \xi_{\Lambda+\delta_{k,i}}\\
E_{k+1,k} \xi_{\Lambda} &= \sum_{i=1}^{k} \frac{(l_{k,i} - l_{k-1,1})\cdots(l_{k,i} - l_{k-1,k-1})}{(l_{k,i} - l_{k,1})\cdots\wedge\cdots(l_{k,i} - l_{k,k})} \xi_{\Lambda-\delta_{k,i}}.
\end{align}
\end{itemize}
The arrays $\Lambda \pm \delta_{k,s}$ are obtained from $\Lambda$ by replacing $\lambda_{k,s}$ by $\lambda_{k,s} \pm 1$. It is supposed that $\xi_{\Lambda'} = 0$ if the array $\Lambda'$ is not a valid GT pattern. The symbol $ \wedge$ in the product in the denominator indicates a skipped zero factor.
\end{theorem}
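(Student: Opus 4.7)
The plan is to prove this classical Gelfand--Tsetlin theorem by constructing the basis $\{\xi_\Lambda\}$ from the canonical chain of subalgebras $\gl(1)\subset\gl(2)\subset\cdots\subset\gl(d)$, and then reading off the action of each generator using equivariance and a reduction to a lower-rank calculation. The first step is to establish the branching rule: the restriction of the irrep $L_\lambda^d$ to $\gl(d-1)$ decomposes multiplicity-freely as $L_\lambda^d \cong \bigoplus_{\mu} L_\mu^{d-1}$, where the sum runs over highest weights $\mu=(\mu_1,\dots,\mu_{d-1})$ satisfying the interlacing $\lambda_i\geq\mu_i\geq\lambda_{i+1}$. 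This can be verified using Weyl's character formula or, equivalently, Pieri-type rules for Schur functions. Iterating this branching from $\gl(d)$ down to $\gl(1)$, the multiplicity-freeness at each step means a vector $\xi_\Lambda$ is determined (up to a scalar) by specifying the sequence of sub-irreps it belongs to, and this sequence is exactly encoded by a GT pattern $\Lambda$ with top row $\lambda$ and interlacing condition at each level.

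The diagonal formula for $E_{k,k}$ then follows immediately from a weight argument. By construction $\xi_\Lambda$ lies in the $\gl(k)$-irrep labeled by the $k$-th row of $\Lambda$, on which the central element $\sum_{i=1}^{k}E_{i,i}$ acts as the scalar $\sum_{i=1}^k \lambda_{k,i}$ (the sum of the highest weight components); analogously $\sum_{i=1}^{k-1}E_{i,i}$ acts as $\sum_{i=1}^{k-1}\lambda_{k-1,i}$ on the enclosing $\gl(k-1)$-irrep. Subtracting gives the stated eigenvalue for $E_{k,k}$.

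For the off-diagonal generators $E_{k,k+1}$ and $E_{k+1,k}$, I would first use equivariance to constrain which basis vectors can appear in the image. Since $E_{k,k+1}\in\gl(k+1)$, it preserves the labels $\lambda_{j,i}$ for $j\geq k+1$; and since it commutes with $\gl(k-1)$, the lower rows $\lambda_{j,i}$ for $j\leq k-1$ are also unchanged. By the weight shift $[E_{j,j},E_{k,k+1}]=(\delta_{j,k}-\delta_{j,k+1})E_{k,k+1}$ combined with the diagonal formula, exactly one entry of the $k$-th row must shift by $+1$, so $E_{k,k+1}\xi_\Lambda=\sum_{i=1}^k c_{k,i}(\Lambda)\,\xi_{\Lambda+\delta_{k,i}}$ with some unknown scalars $c_{k,i}(\Lambda)$, and analogously for $E_{k+1,k}$. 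This reduces the problem to computing these coefficients.

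The main obstacle is deriving the explicit rational-function form of $c_{k,i}(\Lambda)$ in the shifted labels $l_{k,i}=\lambda_{k,i}-i+1$. My approach would be to construct the $\xi_\Lambda$ explicitly as products of Nagel--Moshinsky lowering operators acting on the highest weight vector $\xi$ of $L_\lambda^d$, which gives a closed-form polynomial expression inside the universal enveloping algebra. Applying $E_{k,k+1}$ and repeatedly commuting it past these lowering operators using $[E_{ij},E_{kl}]=\delta_{jk}E_{il}-\delta_{il}E_{kj}$ produces a sum that, after careful bookkeeping, can be shown to vanish except for terms proportional to $\xi_{\Lambda+\delta_{k,i}}$, with the coefficient equal to the claimed rational function. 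The delicate part is verifying the pole/zero structure: the numerator vanishes precisely when the target pattern violates the interlacing condition (so $\xi_{\Lambda+\delta_{k,s}}=0$ by convention), and the denominator's skipped factor is what keeps the formula well-defined on valid GT patterns. As a sanity check, the $d=2$ case reduces to the standard $\mathfrak{sl}_2$-triple formulas on finite-dimensional irreps, and the general formula must restrict to this one on each $\gl(2)$-subalgebra $\langle E_{k,k},E_{k,k+1},E_{k+1,k},E_{k+1,k+1}\rangle$, providing a useful cross-check.
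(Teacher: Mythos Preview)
The paper does not prove this theorem; it is stated as a quotation from the literature (specifically \cite[Thm.~2.3]{molev2006gelfand}) and used as a black box to construct $\GL(d)$-irreps numerically. There is therefore no ``paper's own proof'' to compare against.

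Your outline is a reasonable sketch of the classical approach to the Gelfand--Tsetlin basis: the multiplicity-free branching $\gl(d)\downarrow\gl(d-1)$ does give a canonical basis labeled by GT patterns, the diagonal action of $E_{k,k}$ follows exactly as you say from the central elements of the subalgebras, and the equivariance argument correctly pins down the shape of the off-diagonal action. The part you flag as the ``main obstacle''---deriving the explicit rational coefficients---is indeed where essentially all the work lies, and your description of commuting $E_{k,k+1}$ past lowering operators is more of a strategy than a proof; carrying it out rigorously is nontrivial and is precisely what the cited reference (or the original Gelfand--Tsetlin and Zhelobenko papers) accomplishes. For the purposes of this paper, though, a proof was never expected: the theorem is invoked only to justify an algorithm for computing $q_\lambda(\rho)$, so citing it is appropriate.
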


The vector space $L_\lambda^d$ has an inner product $\langle \cdot, \cdot \rangle$ with respect to which the basis vectors $\xi_\Lambda$ above are orthogonal.
Moreover, we have \cite[Prop.~2.4]{molev2006gelfand}
\begin{align}
\langle \xi_{\Lambda}, \xi_{\Lambda} \rangle = \prod_{k=2}^d \left( \prod_{1 \le s < j \le k} \frac{(l_{k,s} - l_{k-1,j})!}{(l_{k-1,s} - l_{k-1,j})!} \prod_{1 \le s \le j < k} \frac{(l_{k-1,s} - l_{k,j} -1)!}{(l_{k,s} - l_{k,j} -1)!} \right),
\label{eq:molev-inner-product}
\end{align}
where again $\Lambda = (\lambda_{k,i})$ and $l_{k,i} = \lambda_{k,i}-i+1$.

The relation \eqref{eq:molev-inner-product} can be used to construct an orthonormal basis for the vector space $L_\lambda^d$.
Expanding an arbitrary element $X\in\gl(d)$ in the standard basis $\lbrace E_{ij}\rbrace$, \Cref{thm:molev} then determines the matrix of the irrep $\varphi_\lambda(X)$, which is exponentiated to obtain the representation matrix of $\exp(X)\in\GL(d)$ acting on the irrep $V_\lambda^d$.

\subsection{Explicit irreps for \texorpdfstring{$\GL(2)$}{GL(2)}}
Constructing the irreps $(q_\lambda,V_\lambda^d)$ for $d=2$ is of special interest since we focus on analyzing qubit channels in this paper.
We record here an explicit construction of $\GL(2)$-irreps that is simpler than the approach in \Cref{sec:molev} and avoids matrix logarithms.
The latter will be important for our numerical studies (see the discussion in \Cref{sec:optimization}).

A standard result in the representation theory of the classical groups states that the irrep $V^2_\lambda$ of $\GL(2)$ for a partition $\lambda = (\lambda_1, \lambda_2)$ is isomorphic to 
\begin{align}
    V^2_\lambda \cong L_{\det}^{\otimes \lambda_2} \otimes \Sym^{m}(\mathbb{C}^2),
\end{align}
where $m = \lambda_1 - \lambda_2$ and $\Sym^m(\mathbb{C}^2)$ is the $m^{th}$ symmetric power of $\mathbb{C}^2$, and $L_{\det}$ is the one-dimensional determinant representation (see, e.g., \cite[Ch.~2.3]{goodman2009symmetry} for a proof for the special linear group $\SL(d)=\lbrace X\in\GL(d):\det(X)=1\rbrace$). 
For an invertible matrix $A = \begin{psmallmatrix} a & b \\ c & d \end{psmallmatrix}$, the corresponding matrix representation is
\begin{align}
    q_{\lambda}(A) = (\det A)^{\lambda_2} S_m(A),
\end{align}
where the entries of the $(m+1) \times (m+1)$ matrix $S_m(A)$ are given by
\begin{align} \label{eq:Sm_A_formula}
    [S_m(A)]_{k,j} = \sqrt{\frac{k!(m-k)!}{j!(m-j)!}} \sum_{p=\max(0, k-m+j)}^{\min(k,j)} \binom{j}{p}\binom{m-j}{k-p} a^p c^{j-p} b^{k-p} d^{m-j-k+p}.
\end{align}
The matrix indices $k$ and $j$ range from $m$ down to $0$. We give a proof of \eqref{eq:Sm_A_formula} in Appendix~\ref{app:gl2_irreps}.

\section{Computing coherent information using symmetries}
\label{sec:coherent-information-symmetries}

\subsection{High-level idea of our approach}
The quintessential example of superadditivity of coherent information is the repetition code \cite{shor1996syndrome} mentioned in \Cref{sec:superadditivity}:
\begin{align}
    \phi_n = \frac{1}{2}\left( \ketbra{0}^{\otimes n} + \ketbra{1}^{\otimes n}\right).
    \label{eq:repetition-code}
\end{align}
This code achieves high values of coherent information for noisy quantum channels such as the qubit depolarizing channel \cite{shor1996syndrome,divincenzo1998capacity}, other Pauli channels such as the BB84-channel (see \Cref{sec:BB84}) \cite{fern2008lower,bausch2021errorthresholds}, and the dephrasure channel (see \Cref{sec:dephrasure}) \cite{leditzky2018dephrasure}.

The starting point of our analysis is to notice the permutation symmetry of the repetition code \eqref{eq:repetition-code}: we have $P_\pi \phi_n P_\pi^\dagger = \phi_n$ for all $\pi\in\kS_n$, where $\pi\mapsto P_\pi$ denotes the representation of $\kS_n$ on $(\mathbb{C}^d)^{\otimes n}$ permuting tensor factors introduced in \Cref{sec:schur-weyl}.
Applying the i.i.d.~channel $\cN^{\otimes n}$ to this state gives an output state
\begin{align}
    \sigma_n = \cN^{\otimes n}(\phi_n) = \frac{1}{2}\left( \cN(\ketbra{0})^{\otimes n} + \cN(\ketbra{1})^{\otimes n}\right).
    \label{eq:output-repetition-code}
\end{align}
Evidently, the permutation invariance at the input of $\cN^{\otimes n}$ is preserved at the output, and the same principle applies more generally to any permutation-invariant state subjected to an i.i.d.~channel $\cN^{\otimes n}$.
We focus in particular on a subset of permutation-invariant states consisting of convex sums of i.i.d.~states: $\sum_{i=1}^k x_i\, \rho_i^{\otimes n},$ where $(x_i)_{i=1}^k$ is a probability distribution and each $\rho_i$ is a quantum state.
We will discuss our plans for future work to handle more general permutation-invariant states in \Cref{sec:discussion}.

The quantum states in \Cref{eq:repetition-code,eq:output-repetition-code} are symmetric with respect to the symmetry group $\kS_n$, and the representation theory of the symmetric group significantly reduces the number of degrees of freedom.
More precisely, following the discussions in \Cref{sec:rep-theory-basics,sec:schur-weyl}, the $\kS_n$-invariance implies that these states can be written as
\begin{align}
    \bigoplus\nolimits_\lambda X_\lambda \otimes \one_{S_\lambda},
    \label{eq:symmetric-state}
\end{align}
where $X_\lambda$ are operators acting on the multiplicity space of the $\kS_n$-irrep $S_\lambda$, which by Schur-Weyl duality is equal to the irrep $V_\lambda^d$ of the general linear group $\GL(d)$ (here, $d=2$).
Moreover, by \eqref{eq:weyl-dimension-formula} the dimension of $V_\lambda^d$ is at most polynomial in $n$, which suggests that entropic quantities of states of the form \eqref{eq:symmetric-state} can be evaluated efficiently.
This observation was used in \cite{kern2008improved} to evaluate entropies of permutation-invariant states to obtain bounds on quantum key distribution.
In this paper, we show that this approach also gives an efficient algorithm for computing the channel coherent information of a channel, a significantly more complicated entropic quantity.
The following sections develop this approach in more detail.

\subsection{Coherent information of convex sums of i.i.d.~states}\label{subsec: ci-convex-sums}

Recall that Schur-Weyl duality gives us a decomposition of $(\bC^d)^{\otimes n}$ as
\begin{align}
	(\bC^d)^{\otimes n} \cong \bigoplus_{\lambda\in\Lambda(n,d)} V_\lambda^d \otimes S_\lambda,\label{eq:schur-weyl}
\end{align}
where $\lambda\in\Lambda(n,d)$ denotes a Young diagram or partition of $n$ with at most $d$ parts.
For each $\lambda\in\Lambda(n,d)$ the spaces $V_\lambda^d$ and $S_\lambda$ are irreducible representations of the general linear group $\GL(d)$ and the symmetric group $\kS_n$, respectively.
We have the following actions of group elements $U\in\GL(d)$ and $\pi\in\kS_n$ with respect to this decomposition:
\begin{align}
	U^{\otimes n} P_\pi = P_\pi U^{\otimes n} \cong \bigoplus_{\lambda\in\Lambda(n,d)} q_\lambda(U) \otimes p_\lambda(\pi),
	\label{eq:U-P-action}
\end{align}
where $q_\lambda$ and $p_\lambda$ denote the irreps of $\GL(d)$ on $V_\lambda^d$ and of $\kS_n$ on $S_\lambda$, respectively.
The operators $q_\lambda(U)$ can be constructed using representations of the Lie algebra $\gl(d)$ as outlined in \Cref{sec:molev}.
Furthermore, this construction can also be extended to arbitrary (not necessarily invertible) positive semidefinite operators $X$ and their action on $(\bC^d)^{\otimes n}$ via $X^{\otimes n}$ by a continuity argument.\footnote{To see this, note that the invertible matrices are dense in the set of all matrices and the representations $q_\lambda(\cdot)$ appearing in \eqref{eq:schur-weyl} are polynomial, so that $q_\lambda(X)$ can be constructed from $q_\lambda(X+\eps\one)$ by taking the limit $\eps\to 0$.}
This is relevant in our analysis of permutation-invariant codes, as we will often consider pure input states $\psi$ for which $\cN(\psi)$ or $\cN_c(\psi)$ might not be full-rank (see \Cref{sec:pure-iid-states}).

We now explain how to use the representation-theoretic methods developed above to compute the coherent information of a large family of permutation-invariant quantum codes.
To this end, let $\rho$ be an arbitrary density operator and assume for the time being that $\rho$ has full rank, i.e., $\rho\in\GL(d)$ (we will return to this assumption in \Cref{sec:optimization} below).
By the above discussion, the $n$-fold tensor product operator $\rho^{\otimes n}$ can be written as
\begin{align}
	\rho^{\otimes n} \cong \bigoplus_{\lambda\in\Lambda(n,d)} q_\lambda(\rho) \otimes \one_{S_\lambda}.\label{eq:iid-state-schur-weyl}
\end{align}

Consider then the following special class of permutation-invariant states consisting of states in the convex hull of i.i.d.~states:
\begin{align} 
\rho_{(n)} = \sum_{i=1}^k x_i \rho_i^{\otimes n},
\label{eq:convex-mixture-iid}
\end{align}
for some quantum states $\rho_i$ and a probability distribution $x=(x_i)_i$.
Let $\cN\colon A\to B$ be a quantum channel and set $\sigma_n = \cN^{\otimes n}(\rho_{(n)}) = \sum_{i=1}^k x_i \,\cN(\rho_i)^{\otimes n}$.
The crucial point for our analysis is that $\cN^{\otimes n}(\rho_{(n)})$ remains permutation-invariant, since the i.i.d.~structure of $\cN^{\otimes n}$ preserves the permutation invariance of the individual $\rho_i^{\otimes n}$, and thus also that of the full state $\rho_{(n)}$ by linearity.
Each $\cN(\rho_i)^{\otimes n}$ can be written as $\bigoplus_{\lambda\in\Lambda(n,d_B)} q_\lambda(\cN(\rho_i)) \otimes \one_{S_\lambda}$ according to \eqref{eq:iid-state-schur-weyl}, with $d_B=\dim\cH_B$.
Hence,
\begin{align}
	\sigma_n = \cN^{\otimes n}(\rho_{(n)}) &= \bigoplus_{\lambda\in\Lambda(n,d_B)} \left(\sumi_{i=1}^k x_i q_\lambda(\cN(\rho_i))\right) \otimes \one_{S_\lambda}\\
	&= \bigoplus_{\lambda\in\Lambda(n,d_B)} c_\lambda \,\overline{Q}_\lambda^\cN \otimes \tau_{S_\lambda}.\label{eq:perm-inv-compact}
\end{align}
In the second line, the density matrices $\overline{Q}_\lambda$ and $\tau_{S_\lambda}$ and the probability distribution $\mathbf{c}=(c_\lambda)_{\lambda\in\Lambda(n,d_B)}$ are defined as follows:
\begin{align}
    \bar{q}_\lambda &= \sum_{i=1}^k x_i q_\lambda(\cN(\rho_i)) &
    \overline{Q}_\lambda^\cN &= \frac{1}{\tr\bar{q}_\lambda} \bar{q}_\lambda \label{eq:Q-operator}\\
    \tau_{S_\lambda} &= \frac{1}{\dim S_\lambda} \one_{S_\lambda} & c_\lambda &= \dim S_\lambda \tr \bar{q}_\lambda.\label{eq:c-distribution}
\end{align}
Recall the direct-sum property of von Neumann entropy:
\begin{align} 
    S\left(\bigoplus\nolimits_i p_i \rho_i\right) = H(\mathbf{p}) + \sum\nolimits_i p_i H(\rho_i),
    \label{eq:von-neumann-direct-sum}
\end{align} 
where $\mathbf{p}=(p_i)_i$ is a probability distribution, $\rho_i$ are states, and $H(\mathbf{p}) = -\sum_i p_i\log p_i$ is the Shannon entropy of $\mathbf{p}$.
Using \eqref{eq:von-neumann-direct-sum} in \eqref{eq:perm-inv-compact}, the von Neumann entropy of $\sigma_n$ is hence equal to
\begin{align}\label{eq:channeloutput-ci}
	S(\sigma_n) = H(\mathbf{c}) + \sum_{\lambda\in\Lambda(n,d_B)} c_\lambda \left(S\left(\overline{Q}_\lambda^\cN\right) + \log\dim S_\lambda\right).
\end{align}

The decomposition \eqref{eq:perm-inv-compact} works analogously for the output state $\cN_c^{\otimes n}(\rho_{(n)})$ of the complementary channel, and we denote by $\overline{Q}_\lambda^{\cN_c}$ the corresponding average operator defined via \eqref{eq:Q-operator}, and by $\mathbf{c}^{\cN_c}$ the corresponding probability distribution defined via \eqref{eq:c-distribution}. 
The formula \eqref{eq:coherent-info-complementary} for the coherent information thus yields a compact expression for the coherent information of a permutation-invariant input state $\rho_{(n)}$ as in \eqref{eq:convex-mixture-iid}, which is our first main result:

\begin{theorem}
\label{thm:coherent-info-iid-mixture}
Let $\cN\colon A\to B$ be a quantum channel with environment $E$, and set $d_B=\dim\cH_B$ and $d_E=\dim\cH_E$.
Let $\rho_{(n)}=\sum_{i=1}^k x_i \rho_i^{\otimes n}$ with quantum states $\rho_i$ on $\cH_A$ and a probability distribution $(x_i)_{i=1}^k$.
Then the coherent information $I_c(\cN^{\otimes n},\rho_{(n)})$ can be expressed as follows:
\begin{align}
    I_c(\cN^{\otimes n}, \rho_{(n)})
    &= \left[H\left(\mathbf{c}^{\cN}\right) - H\left(\mathbf{c}^{\cN_c}\right)\right] + \sum_{\lambda\in \Lambda(n,d_B)} c_\lambda^\cN S\left(\overline{Q}_\lambda^\cN\right) - \sum_{\mu\in\Lambda(n,d_E)} c_\mu^{\cN_c} S\left(\overline{Q}_\mu^{\cN_c}\right) \notag\\
    & \qquad {}+ \sum_{\lambda\in \Lambda(n,d_B)} c_\lambda^{\cN} \log\dim S_\lambda - \sum_{\mu\in\Lambda(n,d_E)} c_\mu^{\cN_c} \log\dim\cS_\mu,
    \label{ci-symmetric-states}
\end{align}
where $\overline{Q}_\lambda^\cN$ and $\overline{Q}_\mu^{\cN_c}$ are defined via \eqref{eq:Q-operator}, and $\mathbf{c}^\cN = (c_\lambda^\cN)_{\lambda\in \Lambda(n,d_B)}$ and $\mathbf{c}^{\cN_c} = (c_\mu^{\cN_c})_{\mu\in\Lambda(n,d_E)}$ are defined via \eqref{eq:c-distribution}. 
\end{theorem}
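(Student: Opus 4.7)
The plan is to obtain the formula by expressing the coherent information as the difference of two entropies via the complementary-channel formula \eqref{eq:coherent-info-complementary}, and then evaluating each of these entropies using the Schur-Weyl decomposition \eqref{eq:perm-inv-compact} together with the direct-sum property of the von Neumann entropy.

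First I would compute the output $\sigma_n = \cN^{\otimes n}(\rho_{(n)})$. By linearity and the i.i.d.\ structure of $\cN^{\otimes n}$ we get $\sigma_n = \sum_{i=1}^k x_i\, \cN(\rho_i)^{\otimes n}$. Applying the Schur-Weyl decomposition \eqref{eq:iid-state-schur-weyl} to each $\cN(\rho_i)^{\otimes n}$ (with $d=d_B$) and collecting terms yields the block-diagonal representation \eqref{eq:perm-inv-compact}, namely $\sigma_n \cong \bigoplus_{\lambda\in\Lambda(n,d_B)} c_\lambda^{\cN}\, \overline{Q}_\lambda^{\cN}\otimes \tau_{S_\lambda}$, with the normalizations in \eqref{eq:Q-operator} and \eqref{eq:c-distribution} chosen precisely so that $(c_\lambda^{\cN})_\lambda$ is a probability distribution and $\overline{Q}_\lambda^{\cN}$ and $\tau_{S_\lambda}$ are density operators. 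Using that $S(\tau_{S_\lambda}) = \log\dim S_\lambda$ and the direct-sum identity \eqref{eq:von-neumann-direct-sum}, the additivity of entropy on tensor products gives \eqref{eq:channeloutput-ci}.

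Next I would repeat the same argument verbatim with $\cN$ replaced by its complementary channel $\cN_c\colon A\to E$, obtaining the analogous decomposition of $\cN_c^{\otimes n}(\rho_{(n)})$ with partitions now running over $\mu\in\Lambda(n,d_E)$, and corresponding operators $\overline{Q}_\mu^{\cN_c}$ and distribution $\mathbf{c}^{\cN_c}$. Subtracting this entropy from $S(\sigma_n)$ as prescribed by \eqref{eq:coherent-info-complementary}, and grouping the Shannon-entropy terms, the average entropies of the $\overline{Q}$-blocks, and the $\log\dim S_\lambda$ contributions separately, yields exactly the three bracketed sums in \eqref{ci-symmetric-states}.

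The only technical point to address is that the construction \eqref{eq:iid-state-schur-weyl} via the polynomial $\GL(d)$-representations $q_\lambda$ was initially defined for invertible $\rho_i$, whereas $\cN(\rho_i)$ and $\cN_c(\rho_i)$ need not be full-rank. I would handle this via the continuity argument given in the footnote preceding \eqref{eq:iid-state-schur-weyl}: since the invertible matrices are dense and each $q_\lambda$ is a polynomial in the matrix entries, one can define $q_\lambda(X)$ for arbitrary positive semidefinite $X$ by taking the limit of $q_\lambda(X+\eps\one)$ as $\eps\to 0$, and the continuity of the von Neumann entropy on density operators then allows the whole identity \eqref{ci-symmetric-states} to be extended by a limiting argument. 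This continuity step is the only potential obstacle, but it is routine; the rest of the proof is a direct rewriting of \eqref{eq:coherent-info-complementary} in the Schur-Weyl basis.
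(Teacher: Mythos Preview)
Your proposal is correct and follows essentially the same approach as the paper: derive the block-diagonal form \eqref{eq:perm-inv-compact} for $\cN^{\otimes n}(\rho_{(n)})$ via Schur-Weyl duality, compute its entropy using \eqref{eq:von-neumann-direct-sum}, repeat verbatim for $\cN_c$, and subtract using \eqref{eq:coherent-info-complementary}. The paper likewise dispatches the rank-deficiency issue by the same density/polynomiality continuity argument you invoke.
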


\subsection{Mixtures of pure i.i.d.~states}
\label{sec:pure-iid-states}

In this section, we show that the formula for the coherent information of symmetric states in \Cref{thm:coherent-info-iid-mixture} simplifies significantly when the input state is a sum of pure i.i.d.~states, i.e., the states $\rho_i$ in \eqref{eq:convex-mixture-iid} are all pure.
The results of \Cref{sec:results} show that our optimization approach often yields such convex mixtures of pure i.i.d.~states as the optimal input states maximizing coherent information, and hence more efficient and compact formulas for such input states are useful.

We prove two versions of this result: the first one in \Cref{thm: coherent-info-pure-iid} is a simplified version of the coherent information formula for symmetric states derived in \Cref{thm:coherent-info-iid-mixture}.
However, this formula makes use of the complementary channel $\cN\colon A\to E$, which creates a bottleneck in our numerics: the dimension $d_E$ of the environment $E$ is typically larger than $d_B=\dim\cH_B$, and the resulting dimensions of the relevant $\GL(d_E)$-irreps $V_\lambda^{d_E}$ grow much faster than those of $V_\lambda^{d_B}$ (see \Cref{tab:dim_comparison}).
To circumvent this bottleneck, \Cref{thm: coherent-info-pure-iid-purified} provides an alternative formula for the coherent information of mixtures of $k$ pure i.i.d.~states in the ``purified picture'' given by \eqref{eq:coherent-info-purification}, which only involves $(k\times k)$-block matrices and operators acting on irreps $V_\lambda^{d_B}$.

The starting point of \Cref{thm: coherent-info-pure-iid,thm: coherent-info-pure-iid-purified} is the following observation.
The formula \eqref{ci-symmetric-states} involves sums over Young diagrams $\lambda\vdash_{d_B} n$ with at most $d_B$ boxes, and Young diagrams $\mu\vdash_{d_E} n$ with at most $d_E$ boxes, respectively.
For example, for the 2-Pauli channel discussed in \Cref{sec:2-pauli} we have $d_B=2$ and $d_E=3$, so the sum over Young diagrams $\mu\vdash_{d_E}n$ will include diagrams not contained in $\Lambda(n,d_B)$.

However, for a pure state $\psi=\ketbra{\psi}$ the operators $\cN(\psi)$ and $\cN_c(\psi)$ have the same non-zero spectrum as marginals of the pure bipartite state $V|\psi\rangle$ (where $V\colon\cH_A\to \cH_B\otimes\cH_E$ is a channel isometry for $\cN\colon A\to B$). 
As we will see below, this fact implies that the environment output of a state $\rho_{(n)}$ as in \eqref{eq:convex-mixture-iid} with pure constituent states $\rho_i$ is only supported on blocks corresponding to Young diagrams $\lambda\vdash_{d_B} n$.\footnote{Note, however, that the dimension of $V_\lambda^{d_E}$ is generally much larger than that of $V_\lambda^{d_B}$ when $d_E>d_B$, which can be seen from \eqref{eq:weyl-dimension-formula} (see also \Cref{tab:dim_comparison}).}
Hence, only irreps associated to the same set of Young diagrams contribute to the coherent information in \eqref{ci-symmetric-states} for both receiver output and environment, giving rise to the following, much more compact formula:

\begin{theorem}\label{thm: coherent-info-pure-iid}
	Let $\cN\colon A\to B$ be a quantum channel with environment $E$, and assume that $d_B=\dim\cH_B \leq \dim\cH_E=d_E$.
	Then the coherent information of an input state $\rho_{(n)} = \sum_{i=1}^k x_i \rho_i^{\otimes n}$, with each $\rho_i$ a pure state and $(x_i)_{i=1}^k$ a probability distribution, is given by the following formula:
	\begin{align}
		I_c(\mathcal{N}^{\otimes n}, \rho_{(n)})
		&= \sum_{\lambda \in \Lambda(n, d_B)} c_\lambda \left[ S\left(\sigma_\lambda^{\mathcal{N}}\right) - S\left(\sigma_\lambda^{\mathcal{N}_c}\right) \right],\label{eq:Ic_simplified}
	\end{align}
	where, setting $\tau_i = \mathcal{N}(\rho_i)$ and $\omega_i = \mathcal{N}_c(\rho_i)$, we have the state
	\begin{align}
		\sigma_\lambda^{\mathcal{N}} = \frac{\sum_{i=1}^k x_i q_\lambda(\tau_i)}{\sum_{i=1}^k x_i \tr q_\lambda(\tau_i)}
	\end{align}
	on the $\GL(d_B)$-irrep space $V_\lambda^{d_B}$, while
	\begin{align}
		\sigma_\lambda^{\mathcal{N}_c} = \frac{\sum_{i=1}^k x_i q_\lambda(\omega_i)}{\sum_{i=1}^k x_i \tr q_\lambda(\omega_i)}
	\end{align} 
	is the state on the $\GL(d_E)$-irrep space $V_\lambda^{d_E}$ corresponding to the same partition $\lambda \in \Lambda(n, d_B)$.
	The coefficients $c_\lambda$ in \eqref{eq:Ic_simplified} are computed as
	\begin{align}
		c_\lambda = \dim S_\lambda \sum_{i=1}^k x_i \tr q_\lambda(\tau_i).
		\label{eq:c-lambda-compact}
	\end{align}
\end{theorem}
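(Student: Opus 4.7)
The plan is to start from the general formula of Theorem~\ref{thm:coherent-info-iid-mixture} and show that the purity of the constituent states $\rho_i$ forces a large amount of structure that makes most of the terms either vanish or collapse. The key observation is that if $\rho_i$ is pure and $V\colon\cH_A\to\cH_B\otimes\cH_E$ is a Stinespring isometry for $\cN$, then $V|\rho_i\rangle$ is a pure bipartite vector whose reduced states are $\tau_i=\cN(\rho_i)$ and $\omega_i=\cN_c(\rho_i)$. By the Schmidt decomposition, $\tau_i$ and $\omega_i$ share the same nonzero spectrum; in particular $\rk(\omega_i)=\rk(\tau_i)\leq d_B$.

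The second ingredient I would use is the character identity $\tr q_\lambda(X)=s_\lambda(x_1,\dots,x_d)$, where $s_\lambda$ is the Schur polynomial and $x_1,\dots,x_d$ are the eigenvalues of $X\in\GL(d)$. This is classical for invertible $X$ and extends to arbitrary positive semidefinite $X$ by the continuity argument in the footnote following \eqref{eq:iid-state-schur-weyl}, since $q_\lambda$ is polynomial. I then invoke the standard vanishing property of Schur polynomials: $s_\lambda(x_1,\dots,x_r,0,\dots,0)=0$ whenever $\lambda$ has more than $r$ parts (equivalently, $\ssyt_r(\lambda)=\emptyset$).

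Combining these two ingredients yields two key consequences. First, for every $\lambda\in\Lambda(n,d_B)$, both $\tr q_\lambda(\tau_i)$ and $\tr q_\lambda(\omega_i)$ are given by the same Schur polynomial evaluated on the shared nonzero spectrum of $\tau_i$ and $\omega_i$, hence $\tr q_\lambda(\tau_i)=\tr q_\lambda(\omega_i)$ for each $i$, and therefore $c_\lambda^{\cN}=c_\lambda^{\cN_c}$. Second, for every $\mu\in\Lambda(n,d_E)\setminus\Lambda(n,d_B)$ and every $i$, the Schur polynomial vanishes on the (at most $d_B$-supported) spectrum of $\omega_i$, so $\tr q_\mu(\omega_i)=0$; since $q_\mu(\omega_i)\succeq 0$ (again by continuity from the invertible case), this forces $q_\mu(\omega_i)=0$, and hence $\bar q_\mu^{\cN_c}=0$ and $c_\mu^{\cN_c}=0$ for all such $\mu$.

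Plugging these into \eqref{ci-symmetric-states} finishes the argument: the environment sums over $\mu\in\Lambda(n,d_E)$ collapse to sums over $\lambda\in\Lambda(n,d_B)$; the Shannon-entropy bracket $H(\mathbf{c}^{\cN})-H(\mathbf{c}^{\cN_c})$ vanishes because $c_\lambda^{\cN}=c_\lambda^{\cN_c}$; and the $\log\dim S_\lambda$ line cancels term by term for the same reason. What survives is precisely $\sum_\lambda c_\lambda\bigl[S(\sigma_\lambda^{\cN})-S(\sigma_\lambda^{\cN_c})\bigr]$ with $c_\lambda$ as in \eqref{eq:c-lambda-compact}. The only technically delicate point is the extension of the character identity and of positive semidefiniteness of $q_\lambda(X)$ to non-invertible $X$, but this follows cleanly from the polynomial nature of $q_\lambda$ and density of $\GL(d)$ in the space of $d\times d$ matrices, so no real obstacle should arise.
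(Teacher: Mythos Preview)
Your proposal is correct and follows essentially the same approach as the paper: use that $\tau_i$ and $\omega_i$ share nonzero spectra via Schmidt, invoke the Schur-polynomial identity $\tr q_\lambda(X)=s_\lambda(\mathrm{spec}(X))$ together with the vanishing $s_\lambda(x_1,\dots,x_r,0,\dots,0)=0$ for $\ell(\lambda)>r$, conclude $c_\lambda^{\cN}=c_\lambda^{\cN_c}$ on $\Lambda(n,d_B)$ and $c_\mu^{\cN_c}=0$ off it, and cancel the Shannon and $\log\dim S_\lambda$ terms in \eqref{ci-symmetric-states}. The only minor difference is that the paper argues the special case $d_B=2$, $d_E=3$ via an explicit SSYT analysis and then asserts generality, whereas you work directly in arbitrary dimensions and additionally observe $q_\mu(\omega_i)=0$ (a correct but unneeded strengthening of $c_\mu^{\cN_c}=0$).
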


\begin{proof}
We will prove the special case $d_B=2$ and $d_E=3$.
The argument easily generalizes to any quantum channel $\cN\colon A\to B$ with environment $E$ and $d_B=\dim\cH_B\leq d_E=\dim\cH_E$.
 
Our goal is to show the following two claims for a state $\rho_{(n)}=\sum_{i=1}^k x_i \rho_i^{\otimes n}$ with all $\rho_i$ pure:
\begin{enumerate}[{Claim} (i):]
    \item\label{claim1} The relative irrep-weights $c_\lambda^{\cN_c}$ in the formula \eqref{ci-symmetric-states} are zero whenever $\lambda$ has more than two rows, i.e., $\lambda \in \Lambda(n,3)\setminus\Lambda(n,2)$.
    \item\label{claim2} On the remaining Young diagrams $\lambda\vdash_2 n$, we have $c_\lambda^\cN = c_\lambda^{\cN_c}$.
\end{enumerate}
To this end, recall from \eqref{eq:c-distribution} that 
\begin{align}
    c_\lambda^{\cN_c} = \dim S_\lambda \tr \bar{q}_\lambda = \dim S_\lambda \sum_{i=1}^k x_i \tr q_\lambda(\cN_c(\rho_i)) = \dim S_\lambda \sum_{i=1}^k x_i s_\lambda(r_{i,1},r_{i,2},r_{i,3}).\label{eq:c-with-schur-polynomials}
\end{align}
Here, $r_{i,1},r_{i,2},r_{i,3}$ are the eigenvalues of the $(3\times 3)$-matrix $\cN_c(\rho_i)$, and $s_\lambda(x)$ with $x=(x_1,\dots,x_f)$ is the \emph{Schur polynomial} indexed by $\lambda\vdash_{f} n$, defined as
\begin{align}
    s_\lambda(x) = \sum_{T\in\ssyt_{f}(\lambda)}\: \prod_{\square\in \lambda} x_{T(\square)},\label{eq:schur-polynomial}
\end{align}
where $T(\square)$ is the entry in $T$ in the box $\square\in\lambda$.

By assumption the state $\rho_i$ is pure, and hence $\cN(\rho_i)$ and $\cN_c(\rho_i)$ have the same non-zero spectrum.
Since $\cN(\rho_i)$ is a $(2\times 2)$-matrix with at most two non-zero eigenvalues $r_{i,1}$, $r_{i,2}$, the spectrum of $\cN_c(\rho_i)$ on which the Schur polynomials in \eqref{eq:c-with-schur-polynomials} are evaluated is thus of the form $\left(r_{i,1},r_{i,2},0\right)$.
If $\lambda$ has three rows, then any SSYT $T$ of shape $\lambda$ necessarily has a $3$ appearing in the third row, as the labels in an SSYT are strictly increasing along columns.
But then all terms in \eqref{eq:schur-polynomial} will involve the factor $x_3=r_{i,3}=0$, and hence $s_\lambda (r_{i,1},r_{i,2},0 )=0$ for such $\lambda\in\Lambda(n,3)\setminus \Lambda(n,2)$.
This holds for all states $\rho_i$, and hence $c_\lambda^{\cN_c}=0$, proving Claim \Cref{claim1} above.

To prove Claim \Cref{claim2} we observe that, for the same reason as above, any SSYT $T$ featuring a label $3$ will cause the respective summand in \eqref{eq:schur-polynomial} to vanish, and only $T\in \ssyt_2(\lambda)$ survive in the sum.
It follows that, for the remaining $\lambda\in\Lambda(n,3)\cap \Lambda(n,2)$,
\begin{align}
    s_\lambda (r_{i,1},r_{i,2},0 ) = \sum_{T\in\ssyt_3(\lambda)} \: \prod_{\square\in\lambda} r_{i,T(\square)} = \sum_{T\in\ssyt_2(\lambda)} \: \prod_{\square\in\lambda} r_{i,T(\square)} = s_\lambda (r_{i,1},r_{i,2} ),
\end{align}
and hence $c_\lambda^{\cN} = c_\lambda^{\cN_c}$, proving Claim \Cref{claim2}.

Denoting by $\mathbf{c} = \mathbf{c}^{\mathcal{N}} = \mathbf{c}^{\mathcal{N}_c}=(c_\lambda)_{\lambda \in \Lambda(n, d_B)}$ the common distribution of irrep-weights defined via \eqref{eq:c-distribution}, the expression \eqref{ci-symmetric-states} for the coherent information simplifies as the Shannon entropy terms $H(\cdot)$ and the sums over the dimensions of the $\kS_n$-irreps $S_\lambda$ cancel.
This concludes the proof.
\end{proof}

Finally, we prove a version of \Cref{thm: coherent-info-pure-iid} that avoids reference to the channel environment, instead involving block matrices of size linear in the irrep-dimensions of the channel output.
For qubit channels with four-dimensional environment (such as the generalized amplitude damping channel discussed in \Cref{sec:gadc}), this provides a significantly more favorable scaling of the resulting coherent information formula that allows us to go to a much larger number $n$ of channel copies.

\begin{theorem}\label{thm: coherent-info-pure-iid-purified}
	Let $\cN\colon A\to B$ be a quantum channel, and set $d_B=\dim\cH_B$.
	The coherent information of an input state $\rho_{(n)} = \sum_{i=1}^k x_i \ketbra{\psi_i}^{\otimes n}$ is given by the following formula:
	\begin{align}
		I_c(\mathcal{N}^{\otimes n}, \rho_{(n)})
		&= \sum_{\lambda\in\Lambda(n,d_B)} c_\lambda \left(S(\sigma_\lambda) - S(\omega_\lambda)\right)
	\end{align}
	where, setting $\sigma_{ij}=\cN(\ketbra{\psi_i}{\psi_j})$ for $i,j=1,\dots,k$, we have defined the normalized states
	\begin{align}
		\sigma_\lambda &= \frac{1}{\bar{q}_\lambda} \sum_{i=1}^k x_i\,  q_\lambda(\sigma_{ii}) \\
		 \omega_\lambda &= \frac{1}{\bar{q}_\lambda} \begin{pmatrix}
			x_1 q_\lambda(\sigma_{11}) & \sqrt{x_1 x_2} q_\lambda(\sigma_{12}) & \dots & \sqrt{x_1 x_k} q_\lambda(\sigma_{1k}) \\
			\sqrt{x_2 x_1} q_\lambda(\sigma_{21}) & x_2 q_\lambda(\sigma_{22}) & \dots & \sqrt{x_2 x_k} q_\lambda(\sigma_{2k})\\
			\vdots & \vdots & \ddots & \vdots\\
			\sqrt{x_k x_1} q_\lambda(\sigma_{k1}) & \sqrt{x_k x_2} q_\lambda(\sigma_{k2}) & \dots & x_k q_\lambda(\sigma_{kk})
		\end{pmatrix}
    \end{align} 
	with the normalization constant $\bar{q}_\lambda = \sum_{i=1}^k x_i \tr q_\lambda(\sigma_{ii})$, and coefficients $c_\lambda = \bar{q}_\lambda \dim S_\lambda$.
\end{theorem}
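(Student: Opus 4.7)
The plan is to apply the ``purified picture'' formula \eqref{eq:coherent-info-purification} for coherent information, constructing an explicit purification that exposes the $k \times k$ block structure advertised in the statement. First I would define
\begin{align}
    \ket{\Psi}_{RA^n} = \sum_{i=1}^k \sqrt{x_i}\, \ket{i}_R \otimes \ket{\psi_i}_{A^n}^{\otimes n}
\end{align}
with $\lbrace \ket{i}_R \rbrace$ an orthonormal basis of an auxiliary system $R\cong\mathbb{C}^k$, and verify that $\tr_R \ketbra{\Psi} = \rho_{(n)}$. By \eqref{eq:coherent-info-purification} it then suffices to compute $S(\mathcal{N}^{\otimes n}(\rho_{(n)}))$ and $S((\mathcal{N}^{\otimes n} \otimes \id_R)(\ketbra{\Psi}))$ separately.

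Next, expanding the outer product gives
\begin{align}
    (\mathcal{N}^{\otimes n}\otimes \id_R)(\ketbra{\Psi}) = \sum_{i,j=1}^k \sqrt{x_i x_j}\,\ketbra{i}{j}_R \otimes \sigma_{ij}^{\otimes n},
\end{align}
where $\sigma_{ij} = \mathcal{N}(\ketbra{\psi_i}{\psi_j})$. The key idea is to apply Schur–Weyl to each $\sigma_{ij}^{\otimes n}$ via the decomposition $\sigma_{ij}^{\otimes n} \cong \bigoplus_\lambda q_\lambda(\sigma_{ij}) \otimes \one_{S_\lambda}$. For $i=j$, $\sigma_{ii}$ is a legitimate (subnormalized) positive operator so the decomposition follows directly from \eqref{eq:iid-state-schur-weyl} and the continuity argument of the footnote on $q_\lambda$. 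For $i \ne j$, $\sigma_{ij}$ is not positive (or even Hermitian), but the representations $q_\lambda$ are polynomial in the matrix entries of their argument and the identity $A^{\otimes n} \cong \bigoplus_\lambda q_\lambda(A) \otimes \one_{S_\lambda}$ holds for all $A \in \GL(d_B)$, so by density of $\GL(d_B)$ in $\End(\mathbb{C}^{d_B})$ it extends to arbitrary matrices $A$. Substituting this block decomposition, the $R$-summation passes inside the direct sum and I obtain
\begin{align}
    (\mathcal{N}^{\otimes n}\otimes \id_R)(\ketbra{\Psi}) \cong \bigoplus_\lambda \Omega_\lambda \otimes \one_{S_\lambda}, \qquad \Omega_\lambda = \sum_{i,j} \sqrt{x_i x_j}\,\ketbra{i}{j}_R \otimes q_\lambda(\sigma_{ij}),
\end{align}
where $\Omega_\lambda$ is precisely the unnormalized $k \times k$ block matrix $\bar{q}_\lambda \omega_\lambda$ appearing in the theorem statement. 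An analogous (and simpler) direct computation yields $\mathcal{N}^{\otimes n}(\rho_{(n)}) \cong \bigoplus_\lambda \bar{q}_\lambda \sigma_\lambda \otimes \one_{S_\lambda}$.

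The last step is to compute both entropies using the direct-sum identity \eqref{eq:von-neumann-direct-sum}. The crucial observation, which drives the cancellation, is that the probability weights $c_\lambda$ are identical for both decompositions: $\tr \Omega_\lambda = \sum_{i,j}\sqrt{x_i x_j}\braket{j}{i}_R \tr q_\lambda(\sigma_{ij}) = \sum_i x_i \tr q_\lambda(\sigma_{ii}) = \bar{q}_\lambda$, so the block-traces of $\mathcal{N}^{\otimes n}(\rho_{(n)})$ and $(\mathcal{N}^{\otimes n}\otimes\id_R)(\ketbra{\Psi})$ both equal $c_\lambda = \bar{q}_\lambda \dim S_\lambda$. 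Subtracting the two entropies, the $H(\mathbf{c})$ contributions and the $\sum_\lambda c_\lambda \log \dim S_\lambda$ contributions cancel, leaving exactly $\sum_\lambda c_\lambda (S(\sigma_\lambda) - S(\omega_\lambda))$, as claimed. The only delicate point I anticipate is justifying cleanly that the polynomial extension of $q_\lambda$ to non-invertible (and in particular non-Hermitian) arguments still yields the Schur–Weyl block decomposition of $\sigma_{ij}^{\otimes n}$; once this continuity argument is in place, the rest is bookkeeping.
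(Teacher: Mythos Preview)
Your proposal is correct and follows essentially the same route as the paper: the same purification $|\Psi\rangle_{RA^n}$, the same block expansion of $(\mathcal{N}^{\otimes n}\otimes\id_R)(\ketbra{\Psi})$ in terms of $\sigma_{ij}^{\otimes n}$, the same Schur--Weyl reduction on each block, and the same cancellation via matching $c_\lambda$'s in the direct-sum entropy formula. Your continuity/density justification for applying $q_\lambda$ to non-Hermitian $\sigma_{ij}$ is slightly more explicit than the paper's, which simply invokes the global Schur--Weyl basis change $\bigoplus_{i=1}^k U$ (a unitary conjugation, hence valid for any operator); both are fine.
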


\begin{proof}
A purification of the input state $\rho_{(n)}$ is given by
\begin{align}
    |\phi\rangle_{RA^n} = \sum_{i=1}^k \sqrt{x_i} |i\rangle_R |\psi_i\rangle_A^{\otimes n},
\end{align}
where $\lbrace \ket{i}_R\rbrace_{i=1}^k$ is an orthonormal basis of the purifying system $R$.
We have
\begin{align}
	\ketbra{\phi}_{RA^n} &= \sum_{i,j=1}^k \sqrt{x_i x_j} \ketbra{i}{j}_R \otimes \ketbra{\psi_i}{\psi_j}_A^{\otimes n}\\
	&= \begin{pmatrix}
		x_1 \ketbra{\psi_1}_A^{\otimes n} & \sqrt{x_1 x_2} \ketbra{\psi_1}{\psi_2}_A^{\otimes n} & \dots & \sqrt{x_1 x_k} \ketbra{\psi_1}{\psi_k}_A^{\otimes n} \\
		\sqrt{x_2 x_1} \ketbra{\psi_2}{\psi_1}_A^{\otimes n} & x_2 \ketbra{\psi_2}_A^{\otimes n} & \dots & \sqrt{x_2 x_k} \ketbra{\psi_2}{\psi_k}_A^{\otimes n}\\
		\vdots & \vdots & \ddots & \vdots\\
		\sqrt{x_k x_1} \ketbra{\psi_k}{\psi_1}_A^{\otimes n} & \sqrt{x_k x_2} \ketbra{\psi_k}{\psi_2}_A^{\otimes n} & \dots & x_k \ketbra{\psi_k}_A^{\otimes n}
	\end{pmatrix},
\end{align}
where the block structure is with respect to the $R$ system.
Applying the i.i.d.~channel $\cN^{\otimes n}$ to the $A^n$-systems of $\ketbra{\phi}_{RA^n}$, we obtain the state
\begin{align}
	&(\id_R\otimes \cN^{\otimes n})\left(\ketbra{\phi}_{RA^n} \right) \notag\\
	&\qquad{} = \begin{pmatrix}
		x_1 \cN(\ketbra{\psi_1}_A)^{\otimes n} & \sqrt{x_1 x_2} \cN(\ketbra{\psi_1}{\psi_2}_A)^{\otimes n} & \dots & \sqrt{x_1 x_k} \cN(\ketbra{\psi_1}{\psi_k}_A)^{\otimes n} \\
		\sqrt{x_2 x_1} \cN(\ketbra{\psi_2}{\psi_1}_A)^{\otimes n} & x_2 \cN(\ketbra{\psi_2}_A)^{\otimes n} & \dots & \sqrt{x_2 x_k} \cN(\ketbra{\psi_2}{\psi_k}_A)^{\otimes n}\\
		\vdots & \vdots & \ddots & \vdots\\
		\sqrt{x_k x_1} \cN(\ketbra{\psi_k}{\psi_1}_A)^{\otimes n} & \sqrt{x_k x_2} \cN(\ketbra{\psi_k}{\psi_2}_A)^{\otimes n} & \dots & x_k \cN(\ketbra{\psi_k}_A)^{\otimes n}
	\end{pmatrix}\\
	&\qquad{} = \begin{pmatrix}
		x_1 \sigma_{11}^{\otimes n} & \sqrt{x_1 x_2} \sigma_{12}^{\otimes n} & \dots & \sqrt{x_1 x_k} \sigma_{1k}^{\otimes n} \\
		\sqrt{x_2 x_1} \sigma_{21}^{\otimes n} & x_2 \sigma_{22}^{\otimes n} & \dots & \sqrt{x_2 x_k} \sigma_{2k}^{\otimes n}\\
		\vdots & \vdots & \ddots & \vdots\\
		\sqrt{x_k x_1} \sigma_{k1}^{\otimes n} & \sqrt{x_k x_2} \sigma_{k2}^{\otimes n} & \dots & x_k \sigma_{kk}^{\otimes n}
	\end{pmatrix},
	\label{eq:block-matrix}
\end{align}
where we defined $\sigma_{ij} = \cN(\ketbra{\psi_i}{\psi_j}_A)$.

The crucial step is now to use the block form \eqref{eq:iid-state-schur-weyl} for the operators $\sigma_{ij}^{\otimes n}$ within each block in \eqref{eq:block-matrix}.
This can be understood as applying the basis change $\bigoplus_{i=1}^k U$, where $U$ is the unitary achieving the Schur-Weyl decomposition in \eqref{eq:schur-weyl}.
Using another suitable basis change that reorders rows and columns, the resulting operator can be written in block-diagonal form as
\begin{align}
	(\id_R\otimes \cN^{\otimes n})(\ketbra{\phi}_{RA^n}) &\cong \bigoplus_{\lambda\in\Lambda(n,d_B)} Q_\lambda \otimes \one_{S_\lambda}\\
	\text{with}\qquad Q_\lambda &= \begin{pmatrix}
		x_1 q_\lambda(\sigma_{11}) & \sqrt{x_1 x_2} q_\lambda(\sigma_{12}) & \dots & \sqrt{x_1 x_k} q_\lambda(\sigma_{1k}) \\
		\sqrt{x_2 x_1} q_\lambda(\sigma_{21}) & x_2 q_\lambda(\sigma_{22}) & \dots & \sqrt{x_2 x_k} q_\lambda(\sigma_{2k})\\
		\vdots & \vdots & \ddots & \vdots\\
		\sqrt{x_k x_1} q_\lambda(\sigma_{k1}) & \sqrt{x_k x_2} q_\lambda(\sigma_{k2}) & \dots & x_k q_\lambda(\sigma_{kk})
	\end{pmatrix},
	\label{eq:regrouped}
\end{align}
where as before we denote by $q_\lambda$ the $\GL(d_B)$-irrep acting on $V_\lambda^{d_B}$.
In order to compute entropies, we renormalize the operators appearing in \eqref{eq:regrouped},
\begin{align}
	(\id_R\otimes \cN^{\otimes n})(\ketbra{\phi}_{RA^n}) \cong \bigoplus_{\lambda\in\Lambda(n,d_B)} c_\lambda\, \omega_\lambda \otimes \tau_{S_\lambda}, \label{eq:purified-output}
\end{align}
where
\begin{align}
	\bar{q}_\lambda &= \sum_{i=1}^k x_i \tr q_\lambda(\sigma_{ii}) & \omega_\lambda &= \frac{1}{\bar{q}_\lambda} Q_\lambda\\
	\tau_\lambda &= \frac{1}{\dim S_\lambda} \one_{S_\lambda} & c_\lambda &= \bar{q}_\lambda \dim S_\lambda.
\end{align}
Recall from the proof of \Cref{thm:coherent-info-iid-mixture} that, using the notation of \eqref{eq:purified-output}, the state $\cN^{\otimes n}(\rho_{(n)})$ can be written as
\begin{align}
	\cN^{\otimes n}(\rho_{(n)}) \cong \bigoplus_{\lambda\in\Lambda(n,d_B)} c_\lambda \, \sigma_\lambda \otimes \tau_{S_\lambda}, \label{eq:output}
\end{align}
where 
\begin{align}
	q_\lambda &= \sum_{i=1}^k x_i \tr q_\lambda(\sigma_{ii}) & \sigma_\lambda &= \frac{1}{\bar{q}_\lambda} \sum_{i=1}^k x_i  q_\lambda(\sigma_{ii})\\
	\tau_\lambda &= \frac{1}{\dim S_\lambda} \one_{S_\lambda} & c_\lambda &= \bar{q}_\lambda \dim S_\lambda.
\end{align}
In particular, the coefficients $c_\lambda$ in \eqref{eq:purified-output} and \eqref{eq:output} are identical, just as in the proof of \Cref{thm: coherent-info-pure-iid}.

Using this fact in the formula \eqref{eq:coherent-info-purification}, the channel coherent information of the input state $\rho_{(n)} = \sum_{i=1}^k x_i \ketbra{\psi_i}^{\otimes n}$ has the simple form
\begin{align}
	I_c(\cN^{\otimes n},\rho_{(n)}) &= S\left(\cN^{\otimes n}(\rho_{(n)})\right) - S\left((\id_R\otimes \cN^{\otimes n})(\ketbra{\phi}_{RA^n})\right) \\
	&= \sum_{\lambda\in\Lambda(n,d_B)} c_\lambda \left(S(\sigma_\lambda) - S(\omega_\lambda)\right),
\end{align}
which concludes the proof.
\end{proof}

\subsection{Optimization method}
\label{sec:optimization}

Our goal is to find permutation-invariant input states $\rho_{(n)}=\sum_{i=1}^k x_i\rho_i^{\otimes n}$ with high coherent information $I_c(\cN^{\otimes n}, \rho_{(n)})$ using~\eqref{ci-symmetric-states} as the objective function.  The optimization is thus over the probability distribution $\{x_i\}$ and the set of states $\{\rho_i\}$. 

We perform this search using the stochastic global optimization algorithm Particle Swarm Optimization (PSO) \cite{kennedy1995particle}. 
The details of the state parametrizations used in PSO are provided in \Cref{app: numerics}. All entropies in~\eqref{ci-symmetric-states} are evaluated using the representation-theoretic machinery of \Cref{subsec: ci-convex-sums}.
A MATLAB implementation of this optimization for all channels considered in this paper will be made available at \cite{perm-inv-codes-github}.

\subsubsection{Optimization vs. single evaluations of the objective function}

 While our search space can include general mixed i.i.d.~states (using the formula in \Cref{thm:coherent-info-iid-mixture} for the objective function), we empirically find that the optimization consistently converges to ensembles of almost-pure states.

This observation motivates a more efficient optimization strategy that directly targets the class of pure i.i.d.~codes using \Cref{thm: coherent-info-pure-iid-purified}. 
This formula has the significant advantage of not requiring access to the complementary channel, which often maps to a higher-dimensional output and is thus computationally more costly. An implementation of this pure-state formula hinges on the method used to compute the irrep matrices $q_\lambda(\mathcal{N}(\ketbra{\psi}{\psi}))$, for which we have two approaches:
\begin{enumerate}
    \item The method described in \Cref{sec:molev} that, for any dimension $d$, computes the Lie algebra representation and obtains the group representation via the matrix exponential, $q_\lambda(\rho) = \exp(\phi_\lambda(\log \rho))$. While this can work well for a single evaluation of the coherent information at a code, it is ill-suited for use within an optimization loop over pure states. A pure input $\ketbra{\psi}{\psi}$ can yield a rank-deficient output state $\mathcal{N}(\ketbra{\psi}{\psi})$, for which the matrix logarithm need not be uniquely defined.
    \item For qubit channels ($d=2$), the explicit algebraic formula of Eq.~\eqref{eq:Sm_A_formula} circumvents this issue by directly expressing the matrix entries of $q_\lambda$ as a polynomial in the entries of its argument.
    This method thus avoids matrix logarithms and is numerically stable for rank-deficient operators.
\end{enumerate}
Therefore, our most effective strategy, used to obtain the results for the highest number of channel copies, is a PSO over either mixed states using \Cref{thm:coherent-info-iid-mixture}, or over pure-state codes using \Cref{thm: coherent-info-pure-iid-purified}. In either case, we find the optimal codes to be ensembles of pure or almost pure states. We can then evaluate the coherent information for the optimized code for a large number of channel copies using \Cref{thm: coherent-info-pure-iid-purified}. 

\subsubsection{Polynomial growth of $\dim V_{\lambda}^d$}

Although the dimensions of the irreps $V_{\lambda}^d$ grow only polynomially in $n$ for fixed $d$ (see \eqref{eq:weyl-dimension-formula}), this growth nevertheless establishes a new computational bottleneck for our approach. 
To illustrate this limit, Table~\ref{tab:dim_comparison} lists the maximum irrep dimensions for $\GL(2)$ and $\GL(4)$ for increasing $n$. 
For comparison, direct optimization techniques (such as those used in~\cite{bausch2020neural} with neural network quantum states) involve a full diagonalization of the output of a channel $\cN^{\otimes n}$ acting on a purified input, or alternatively diagonalizing the outputs of $\cN^{\otimes n}$ and the complementary channel $\cN_c^{\otimes n}$.
For a qubit-qubit channel this is typically feasible (on personal-use laptops) for up to $6$ copies of the channel, which requires diagonalizing a matrix of size $4096 \times 4096$ in each evaluation of the objective function.
We note here that in PSO the objective function is typically evaluated between 100 and 1000 times per iteration \cite{mathworks_particleswarm}.
In our representation-theoretic approach using \Cref{thm:coherent-info-iid-mixture}, similar dimensions are reached at $n = 15$ for channels with a four-dimensional environment space, as illustrated in \Cref{tab:dim_comparison}.

\begin{table}[t]
\centering
\caption{Growth of Hilbert space and maximum irrep dimensions for $n$ copies of a quantum channel with output system dimension $d_B=2$ and environment dimension $d_E=4$. While Schur-Weyl duality offers a drastic reduction compared to the full system-environment space, the polynomial growth of $\max_{\lambda} \dim(V^d_\lambda)$ still imposes a practical limit.}
\label{tab:dim_comparison}
\begin{tabular}{@{}ccccc@{}}
\toprule
$n$ & $\dim \cH_B^{\otimes n} = 2^n$ & $\max_{\lambda} \dim(V^2_\lambda)$ & $\dim \cH_E^{\otimes n} = 4^n$ & $\max_{\lambda} \dim(V^4_\lambda)$ \\ \midrule
6  & 64        & 7   & 4,096 & 140 \\
8  & 256       & 9   & 65,536 & 360  \\
10 & 1,024     & 11  & 1,048,576 & 770 \\
12 & 4,096     & 13  & 16,777,216 & 1,540  \\
15 & 32,768    & 16  & 1,073,741,824 & 4,004  \\
16 & 65,536    & 17  & 4,294,967,296 & 5,376  \\
\bottomrule
\end{tabular}
\end{table}

The partition $\lambda$ that maximizes $\dim V_{\lambda}^4$ for $n = 15, 16$ has $3$ parts. Even if we restrict to partitions with at most $2$ parts and use \Cref{thm: coherent-info-pure-iid} for calculating coherent information for mixtures of pure i.i.d.~states, we still see partitions with irreps of comparable size at $n = 15$ and $n = 16$: $V_{\lambda}^4$ has dimension $3640$ for $\lambda = (11,4)$ and $4725$ for $\lambda = (12,4)$. In contrast, evaluating the coherent information using \Cref{thm: coherent-info-pure-iid-purified} avoids the need for computing $\GL(d)$ irreps with $d = 4$ for qubit channels with $4$-dimensional environment and thus allows computing the coherent information at mixtures of i.i.d pure states for a significantly larger number of channel copies (even up to $n=100$).

\begin{table}[t]
\centering
\caption{Dimensions of the channels $\cN\colon A\to B$ with environment $E$ studied in this paper.  For each
channel we list the input dimension $d_A$, output dimension
$d_B$, and environment dimension $d_E$ that we use in the numerics.}
\label{tab:channel_dims}
\begin{tabular}{@{}lccc@{}}
\toprule
Channel & $d_A$ & $d_B$ & $d_E$ \\ \midrule
$2$-Pauli  (\Cref{sec:2-pauli})               & 2 & 2 & 3 \\
BB84  (\Cref{sec:BB84})               & 2 & 2 & 4 \\
General Pauli noise (\Cref{subsubsec: pauli-simplex}) & 2 & 2 & 4\\
Dephrasure  (\Cref{sec:dephrasure}) & 2 & 3 & 4 \\
Generalized amplitude-damping (\Cref{sec:gadc}) & 2 & 2 & 4 \\
Damping-dephasing (\Cref{sec:damping-dephasing})   & 2 & 2 & 3 \\ \bottomrule
\end{tabular}
\end{table}

\section{Results}
\label{sec:results}

In this section we apply our main theoretical results from \Cref{sec:coherent-information-symmetries}, a representation-theoretic formula for the channel coherent information of mixtures of i.i.d.~states in \Cref{thm:coherent-info-iid-mixture}, and a specialization to mixtures of pure i.i.d.~states in \Cref{thm: coherent-info-pure-iid,thm: coherent-info-pure-iid-purified}.
We use these results to efficiently optimize the channel coherent information of various channel families over this family of permutation-invariant states to obtain lower bounds on the quantum capacity of these channels.
Our method allows us to carry out this optimization for values of $n$ up to at least $100$.
However, for each channel we typically find that the optimal threshold increases up to a peak value of $n$, and then decreases.
For the sake of readability we only show numerical data up to this peak value of $n$ in the plots.

\subsection{Pauli channels}
We first study Pauli channels of the form 
\begin{align} 
	\cN_{\mathbf{p}} (\rho)  =  p_0 \rho + p_1 X\rho X + p_2 Y\rho Y + p_3 Z\rho Z,
\end{align}
where $\rho$ is a qubit density matrix, $\mathbf{p}=(p_0,p_1,p_2,p_3)$ is a probability distribution and $X,Y,Z$ are the Pauli matrices.
We focus in particular on the following special Pauli channels: the 2-Pauli channel $\mathbf{p}=(1-p,p/2,0,p/2)$, the BB84-channel $\mathbf{p}=((1-p)^2,p-p^2,p^2,p-p^2)$ and the depolarizing channel $\mathbf{p}=(1-p,p/3,p/3,p/3)$.
We apply the methods of \Cref{sec:coherent-information-symmetries} to optimize the coherent information of these channels over permutation-invariant codes of the form \eqref{eq:convex-mixture-iid}. 
As a benchmark, we compare such permutation-invariant codes to a weighted version $\phi_n^x$ of the repetition code \eqref{eq:repetition-code}, defined for $x\in[0,1]$ as
\begin{align}
    \phi_n^x = x \ketbra{0}^{\otimes n} + (1-x)\ketbra{1}^{\otimes n}.
    \label{eq:weighted-rep-code}
\end{align}
Note that the usual repetition code \eqref{eq:repetition-code} is recovered as $\phi_n = \phi_n^{1/2}$.
The following theorem gives an explicit formula for the coherent information of this weighted repetition code under Pauli noise.
A proof is given in \Cref{app:ci-pauli-rep}.

\begin{proposition}\label{thm:pauli_ci_rep}
The coherent information of $n$ copies of a Pauli channel 
\begin{align} \cN_{\mathbf{p}}(\rho) = p_0 \rho + p_1 X \rho X+ p_2 Y \rho Y+ p_3 Z \rho Z\end{align} 
evaluated on the weighted repetition code $\phi_n^x$ defined in \eqref{eq:weighted-rep-code} is given by
\begin{align}
    I_c(\cN_{\mathbf{p}}^{\otimes n}, \phi_n^x) &= 
    -\sum_{w = 0}^n \binom{n}{w} y_1(w)\log y_1(w)+ \sum_{w = 0}^n \binom{n}{w}  y_+(w) \log y_+(w) + \sum_{w = 0}^n \binom{n}{w}  y_-(w) \log y_-(w)
\end{align}
where
\begin{align}
    y_1(w) &= x(p_0 + p_3)^{n-w}(p_1 + p_2)^w + (1-x) (p_0 +p_3)^{w}(p_1 + p_2)^{n-w}\\
     y_{\pm}(w) &= \frac{(p_0+p_3)^{n-w}(p_1+p_2)^{w}\pm \sqrt{\Delta(w)}}{2}\\
    \Delta(w) &= [(2x -1)(p_0+p_3)^{n-w}(p_1 + p_2)^{w}]^2+4x(1-x)[(p_0-p_3)^{n-w}(p_1 - p_2)^{w}]^2.
\end{align}

\end{proposition}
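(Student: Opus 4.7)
My plan is to evaluate both terms of $I_c(\cN_{\mathbf{p}}^{\otimes n},\phi_n^x) = S(\cN_{\mathbf{p}}^{\otimes n}(\phi_n^x)) - S(\cN_{\mathbf{p},c}^{\otimes n}(\phi_n^x))$ directly in the computational basis, using a standard purification of $\phi_n^x$ rather than invoking the representation-theoretic formula of \Cref{thm: coherent-info-pure-iid-purified}. The structural fact I will exploit is that $\cN_{\mathbf{p}}$ preserves the decomposition of any single-qubit operator into diagonal and anti-diagonal parts: a short calculation using $X\ketbra{0}{1}X = \ketbra{1}{0}$, $Y\ketbra{0}{1}Y = -\ketbra{1}{0}$, and $Z\ketbra{0}{1}Z = -\ketbra{0}{1}$ gives
\begin{align*}
    \cN_{\mathbf{p}}(\ketbra{i}) &= (p_0+p_3)\ketbra{i} + (p_1+p_2)\ketbra{\bar i}, & \cN_{\mathbf{p}}(\ketbra{0}{1}) &= (p_0-p_3)\ketbra{0}{1} + (p_1-p_2)\ketbra{1}{0}.
\end{align*}

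For the receiver term, the state $\cN_{\mathbf{p}}^{\otimes n}(\phi_n^x) = x\,\cN_{\mathbf{p}}(\ketbra{0})^{\otimes n} + (1-x)\,\cN_{\mathbf{p}}(\ketbra{1})^{\otimes n}$ is immediately diagonal in the computational basis on $A^n$, and its diagonal entry at any string $s$ of Hamming weight $w$ is exactly $y_1(w)$; counting strings by weight then produces $-\sum_w \binom{n}{w} y_1(w)\log y_1(w)$. For the environment term I will use the purification $|\phi\rangle_{RA^n} = \sqrt{x}\,|0\rangle_R|0\rangle_A^{\otimes n} + \sqrt{1-x}\,|1\rangle_R|1\rangle_A^{\otimes n}$, so that $S(\cN_{\mathbf{p},c}^{\otimes n}(\phi_n^x)) = S\bigl((\cN_{\mathbf{p}}^{\otimes n} \otimes \id_R)(\ketbra{\phi})\bigr)$. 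Written as a $2\times 2$ block matrix over $R$, this state has diagonal blocks $x\,\cN_{\mathbf{p}}(\ketbra{0})^{\otimes n}$ and $(1-x)\,\cN_{\mathbf{p}}(\ketbra{1})^{\otimes n}$, and off-diagonal blocks proportional to $\cN_{\mathbf{p}}(\ketbra{0}{1})^{\otimes n}$. Because the latter is anti-diagonal, its $n$-fold tensor power has support only on entries $(s,t)$ with $t = \bar s$; consequently the full state block-diagonalises in the basis $\{|s\rangle|0\rangle,\,|\bar s\rangle|1\rangle\}_{s \in \{0,1\}^n}$ into $2^n$ independent $2\times 2$ blocks.

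A direct calculation then shows that, for a string $s$ of weight $w$, the associated block is
\begin{align*}
    \begin{pmatrix} x\,\beta(w) & \sqrt{x(1-x)}\,\alpha(w) \\ \sqrt{x(1-x)}\,\alpha(w) & (1-x)\,\beta(w) \end{pmatrix}
\end{align*}
with $\beta(w) = (p_0+p_3)^{n-w}(p_1+p_2)^w$ and $\alpha(w) = (p_0-p_3)^{n-w}(p_1-p_2)^w$. Its eigenvalues, obtained from the trace and determinant, are $\bigl(\beta(w) \pm \sqrt{\Delta(w)}\bigr)/2$ with $\Delta(w) = (2x-1)^2 \beta(w)^2 + 4x(1-x)\alpha(w)^2$ (using $1 - 4x(1-x) = (2x-1)^2$), matching the stated $y_\pm(w)$ and $\Delta(w)$. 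Summing $-y_\pm(w)\log y_\pm(w)$ over all $2^n$ blocks---i.e.\ weighting each weight class by $\binom{n}{w}$---produces $S(\cN_{\mathbf{p},c}^{\otimes n}(\phi_n^x))$, and subtracting from the receiver entropy gives the claimed formula. The main obstacle will be the bookkeeping in the block-diagonalisation: I must verify that pairing $|s\rangle|0\rangle$ with $|\bar s\rangle|1\rangle$ partitions the basis into $2^n$ disjoint doubletons, and that the apparent $w\leftrightarrow n-w$ asymmetry in the lower-right diagonal entry---which is initially computed at $\bar s$ of weight $n-w$---collapses via $(p_0+p_3)^{w(\bar s)}(p_1+p_2)^{n-w(\bar s)} = \beta(w)$ to the symmetric prefactor. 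Once these two symmetries are in place, the $2\times 2$ spectral computation is elementary.
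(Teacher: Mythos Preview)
Your proposal is correct and follows essentially the same route as the paper's proof in Appendix~\ref{app:ci-pauli-rep}: both use the purification $\sqrt{x}\,|0\rangle^{\otimes n+1}+\sqrt{1-x}\,|1\rangle^{\otimes n+1}$, observe that the off-diagonal block $\cN_{\mathbf{p}}(\ketbra{0}{1})^{\otimes n}$ is purely anti-diagonal, and pair each computational basis vector $|s\rangle$ with its bitwise complement $|\bar s\rangle$ to reduce the joint output state to $2^n$ independent $2\times2$ blocks whose eigenvalues are $y_\pm(w)$. The symmetry $d_{\bar s}=a_s$ (in the paper's notation) that you flag as the main bookkeeping point is exactly the simplification the paper invokes as $a_i=d_{\bar i}$, $c_{\bar i}=b_i$.
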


Choosing $n=1$ and $x=1/2$ in \Cref{thm:pauli_ci_rep} corresponds to using the completely mixed state $\frac{1}{2}\one$ at the input of a Pauli channel $\cN_{\mathbf{p}}$ with $\mathbf{p}=(p_0,p_1,p_2,p_3)$.
This input state yields the maximal single-letter coherent information for a Pauli channel,\footnote{This is straightforward to prove for the depolarizing channel $\mathbf{p}=(1-p,p/3,p/3,p/3)$, but has only been confirmed numerically for general Pauli channels.} and thus the largest achievable rate of quantum information transmission through Pauli channels using product codes.
This is also known as the \emph{hashing bound} \cite{bennett1996mixed}:
\begin{align}
    I_c(\cN_{\mathbf{p}}) = 1 - H(\mathbf{p}).
    \label{eq:hashing-bound}
\end{align}

\subsubsection{2-Pauli channel}
\label{sec:2-pauli}

A $2$-Pauli channel is a special case of Pauli channels with only $X$ and $Z$ errors occurring with equal probability, defined as 
\begin{align} 
\cN_{\twoP{p}}(\rho) = (1-p) \rho + \frac{p}{2} \left( X\rho X + Z \rho Z \right).
\end{align}
Notably, and unlike the BB84 and depolarizing channels, repetition codes and more general concatenated repetition codes (also called cat codes) for the $2$-Pauli channel do not exceed the hashing bound \eqref{eq:hashing-bound} \cite{smith2006degenerate,fern2008lower}.
This channel has thus served as a challenging benchmark for code ans\"{a}tze giving superadditive coherent information.
The only example of superadditivity for the 2-Pauli channel known before our work are the codes derived in \cite{fern2008lower}, consisting of a suitable cat code concatenated repeatedly with the 5-qubit code from \cite{laflamme1996perfect}, requiring well over $10^9$ channel input systems to achieve a threshold higher than the hashing bound.

Our method of optimizing over permutation-invariant input states yields codes that significantly improve the quantum capacity threshold of the $2$-Pauli channel over those found in \cite{fern2008lower}.
Using our optimization method we find improved thresholds peaking at $n=24$.
These numerical results are summarized in \Cref{fig: 2paulithreshold}.
The coherent information values for $n=9,12,15,18,21,24$ in the plot are achieved using the input state\footnote{The probabilities $x_1,x_2$ in the state $\rho_{(n)} = x_1\psi_1^{\otimes n} + x_2\psi_2^{\otimes n}$ found in our optimization are actually equal to $1/2\pm 10^{-4}$ due to numerical noise.
For our analysis in this and the next section, we set these probabilities equal to $1/2$ for simplicity, and use the fact that the coherent information is continuous in the state \cite{winter2015tight}.}
\begin{align}
    \rho_{(n)} &= \frac{1}{2} \left( \psi_1^{\otimes n} + \psi_2^{\otimes n}\right) \label{eq:non-orthogonal-code}\\
    \text{with}\quad |\psi_1\rangle &= \begin{pmatrix} -0.0330-0.4220i\\0.9060\end{pmatrix}\\
     |\psi_2\rangle &= \begin{pmatrix} 0.0737-0.9030i\\0.4232\end{pmatrix}.
\end{align}
This code is also listed in \Cref{tab:perm-inv-codes}.
The individual states $\psi_1$ and $\psi_2$ are non-orthogonal, $|\langle\psi_1|\psi_2\rangle|=0.7645$, and hence the code in \eqref{eq:non-orthogonal-code} is qualitatively different from a repetition code such as $\phi_n$.
This is visualized in \Cref{fig:2pauli-code-bloch-sphere} using the Bloch sphere representation of qubit states. 

\begin{figure}[H]
\begin{center}
\includegraphics[scale = 0.13]{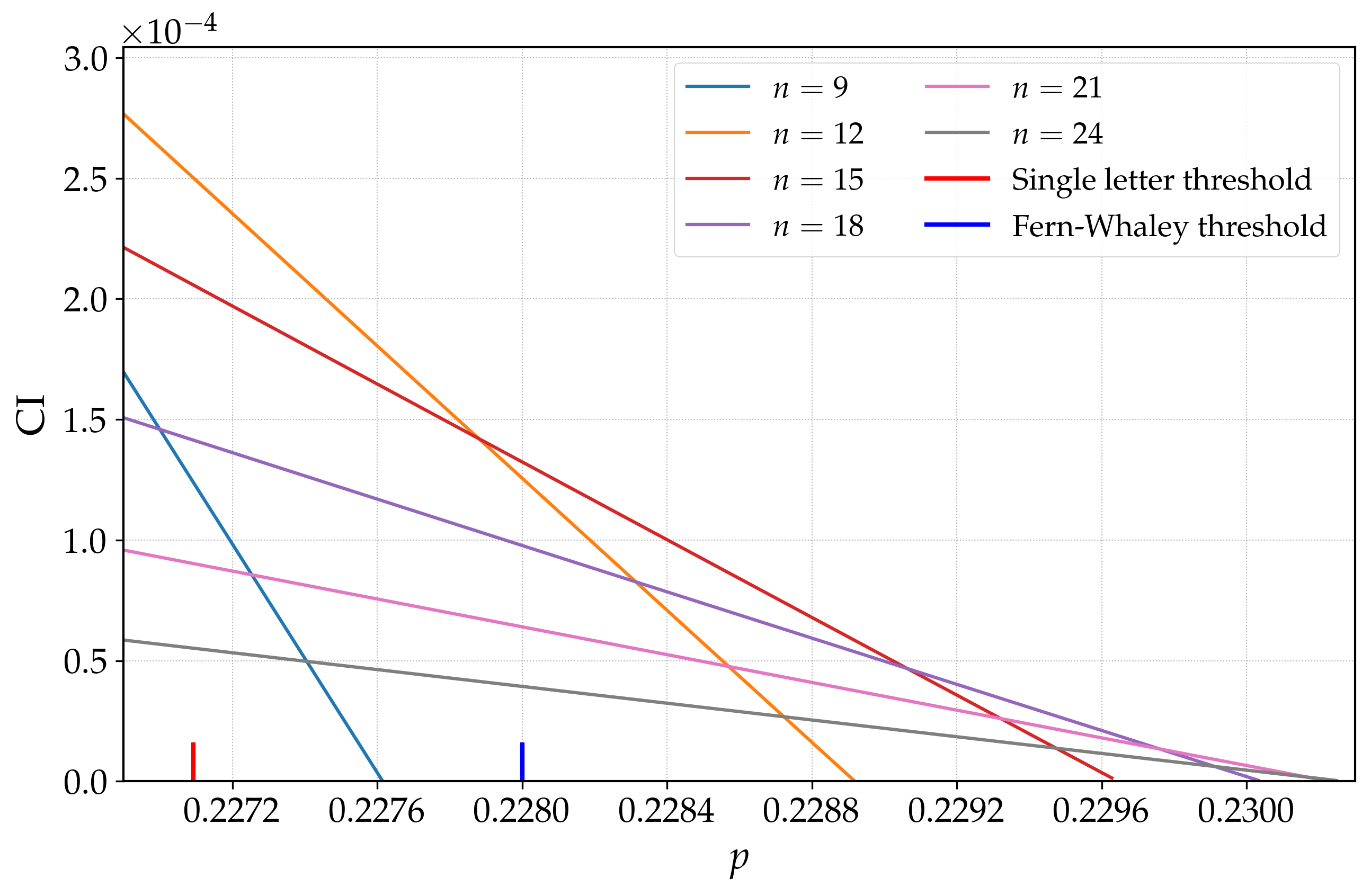}
\end{center}
\caption{Coherent information per channel use for the $2$-Pauli channel $\cN_{\twoP{p}}$ as a function of the error probability $p$. Solid lines represent the achievable rates for numerically optimized permutation-invariant codes for $n=9,12,15,18,21,24$ channel copies. The red vertical line on the $x$-axis is the hashing bound and the blue vertical line indicates the best known threshold using concatenated codes from \cite{fern2008lower}. The threshold error probability increases with $n$ up to $n =24$ and then decreases for higher $n$. 
}
\label{fig: 2paulithreshold}
\end{figure}

\begin{figure}[H]
	\centering
	\includegraphics[width=0.3\linewidth]{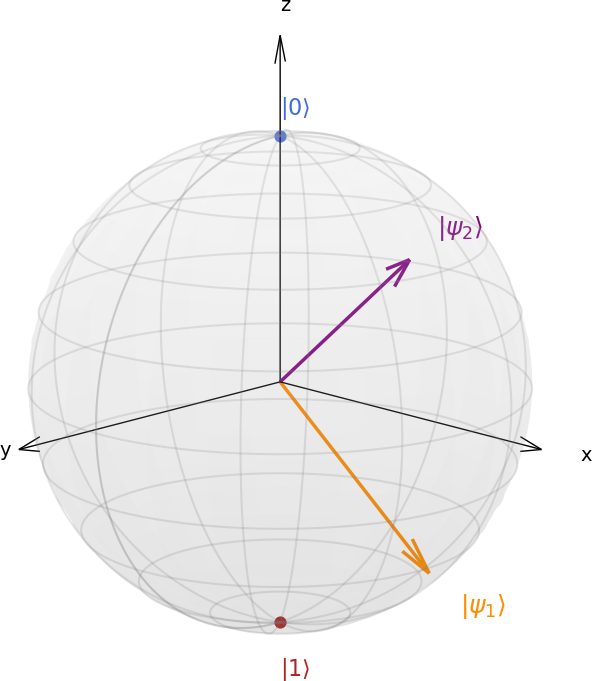}
	\caption{Bloch sphere showing the code states $|0\rangle$ and $|1\rangle$ (with Bloch coordinates $(0,0,\pm 1)$) of the repetition code $\phi_n$, together with the two states $\psi_i$ from \eqref{eq:non-orthogonal-code} (with Bloch coordinates $(-0.0597, 0.7647, -0.6416)$ and $(0.0624, 0.7643, 0.6418)$, respectively) that make up a non-orthogonal repetition code for the $2$-Pauli channel with a higher coherent information (see also \Cref{fig: 2paulithreshold,fig:2pauli-CI-by_partition}).}
	\label{fig:2pauli-code-bloch-sphere}
\end{figure}

We can attempt to understand the difference in performance between repetition codes and non-orthogonal codes such as \eqref{eq:non-orthogonal-code} by varying the overlap of the two states making up the code.
To this end, we consider two pure states $\chi_1,\chi_2$ with some fixed angle $\phi = \arccos|\langle \chi_1|\chi_2\rangle|$, and form a channel input state that is unitarily equivalent to $\frac{1}{2}\left(\chi_1^{\otimes n} + \chi_2^{\otimes n}\right)$ via a product unitary $U^{\otimes n}$ for a given qubit unitary $U$.
This unitary should be optimized in order to achieve the best coherent information in this ansatz.
Note that the choice $\phi=\pi/2$ gives a code that is unitarily equivalent to the repetition code $\phi_n$.
Fixing $n=9$, we optimize the coherent information over $U$ for varying angles $\phi$ and compare the coherent information contributions from the different $\GL(d)$-irreps according to \Cref{thm: coherent-info-pure-iid}.
We will see that certain irreps yield positive coherent information for suitable choices of $\phi \neq \pi/2$ that outweigh the negative contributions from other irreps, resulting in an overall positive coherent information as shown in \Cref{fig: 2paulithreshold}.

We now discuss the details of this approach.
Fix a noise level $p = 0.2271$ for the 2-Pauli channel, and consider the following ansatz for the input state:
\begin{align}
    \rho_U(\phi) = \frac{1}{2}\left( (U|0\rangle\langle 0|U^\dagger)^{\otimes n} + ( U|\psi_\phi\rangle\langle\psi_\phi|U^\dagger)^{\otimes n} \right),\label{eq:ansatz}
\end{align}
where $U$ is a qubit unitary (to be optimized over), and 
\begin{align}
    |\psi_\phi\rangle = \cos(\phi)|0\rangle + e^{i\theta} \sin(\phi) |1\rangle
\end{align}
with $\phi\in[0,\pi/2]$ and some fixed angle $\theta$.
We choose $\theta = -1.5249$ so that the ansatz \eqref{eq:ansatz} reproduces the optimal state \eqref{eq:non-orthogonal-code} in \Cref{fig: 2paulithreshold} for $n=9$, $\phi=0.7005$ and $U$ chosen appropriately.
The choice $\phi=\pi/2$ in \eqref{eq:ansatz} gives a repetition code in the usual sense (since $\ketbra{\psi_{\pi/2}}=\ketbra{1}$), that is, an equal superposition of i.i.d.~copies of two orthogonal pure states.
This ansatz thus parametrizes a ``non-orthogonal repetition code'', with $\phi$ controlling the angle between the two code states.

We compute the coherent information 
\begin{align}
I_c(\phi) = \max_U I_c(\cN_{2P}^{\otimes n},\rho_U)
\end{align}
for $\phi\in[0,\pi/2]$ and $n=9$.
Using the formula \eqref{eq:Ic_simplified}, we can write $I_c(\phi) = \sum_{\lambda\in\Lambda(n,d)} I_c^{(\lambda)}$ with
\begin{align} 
	I_c^{(\lambda)} = c_\lambda\left( S\left(\sigma_\lambda^{\cN}\right)-S\left(\sigma_\lambda^{\cN_c}\right)\right)
	\label{eq:I_c^lambda}
\end{align} 
being the weighted contribution to the coherent information in the irrep block $\lambda$ (see \Cref{thm: coherent-info-pure-iid} for definitions of the quantities appearing in this expression).
For $n=9$ the irreps appearing in this formula are $\lambda=(9,0), (8,1), (7,2), (6,3), (5,4)$.
The results of this analysis are plotted in \Cref{fig:2pauli-CI-by_partition}.
We see that for angles $0.65 \lesssim \phi \lesssim 1.2$ the two irreps $\lambda=(9,0), (8,1)$ contribute coherent information $I_c^{(\lambda)}$ that offsets the negative contributions from the other irreps, resulting in a total positive coherent information.
For the usual unweighted repetition code at $\phi=\pi/2$ the negative CI-contributions from the other irreps dominate.

\begin{figure}[H]
\begin{center}
\includegraphics[scale = 0.2]{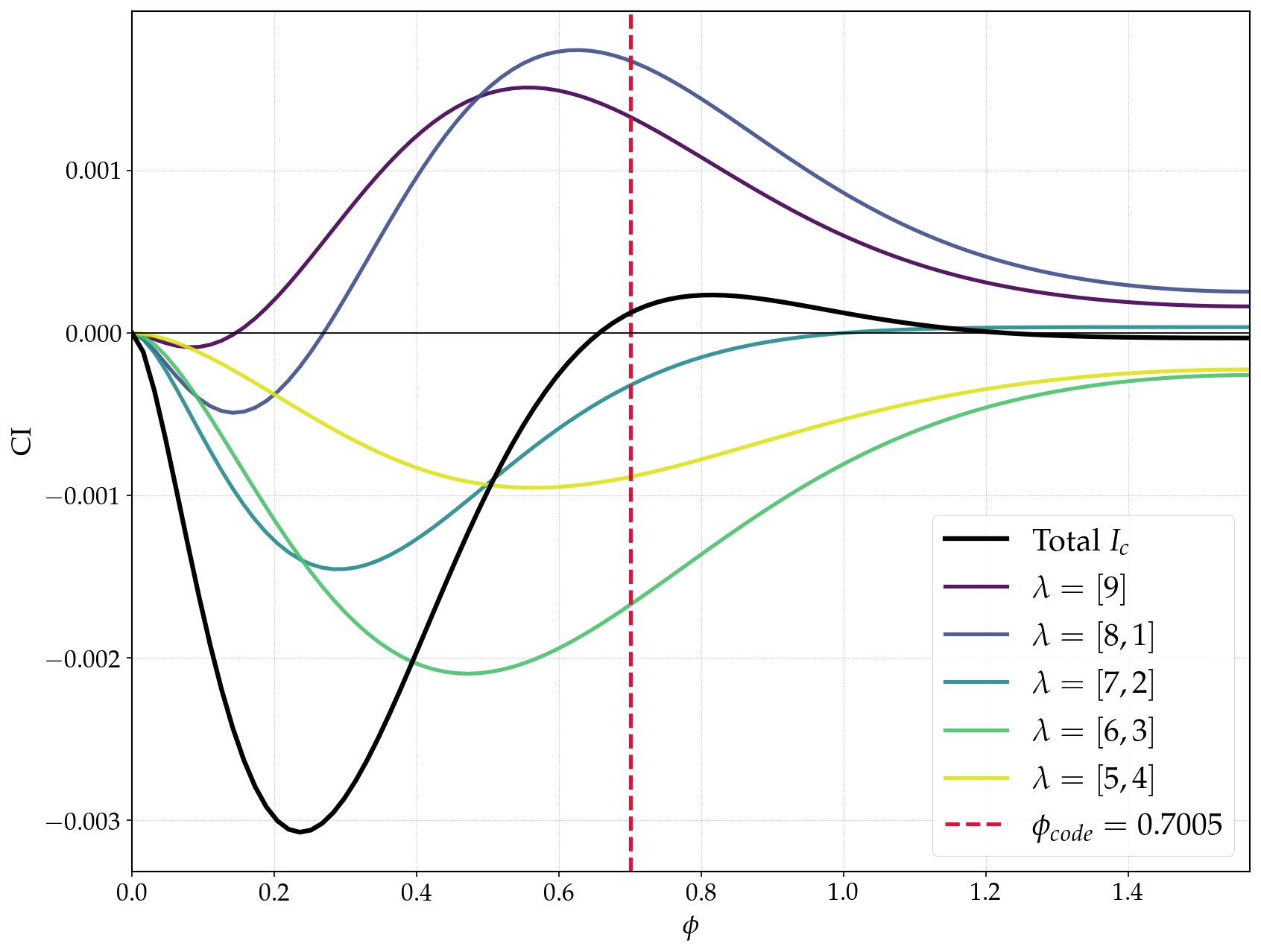}
\end{center}
\caption{Plot of the different irrep-contributions $I_c^{(\lambda)}$ (defined in \eqref{eq:I_c^lambda}) to the coherent information of $n=9$ copies of the 2-Pauli channel at the noise level $p=0.2271$, evaluated at the state \eqref{eq:ansatz}.
The red dashed line marks the value of $\phi$ that gives a code equivalent to the one in \eqref{eq:non-orthogonal-code}.}
\label{fig:2pauli-CI-by_partition}
\end{figure}

\subsubsection{BB84 channel}
\label{sec:BB84}

The BB84 channel is a special Pauli channel of the form 
\begin{align} 
\cN_{\bb{p}}(\rho) = (1-p)^2 \rho + (p-p^2) X\rho X + p^2 Y \rho Y + (p-p^2) Z \rho Z.
\end{align}
It corresponds to independent bit flip and phase flip errors occurring with equal probability. 
This is a common noise model in the analysis of quantum key distribution \cite{bennett1984cryptography,smith2008structured} and error correcting codes \cite{dennis2002topological,niwa-lee-css-codes-ci}.

Once again our permutation-invariant ansatz yields codes that greatly improve the known quantum capacity thresholds for the BB84 channel.
These thresholds peak at $n=18$, which we summarize in \Cref{fig: bb84threshold}.
The coherent information values plotted for $n=6, 9, 12, 15, 18$ are achieved on the state
\begin{align}
    \rho_{(n)} &= \frac{1}{2}\left( \psi_1^{\otimes n} + \psi_2^{\otimes n}\right) \label{eq:non-orthogonal-code-bb84}\\
    \text{with}\quad |\psi_1\rangle &= \begin{pmatrix}
        0.0207+0.3739i\\
        0.9272
    \end{pmatrix}\\
    |\psi_2\rangle &= \begin{pmatrix}
        0.0511-0.9258i\\
        -0.3745
    \end{pmatrix},
\end{align}
which is also listed in \Cref{tab:perm-inv-codes}.
We have $|\langle\psi_1|\psi_2\rangle|=0.6934$, and hence this is another example of a non-orthogonal repetition code giving significantly higher values of coherent information than orthogonal repetition codes.

\begin{figure}[H]
\begin{center}
\includegraphics[scale = 0.13]{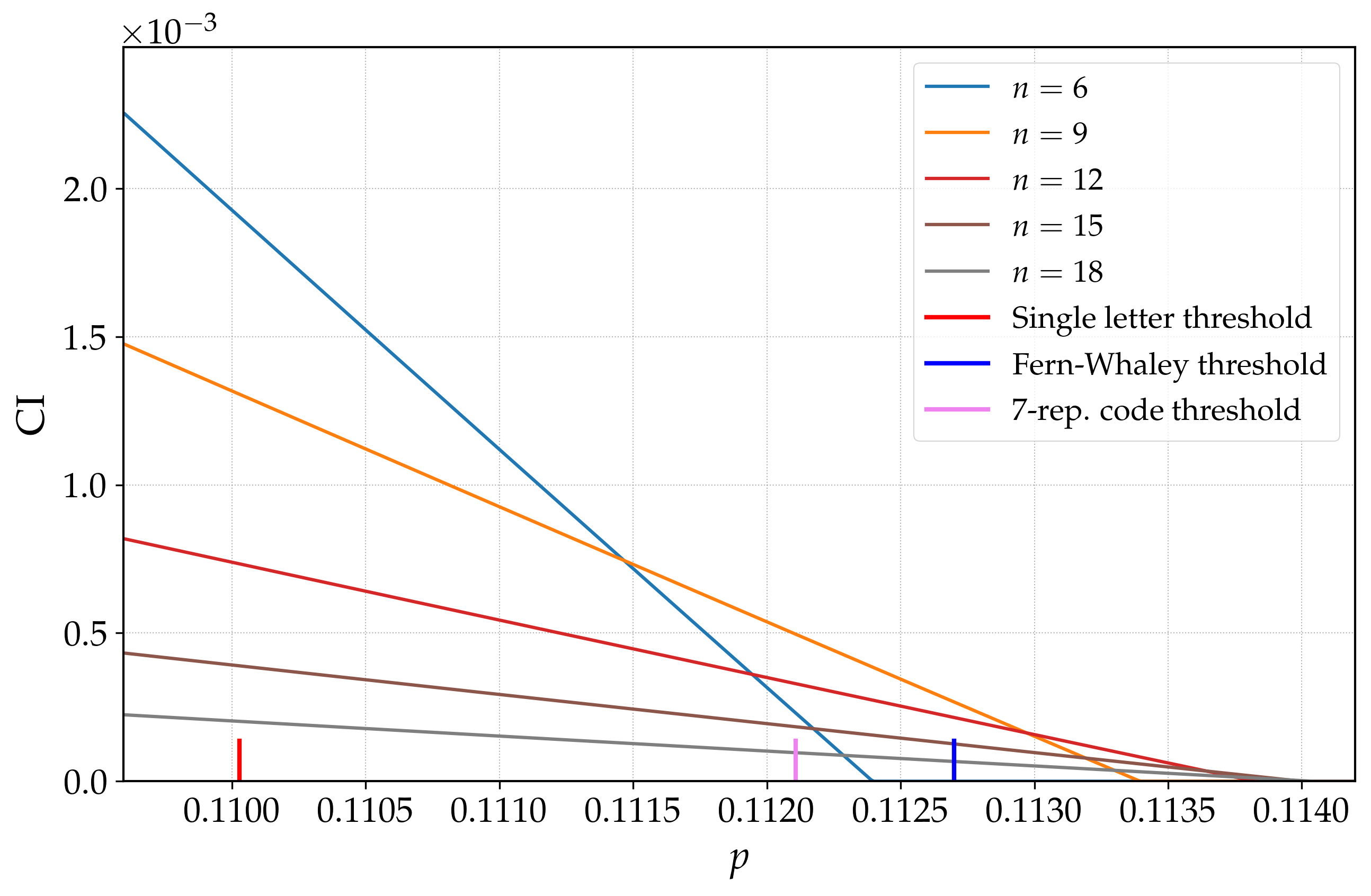}
\end{center}
\caption{Coherent information per channel use for the BB84 channel $\cN_{\bb{p}}$ as a function of the error probability $p$. Solid lines represent the achievable rates for numerically optimized permutation-invariant codes for $n = 6,9,12, 15, 18$ channel copies. The red vertical line on the $x$-axis is the hashing bound, the pink vertical line is the best weighted repetition code threshold (computed via \Cref{thm:pauli_ci_rep}), and the blue vertical line indicates the best known threshold using concatenated codes from \cite{fern2008lower}. We find that the threshold increases with the number of channel copies $n$ up to $n = 18$ and then decreases.
}
\label{fig: bb84threshold}
\end{figure}

We repeat our analysis from \Cref{sec:2-pauli} to study how well these non-orthogonal codes perform for the BB84-channel as a function of the angle $\phi = \arccos|\langle\psi_1|\psi_2\rangle|$ by considering the coherent information contributions within the different irrep blocks for fixed $n=9$.
This time, in the ansatz
\begin{align}
    \rho_U(\phi) &= \frac{1}{2}\left( (U|0\rangle\langle 0|U^\dagger)^{\otimes n} + ( U|\psi_\phi\rangle\langle\psi_\phi|U^\dagger)^{\otimes n} \right) \label{eq:ansatz-bb84}\\
    |\psi_\phi\rangle &= \cos(\phi)|0\rangle + e^{i\theta} \sin(\phi) |1\rangle
\end{align}
we choose $\theta = 1.5922$, so that \eqref{eq:ansatz-bb84} reproduces the optimal state \eqref{eq:non-orthogonal-code-bb84} for $\phi = \arccos|\langle\psi_1|\psi_2\rangle| = 0.8046$ and $U$ chosen accordingly.

We carry out the same optimization of the resulting coherent information over $U$ for $n=9$ as in \Cref{sec:2-pauli}, and plot the coherent information contributions in the different $\lambda$-sectors for $\lambda = (9,0),$ $(8,1),$ $(7,2),$ $(6,3),$ $(5,4)$.
The results are plotted in \Cref{fig:bb84-CI-by-partition}.
Similar to before, the coherent information contributions from the irreps $(9,0)$ and $(8,1)$ offset the negative contributions from the other irreps for angles $0.6\lesssim \phi\leq \pi/2$.

\begin{figure}[H]
\begin{center}
\includegraphics[scale = 0.2]{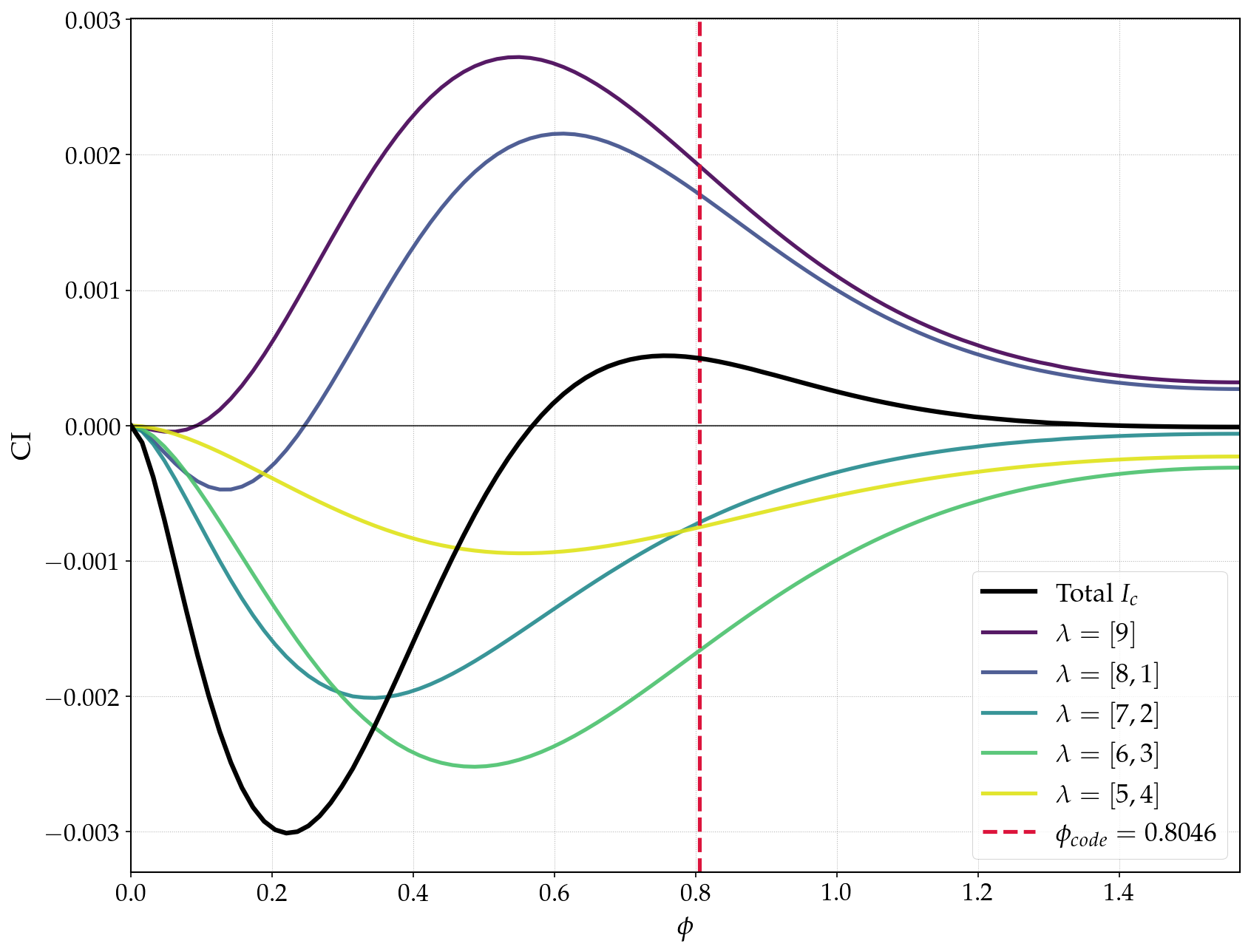}
\end{center}
\caption{Plot of the different irrep-contributions $I_c^{(\lambda)}$ (defined in \eqref{eq:I_c^lambda}) to the coherent information of $n=9$ copies of the BB84 channel at the noise level $p=0.112105$, evaluated at the state \eqref{eq:ansatz-bb84}.
The red dashed line marks the value of $\phi$ that gives a code equivalent to the one in \eqref{eq:non-orthogonal-code-bb84}.}
\label{fig:bb84-CI-by-partition}
\end{figure}

\subsubsection{Pauli channel simplex}\label{subsubsec: pauli-simplex}

The results of \Cref{sec:2-pauli,sec:BB84} suggest that non-orthogonal repetition codes can significantly improve the coherent information of certain Pauli channels over that of orthogonal codes.
This prompts us to study the performance of such non-orthogonal repetition codes on the entire Pauli channel simplex for a fixed number of channel copies using the approach in \eqref{eq:ansatz} in \Cref{sec:2-pauli}. 

To this end, we use the following parametrization for the Pauli channel simplex that was also used in \cite{bausch2021errorthresholds}.
As before, we denote a general Pauli channel by $\cN_{\mathbf{p}}(\rho)  = p_0 \rho + p_1 X \rho X+ p_2 Y \rho Y+ p_3 Z \rho Z$, where $\mathbf{p} = (p_0, p_1, p_2, p_3)$ is a probability distribution.
Since we are mainly interested in quantum capacity thresholds in this section, it is useful to divide the entire family of Pauli channels into one-parameter families $\cN_{\mathbf{p_x}}$ where $\mathbf{p_x} = (1-x, xq_1, xq_2, xq_3)$ for a fixed probability distribution $(q_1,q_2,q_3)$.
It was shown in \cite{bausch2021errorthresholds} that for each probability distribution $(q_1, q_2,q_3)$ the one-parameter family of channels $x \to \cN_{\mathbf{p_x}}$ has a unique value or \emph{threshold} for the noise parameter, defined as the smallest $x$ such that the quantum capacity of $\cN_{\mathbf{p}_y}$ vanishes for all $y \geq x$. 

The set of antidegradable Pauli  channels $\cN_{\mathbf{p}}$ is known to consist of exactly those channels that satisfy the inequality \cite{paddock2017characterization} 
\begin{align} 
    1 \geq 2(p_0^2 + p_1^2+ p_2^2+p_3^2) - 8 \sqrt{p_0p_1p_2p_3}.
\end{align}
We restrict our attention to those Pauli channels within the probability simplex $\{(q_1,q_2, q_3) : q_i \geq 0, \sum_i q_i = 1\}$ that are not anti-degradable and closest to the origin (see \cite[Fig.~1]{bausch2021errorthresholds} for a visualization of this set of channels). We parametrize such channels using spherical coordinates: 
\begin{align} \label{eq:spherical-pauli-param}
(q_1, q_2, q_3) \in \{(\sin^2 \theta \cos^2 \phi, \sin^2 \theta \sin^2 \phi, \cos^2 \theta): \theta, \phi \in \{0, \delta, 2 \delta, \cdots, \pi/2.\}\}.
\end{align}
For $\delta = 2^{-6} \pi$, this gives a covering of the Pauli channel simplex with $32 \times 32$ rays of length $\frac{1}{2}$ starting at the origin. 

We use $n = 5,10$ and $15$ copies of each channel and once again use the ansatz \begin{align} 
    \rho_U(\phi) &= \frac{1}{2}\left( (U|0\rangle\langle 0|U^\dagger)^{\otimes n} + ( U|\psi_\phi\rangle\langle\psi_\phi|U^\dagger)^{\otimes n} \right) \label{eq:ansatz-simplex}\\
    |\psi_\phi\rangle &= \cos(\phi)|0\rangle + e^{i\theta} \sin(\phi) |1\rangle
\end{align}
from \Cref{sec:2-pauli}. We fix $\phi$ to be one of $\frac{\pi}{8},\frac{\pi}{4}$ and $\frac{\pi}{2}$.
Along each ray with the parametrization in \eqref{eq:spherical-pauli-param}, we first determine the single-letter threshold of the one-parameter family $\cN_{\mathbf{p}_x}$ with $\mathbf{p}_x = (1-x, xq_1, xq_2, xq_3)$, i.e., the value of $x$ at which the single letter coherent information of this family $\cN_{\mathbf{p}_x}$ is zero.
This can easily be calculated using the hashing bound \eqref{eq:hashing-bound}.
At the resulting value of $x$ at the single-letter threshold, we optimize the coherent information of the ansatz \eqref{eq:ansatz-simplex} over the unitary $U$ and angle $\theta$ for each $\phi = \frac{\pi}{8},\frac{\pi}{4},\frac{\pi}{2}$ and $n = 5,10,15$ copies of each channel family $\cN_{\mathbf{p}_x}$ with $\mathbf{p}_x = (1-x, xq_1, xq_2, xq_3)$.  For this code found at the hashing bound along each ray given by \eqref{eq:spherical-pauli-param}, we compute the threshold along the same ray. 

We plot the Pauli channel simplex for each $n = 5,10,15$ and $\phi = \frac{\pi}{8},\frac{\pi}{4},\frac{\pi}{2}$ in \Cref{fig: pauli-simplex-thresholds}. Each point in each of the plots is colored based on the distance of the threshold of the best code of the form \eqref{eq:ansatz-simplex} to the single-letter threshold along the same ray. Points in red have thresholds higher than the hashing bound, while points in blue have a threshold lower than the hashing bound. We find that thresholds for the non-orthogonal repetition codes with $\phi = \pi/4$ increase across the Pauli channel simplex relative to the hashing bound as $n$ increases, and $n =6$ is the lowest number of channel copies for which we see superadditivity of the coherent information for the $2$-Pauli channel.

\begin{figure}[H]
\begin{center}
\includegraphics[width=\textwidth]{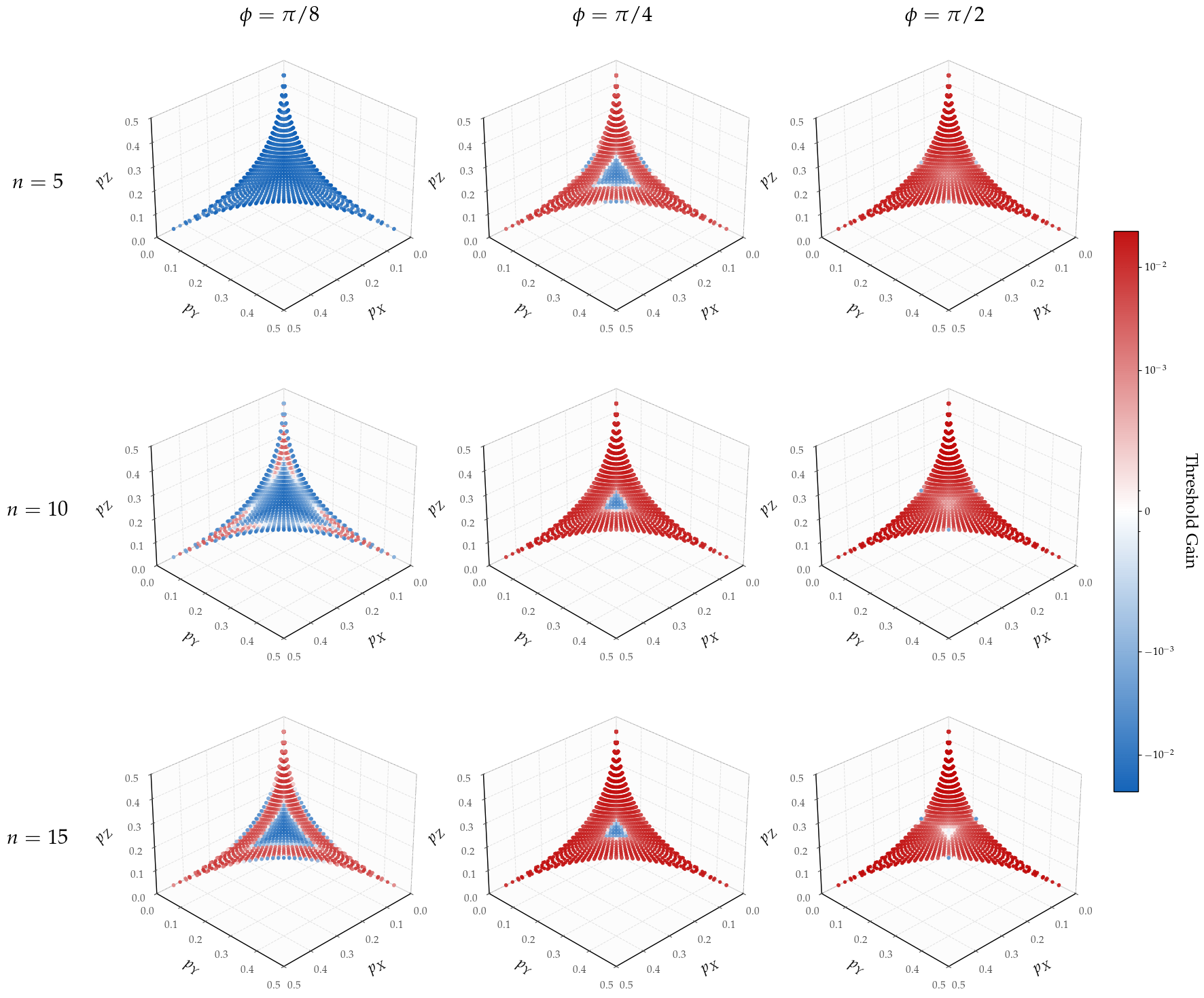}
\end{center}
\caption{Performance of permutation-invariant codes from Eq.~\eqref{eq:ansatz-simplex} with $\phi = \frac{\pi}{8},\frac{\pi}{4},\frac{\pi}{2}$ across the Pauli channel simplex for $n = 5,10,15$ channel copies. Each panel visualizes the simplex of Pauli channels via their error probabilities $(q_1, q_2, q_3)$, sampled along rays from the origin as described in Section~\ref{subsubsec: pauli-simplex}. For each point, the color indicates the difference between the error threshold achieved by the optimized $n$-copy code and the single-letter hashing bound. Red regions signify a threshold higher than the hashing bound, indicating superadditivity of the coherent information, whereas blue regions signify a threshold below the hashing bound. The plots for $\phi=\pi/4$ show that as $n$ increases, the region of superadditivity grows. The plots for $\phi=\pi/2$, which correpond to the usual repetition codes with orthogonal states, show that one does not observe superadditive coherent information using these codes for the $2$-Pauli channel. At $n= 10$, one even sees superadditivity for some channels for $\phi = \pi/8.$}
\label{fig: pauli-simplex-thresholds}
\end{figure}

\subsubsection{Comments on the depolarizing channel}
\label{sec:depolarizing}

The qubit depolarizing channel
\begin{align}
	\cN_{\dep{p}}(\rho) = (1-p)\rho + \frac{p}{3}\left(X\rho X + Y\rho Y + Z\rho Z\right)
\end{align}
is one of the most thoroughly studied Pauli channels, and one of the first quantum channels for which superadditivity was observed \cite{shor1996syndrome,divincenzo1998capacity}.
The depolarizing channel family is the unique family of qubit-qubit channels with full unitary covariance, $\cN_{\dep{p}}(U\cdot U^\dagger) = U\cN_{\dep{p}}(\cdot)U^\dagger$ for all $U\in \cU_2$.

Superadditivity of coherent information for the depolarizing channel is achieved on different code families, including repetition codes \cite{shor1996syndrome}, mixtures of code states of certain quantum error-correcting (QEC) codes \cite{divincenzo1998capacity,smith2006degenerate,fern2008lower} and graph states defined in terms of certain tree graphs \cite{bausch2021errorthresholds}.
A particular example of a QEC-based quantum code is the concatenated repetition code (also called cat code or Shor-type code) of the form
\begin{align}
	\chi_{k,m} &= \frac{1}{2}\left( \ketbra{+_k}^{\otimes m} + \ketbra{-_k}^{\otimes m}\right) \label{eq:cat-code}\\
	|\pm_k\rangle &= \frac{1}{\sqrt{2}}\left(|0\rangle^{\otimes k} \pm |1\rangle^{\otimes k}\right).
\end{align}
Observe that the cat code \eqref{eq:cat-code} does \emph{not} have full permutation symmetry like weighted repetition codes \eqref{eq:weighted-rep-code} or the code in \eqref{eq:non-orthogonal-code}.
Instead, the symmetry group of the cat code $\chi_{k,m}$ is the wreath product $\kS_k\wr \kS_m$ (which can be defined using semidirect products).
Hence, our optimization ansatz developed in \Cref{sec:coherent-information-symmetries} does not include cat codes, which are known to extend the quantum capacity threshold of the depolarizing channel beyond that of repetition codes \cite{divincenzo1998capacity,smith2006degenerate,fern2008lower}.
This opens an avenue of future work which we discuss in \Cref{subsec: open-questions}.

For the depolarizing channel $\cN_{\dep{p}}$ with values of $p\approx 0.19$, our optimization method yields the unweighted repetition code as the optimal permutation-invariant code.
Interestingly, the non-orthogonal repetition codes studied in the previous \Cref{sec:2-pauli,sec:BB84,subsubsec: pauli-simplex} do \emph{not} seem to provide any advantage over orthogonal repetition codes.
To illustrate this, we optimize the coherent information of $n=5$ copies of the depolarizing channel $\cN_{\dep{0.19}}$ over states of the form \eqref{eq:ansatz-bb84} (note that the optimization over the unitary $U$ is not necessary due to the unitary covariance of the depolarizing channel).
For varying angles $\phi$ we plot the different irrep-contributions $I_c^{(\lambda)}$ in \Cref{fig:depol-CI-by-partition}.
Evidently, the negative CI-contributions dominate except for a neighborhood around $\phi=\pi/2$ in which the orthogonal repetition code yields positive coherent information \cite{shor1996syndrome}.
Similar results also hold for higher $n$.
In \Cref{subsec: open-questions} we outline a generalization of our ansatz to codes with wreath product symmetry such as \eqref{eq:cat-code}, which we expect to yield improved quantum capacity thresholds for the depolarizing channel.

\begin{figure}[H]
	\begin{center}
		\includegraphics[width=0.75\textwidth]{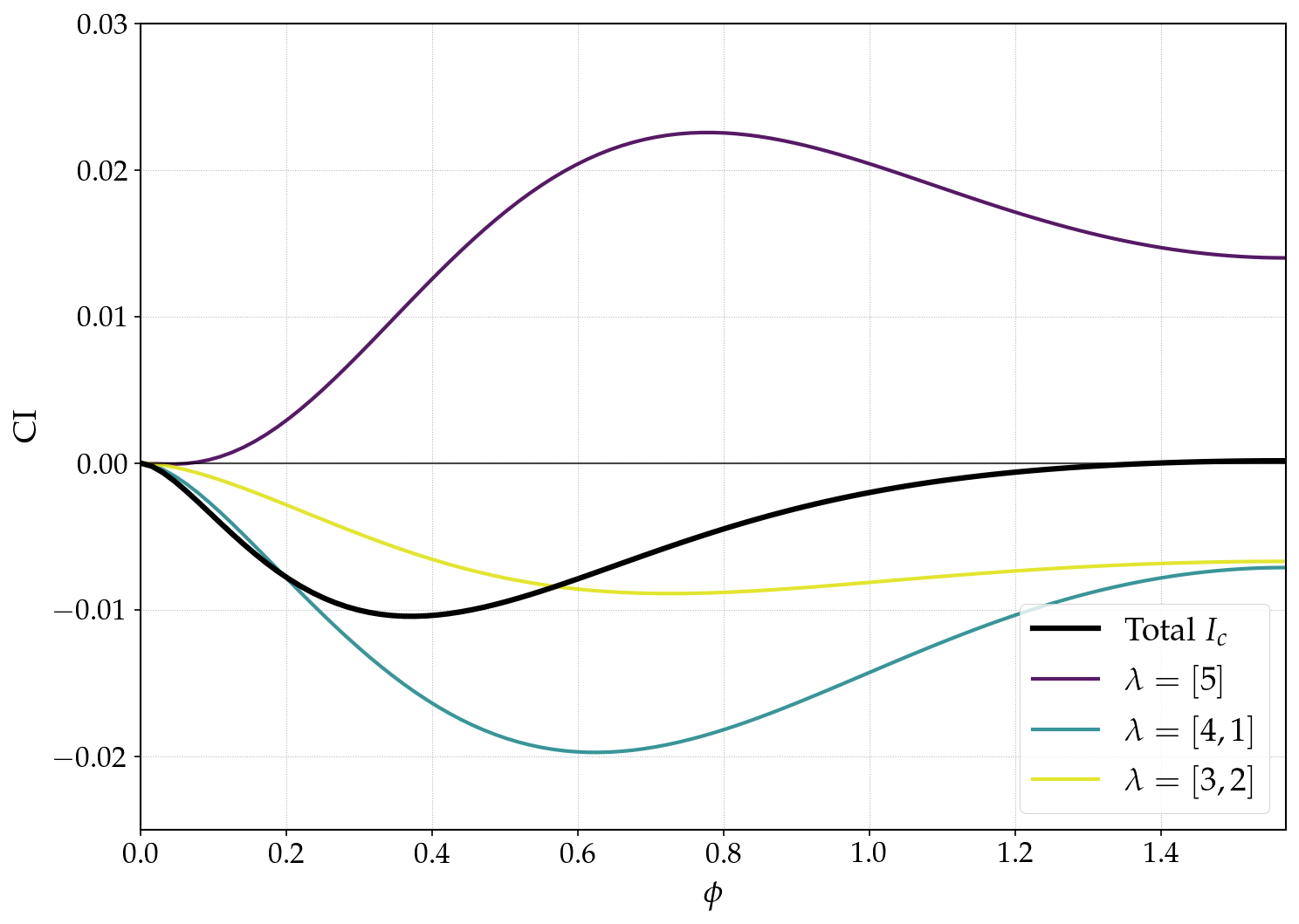}
	\end{center}
	\caption{Plot of the different irrep-contributions $I_c^{(\lambda)}$ (defined in \eqref{eq:I_c^lambda}) to the coherent information of $n=5$ copies of the depolarizing channel at the noise level $p=0.19$, evaluated at the state \eqref{eq:ansatz-bb84}.}
	\label{fig:depol-CI-by-partition}
\end{figure}

\subsection{Dephrasure channel}
\label{sec:dephrasure}

The dephrasure channel is a concatenation of a $Z$-dephasing channel and an erasure channel \cite{leditzky2018dephrasure}. 
It is defined for $p,q \in [0,1]$ as 
\begin{align}
    \cN_{p,q}(\rho) = (1-q)((1-p) \rho + p Z \rho Z) + q \ketbra{e}{e},
\end{align}
where $|e\rangle$ is an erasure flag orthogonal to the input space.
The $Z$-dephasing and erasure channels by themselves are degradable and thus have additive coherent information (see \Cref{sec:degradable-antidegradable}), but the concatenation of these two simple channels shows superadditivity of coherent information \cite{leditzky2018dephrasure,bausch2020neural,siddhu2021positivity}.
The results of \cite{leditzky2018dephrasure} were obtained using weighted repetition codes as in \eqref{eq:weighted-rep-code}, while \cite{bausch2020neural} employed codes obtained using neural networks to compute components of the state vectors.
We give a high-level overview of this construction in \Cref{app: nn-damping-dephasing}, and refer to \cite{bausch2020neural} for details.

Our optimization ansatz once again yields improved quantum communication rates for the dephrasure channel $\cN_{p,q}$ for certain values of $(p,q)$.
Similar to \cite{leditzky2018dephrasure,bausch2020neural}, we consider values $q\in \lbrace 0.1,0.2,0.3,0.4\rbrace$ of the erasure probability, and for each $q$-value we sweep intervals of the dephasing probability $p$ around the known quantum capacity thresholds.
The numerical results are depicted in \Cref{fig:dephrasure}.
We see that for $q=0.3,0.4$ our permutation-invariant codes improve over both the weighted repetition codes $\phi_k^x$ as well as the neural network codes from \cite{bausch2020neural}.
An example of such a code is the following (listed also in \Cref{tab:perm-inv-codes}), achieving a normalized coherent information of $5.4524\times 10^{-5}$ for $(p,q)=(0.08,0.4)$ and $n=9$:
\begin{align}
    \rho_{(n)} &= x_1 \psi_1^{\otimes n} + x_2 \psi_2^{\otimes n}\\
    \text{with}\quad x_1 &= 0.9465, \quad |\psi_1\rangle = \begin{pmatrix}
        0.0031 + 0.0007i\\1
    \end{pmatrix}\\
    x_2 &= 0.0535, \quad |\psi_2\rangle = \begin{pmatrix}
        -0.2335-0.0542i\\0.9709
    \end{pmatrix}.
\end{align}
While the two code states $\psi_1,\psi_2$ have a rather large overlap $|\langle\psi_1|\psi_2\rangle|=0.9701$ and are thus almost parallel, the tensor product code states are pulled further apart.
For example, for $n=9$, 
\begin{align} 
|(\langle\psi_1|)^{\otimes 9}(|\psi_2\rangle)^{\otimes 9}| = |\langle\psi_1|\psi_2\rangle|^9 =0.7608.
\end{align}
This is yet another example of a non-orthogonal (weighted) repetition code improving quantum communication rates over the usual repetition codes.

\Cref{fig:dephrasure} makes it clear that our permutation-invariant ansatz yields improved \emph{rates} in the noisy regimes of the dephrasure channel, but does not seem to increase the known \emph{thresholds} of this channel (say, as a function of $p$ for fixed $q$).
This reinforces earlier results that positive coherent information, certified by weighted repetition codes \eqref{eq:weighted-rep-code}, persists for large intervals of $p$ because of the effect of log-singularities \cite{leditzky2018dephrasure,siddhu2021positivity,siddhu2021entropic,singh2022detecting}.

\begin{figure}[H]
\centering
\includegraphics[width=\textwidth]{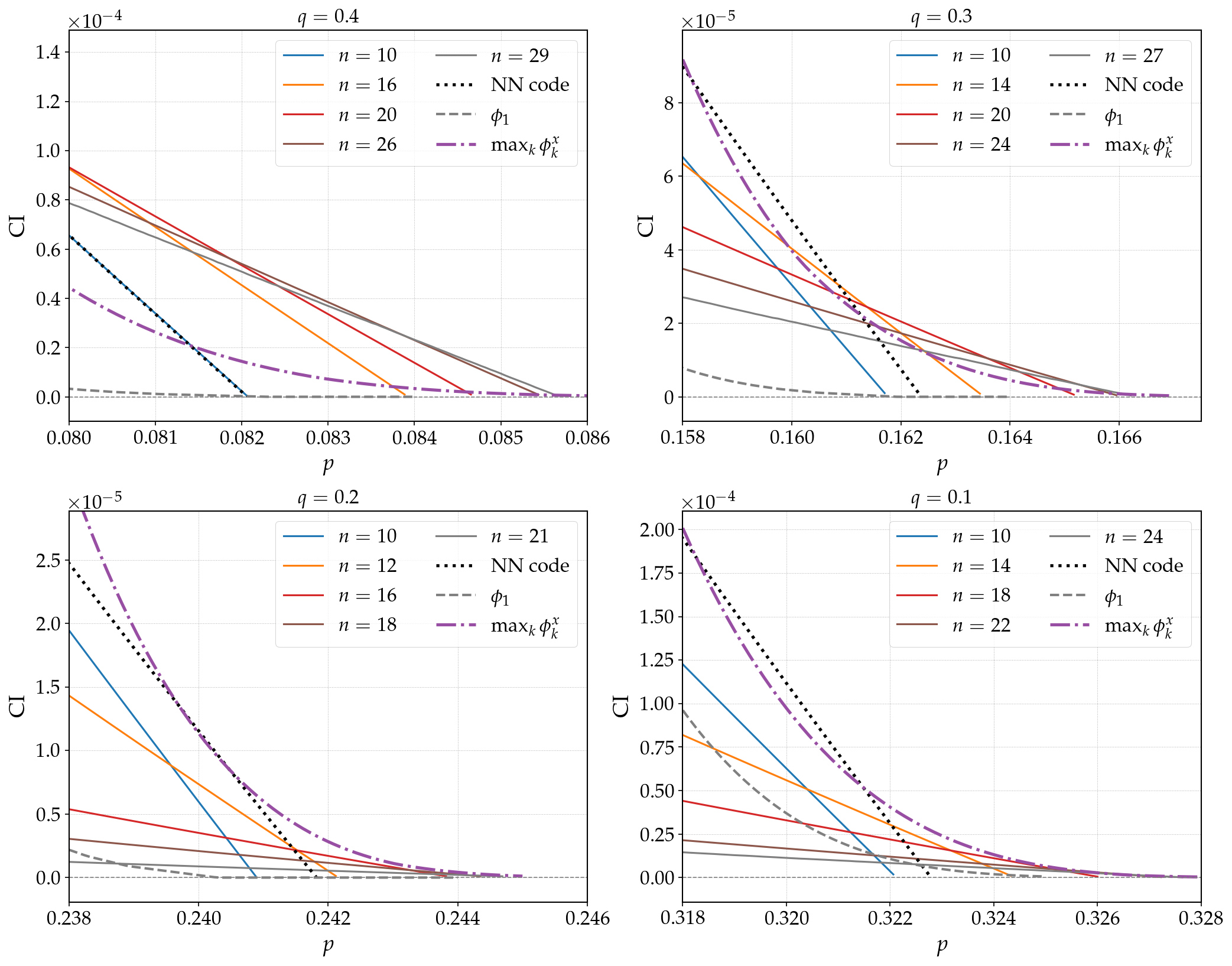}
\caption{Coherent information per channel use for the dephrasure channel $\cN_{p,q}$ as a function of the dephasing probability $p$. Each panel corresponds to a fixed erasure probability $q$. Solid lines represent the achievable rates for our numerically optimized permutation-invariant codes for up to $n=30$ channel copies. In each panel, the maximum $n$ shown is the block length that achieves the highest error threshold up to $n = 30.$ These are benchmarked against the single-letter coherent information ($\phi_1$, dashed gray line), a neural network code (NN code, dotted black), and the optimal weighted repetition code ($\max\phi_k^x$, dash-dot purple). While our codes show superadditivity of coherent information and have higher thresholds than the best neural network codes, the optimized weighted repetition codes generally provide higher rates and improved error thresholds across this parameter range. The relative performance of these benchmark codes shifts as the erasure probability $q$ changes, with the weighted repetition code showing a higher rate and threshold for lower $q$.}
\label{fig:dephrasure}
\end{figure}

\subsection{Generalized amplitude-damping channel}
\label{sec:gadc}

The generalized amplitude damping channel (GADC) $\mathcal{A}_{\gamma,N}$ is defined in terms of two parameters $\gamma, N \in [0, 1]$ and acts on a qubit state $\rho$ as
\begin{align} 
    \mathcal{A}_{\gamma,N}(\rho) = \sum_{i=1}^4 A_i \rho A_i^\dagger,
    \label{eq:gadc}
\end{align}
where
\begin{align} A_1 &= \sqrt{1 - N}(|0\rangle\langle0| + \sqrt{1 - \gamma}|1\rangle\langle1|) & A_2 &= \sqrt{\gamma(1 - N)}|0\rangle\langle1|\\ A_3 &= \sqrt{N}(\sqrt{1 - \gamma}|0\rangle\langle0| + |1\rangle\langle1|)& A_4 &= \sqrt{\gamma N}|1\rangle\langle0|. 
\end{align}
\textcite{khatri2020GADC} discussed a number of upper bounds on the quantum capacity of the GADC, and Ref.~\cite{bausch2020neural} computed lower bounds on its quantum capacity using both weighted repetition codes and a neural network state ansatz.
Both code families gave improved bounds on the quantum capacity relative to the single-letter coherent information.

We perform our optimization over permutation-invariant states for various fixed values of $N$ and intervals of $\gamma$ around the single-letter and neural-network code thresholds derived in \cite{bausch2020neural}.
The results are plotted in \Cref{fig: gadc-thresholds}, which shows increased quantum capacity thresholds.
These improvements are achieved by the following code (also listed in \Cref{tab:perm-inv-codes}):
\begin{align}\label{eq: gadc-non-orthogonal-code}
    \rho_{(n)} &= \frac{1}{2}\left(\psi_1^{\ox n} + \psi_2^{\ox n}\right)\\
    \text{with}\quad |\psi_1\rangle &= \begin{pmatrix}
        0.2960 - 0.2620i\\0.9186
    \end{pmatrix}\\
    |\psi_2\rangle &= \begin{pmatrix}
        -0.2971 + 0.2634i\\0.9178
    \end{pmatrix}.
\end{align}
This code has an interesting symmetry in the Bloch sphere with Bloch vectors for $\ket{\psi_1}$ and $\ket{\psi_2}$ being of the form $(\pm a, \pm b, c)$, as can be seen in \Cref{tab:perm-inv-codes}.
We emphasize that this \emph{single} code has better thresholds at all noise parameter pairs $(N, \gamma)$ in \Cref{fig: gadc-thresholds} than each weighted repetition code and neural network codes, which are in contrast optimized for each $(N, \gamma)$ pair individually. 
It is an interesting open problem to gain a better understanding of why non-orthogonal repetition codes yield such significantly improved thresholds over weighted repetition codes and neural-network codes. 

\begin{figure}[H]
\begin{center}
\includegraphics[width=0.8\textwidth]{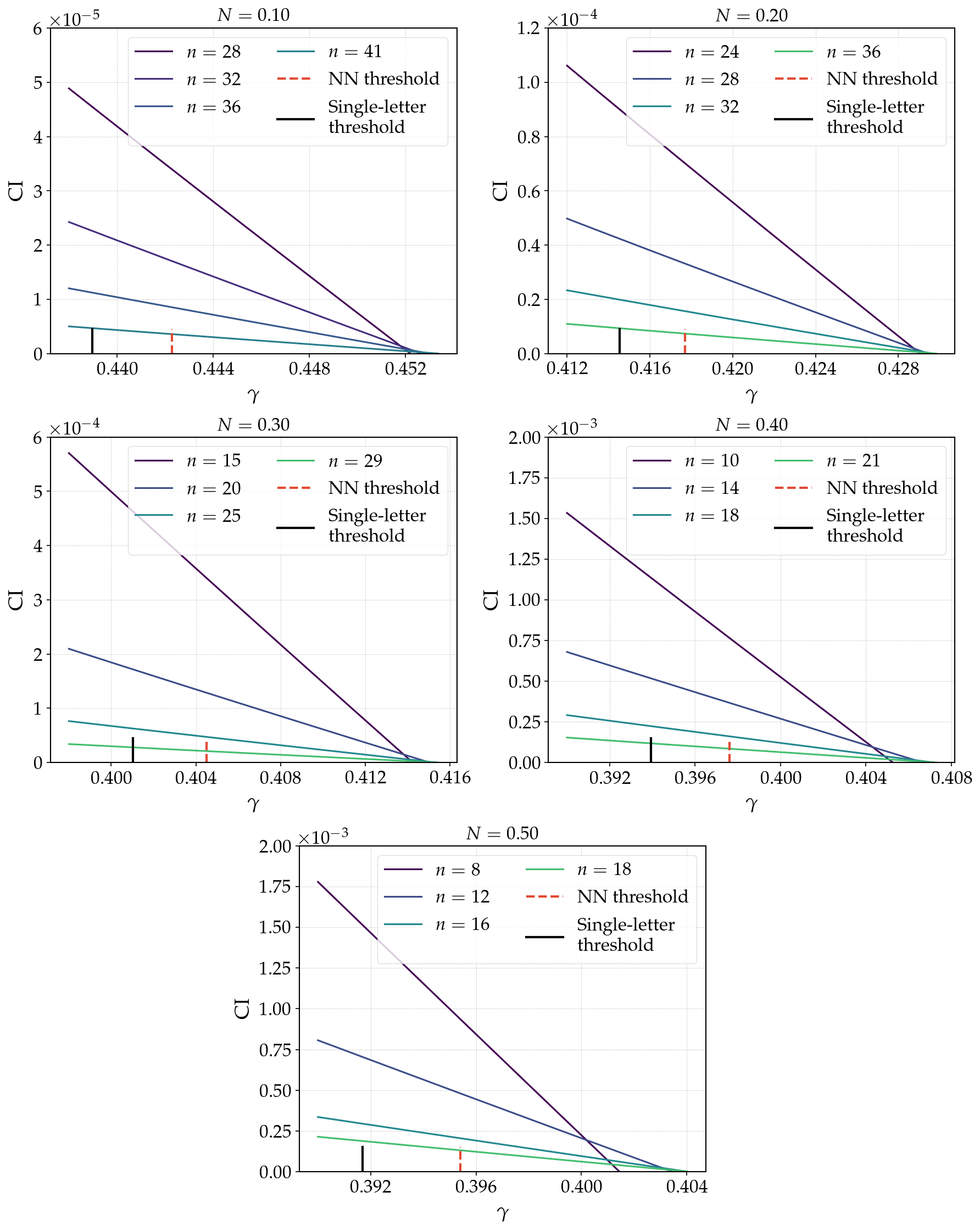}
\end{center}
\caption{Coherent information per channel use for the generalized amplitude-damping channel 
$\mathcal A_{\gamma,N}$, plotted as a function of the damping parameter $\gamma$ for fixed  $N\!=\!0.10, 0.20, 0.30, 0.40,$ and $0.50$ (one panel for each $N$).  
Solid coloured lines show the achievable rates obtained from numerically optimized 
permutation-invariant codes. In each panel, the maximum $n$ shown is the block length that achieves the highest error threshold and beyond which the threshold decreases.  
The red vertical tick on the $x$-axis marks the best neural-network code threshold from~\cite{bausch2020neural}.  
The black vertical tick on the  $x$-axis marks the single-letter threshold.  
For every $N$ displayed, the multi-copy codes achieve positive coherent information in a 
region where the single-letter rate is zero, clearly demonstrating superadditivity.  
Moreover, a single permutation-invariant code has a higher threshold for all $N$ than 
the neural-network construction, which optimizes each value of $N$ separately.}
\label{fig: gadc-thresholds}
\end{figure}

\subsection{Damping-dephasing channel}
\label{sec:damping-dephasing}

The damping-dephasing channel is another example of a concatenation of two degradable channels that results in a channel of experimental interest \cite{damping_deph_surfacecode} displaying non-additivity of coherent information at the two-letter level
\cite{siddhu2024dampingdephasing}. 
These two channels are a dephasing channel $\cD_p(\rho) = (1-p)\rho+pZ\rho Z$, and an amplitude damping channel $\cA_g = \cA_{g,0}$, where $\cA_{g,N}$ is the generalized amplitude damping channel defined in \eqref{eq:gadc} above.
The damping-dephasing channel $\mathcal{F}_{p,g}$ for $p,g\in[0,1]$ is defined as the composition of $\mathcal{D}_p$ and $\mathcal{A}_g$ (the two channel actions commute):
\begin{align}
    \mathcal{F}_{p,g} = \mathcal{D}_p \circ \mathcal{A}_g = \mathcal{A}_g \circ \mathcal{D}_p.
\end{align}
The channel can also be written in Kraus form as $\cF_{p,g} = \sum_i O_i \cdot O_i^\dagger$, where
\begin{align} O_0 &= \sqrt{1 - p} \left(|0\rangle\langle0| + \sqrt{1 - g}|1\rangle\langle1|\right) \\ O_1 &= \sqrt{g}|0\rangle\langle1| \\ O_2 &= \sqrt{p} \left(|0\rangle\langle0| - \sqrt{1 - g}|1\rangle\langle1| \right).
\end{align}
It was shown in \cite{siddhu2024dampingdephasing} that the single-letter coherent information of $\cF_{p,g}$ is maximized on input states with Bloch coordinates $(x,0,z)$, and furthermore it was observed numerically that the optimum is achieved on states with Bloch coordinates $(0,0,z)$.

We provide a formula for the coherent information of weighted repetition codes in Theorem~\ref{thm: rep-code-ci-damping-deph}. Evaluating this formula numerically shows that these codes have higher rates than the single-letter coherent information for a range of damping parameter values. A proof of Theorem~\ref{thm: rep-code-ci-damping-deph} is provided in Appendix~\ref{app:ci-damp-deph-rep}.

\begin{theorem}\label{thm: rep-code-ci-damping-deph}
 The coherent information for the weighted repetition code with the damping-dephasing channel is given by
\begin{align}
I_c(\cF_{p,g}, \phi_n^x) &= -\left(x + (1-x)g^n\right)\log\left(x + (1-x)g^n\right) \notag\\
&\quad - (1-x)(1-g)^n\log\left((1-x)(1-g)^n\right) \notag\\
&\quad + (1-x)g^n\log\left((1-x)g^n\right) \notag\\
&\quad + \eta_+ \log \eta_+ + \eta_- \log \eta_-,
\end{align}
where
\begin{align} 
\eta_{\pm} = \frac{\alpha \pm \sqrt{\alpha^2 - 4\beta}}{2}, 
\end{align}
with $\alpha = x + (1-x)(1-g)^n$ and $\beta = x(1-x)\left[(1-g)^n - ((1-2p)^2(1-g))^n\right]$.
\end{theorem}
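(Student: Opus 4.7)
The plan is to compute the coherent information directly by working in the purified picture $I_c(\cF_{p,g}^{\otimes n},\phi_n^x) = S(\cF_{p,g}^{\otimes n}(\phi_n^x)) - S((\id_R\ox\cF_{p,g}^{\otimes n})(\ketbra{\psi}))$, where $\ket{\psi}_{RA^n} = \sqrt{x}\ket{0}_R\ket{0}_A^{\otimes n} + \sqrt{1-x}\ket{1}_R\ket{1}_A^{\otimes n}$ is the canonical purification of $\phi_n^x$. The whole argument boils down to diagonalizing two states whose structure is essentially determined by the action of a single copy of $\cF_{p,g}$ on $\ketbra{0}$, $\ketbra{1}$, and $\ketbra{0}{1}$.

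First I would compute these three single-copy actions from the Kraus form: a direct calculation yields $\cF_{p,g}(\ketbra{0}) = \ketbra{0}$, $\cF_{p,g}(\ketbra{1}) = g\ketbra{0} + (1-g)\ketbra{1}$, and $\cF_{p,g}(\ketbra{0}{1}) = (1-2p)\sqrt{1-g}\,\ketbra{0}{1}$. Taking tensor powers, the purified output becomes a block matrix indexed by the $R$-register whose off-diagonal block is proportional to $\ketbra{0\cdots 0}{1\cdots 1}$. Consequently the whole operator decouples into a $2\times 2$ block acting on $\spn\{\ket{0}_R\ket{0}^{\otimes n}, \ket{1}_R\ket{1}^{\otimes n}\}$, together with a diagonal remainder in the computational basis on the $R=1$ sector.

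Next I would diagonalize the $2\times 2$ block
\[
M \;=\; \begin{pmatrix} x & \sqrt{x(1-x)}\bigl((1-2p)\sqrt{1-g}\bigr)^{n} \\ \sqrt{x(1-x)}\bigl((1-2p)\sqrt{1-g}\bigr)^{n} & (1-x)(1-g)^{n}\end{pmatrix},
\]
whose trace is $\alpha = x+(1-x)(1-g)^n$ and whose determinant simplifies to $\beta = x(1-x)\bigl[(1-g)^n - ((1-2p)^2(1-g))^n\bigr]$, giving the stated $\eta_{\pm}$. The remaining diagonal eigenvalues of the purified output are $(1-x)g^{n-k}(1-g)^{k}$, with multiplicity $\binom{n}{k}$, for $k=0,1,\dots,n-1$ (the would-be $k=n$ entry is absorbed into $M$). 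In parallel, $\cF_{p,g}^{\otimes n}(\phi_n^x)$ is already diagonal in the computational basis: its eigenvalue at $\ket{0\cdots 0}$ is $x+(1-x)g^n$, and at strings with $k\ge 1$ ones it is $(1-x)g^{n-k}(1-g)^k$ with multiplicity $\binom{n}{k}$.

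Finally I would subtract the two von Neumann entropies. The binomial sums $\sum_{k=1}^{n-1}\binom{n}{k}(1-x)g^{n-k}(1-g)^{k}\log\bigl[\cdot\bigr]$ appear in both entropies and cancel; what remains are the $k=n$ term from $S(\cF_{p,g}^{\otimes n}(\phi_n^x))$, the $k=0$ term from $S$ of the purified state, the $\ket{0\cdots 0}$ term $-[x+(1-x)g^n]\log[x+(1-x)g^n]$, and the $\eta_{\pm}\log\eta_{\pm}$ contributions, which together reproduce the claimed formula. The only delicate step is the bookkeeping: one must correctly excise the two basis states $\ket{0}_R\ket{0\cdots 0}$ and $\ket{1}_R\ket{1\cdots 1}$ from the diagonal remainder when identifying the $2\times 2$ block, since miscounting the $k=0$ and $k=n$ multiplicities is what determines whether the residual terms from the cancellation match the stated expression.
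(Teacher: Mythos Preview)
Your proposal is correct and follows essentially the same approach as the paper's proof in Appendix~\ref{app:ci-damp-deph-rep}: both compute the single-copy actions $\cF_{p,g}(\ketbra{i}{j})$, identify the purified output as a single $2\times 2$ block on $\spn\{\ket{0}_R\ket{0\cdots 0},\ket{1}_R\ket{1\cdots 1}\}$ plus a diagonal remainder, and exploit the cancellation of the $k=1,\dots,n-1$ binomial terms between the two entropies. Your bookkeeping of the $k=0$ and $k=n$ boundary terms matches the paper's exactly.
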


Our optimization method over permutation-invariant states for this channel yields results that are summarized in \Cref{fig: damp-dephasing}.
As for the dephrasure channel, we see improved \emph{rates} in the noisy damping regimes of the damping-dephasing channel, but not improved thresholds beyond those guaranteed by the method of log-singularities.
The optimal states found by our optimization are listed in \Cref{tab:perm-inv-codes}.

\begin{figure}[H]
\begin{center}
\includegraphics[width=0.8\textwidth]{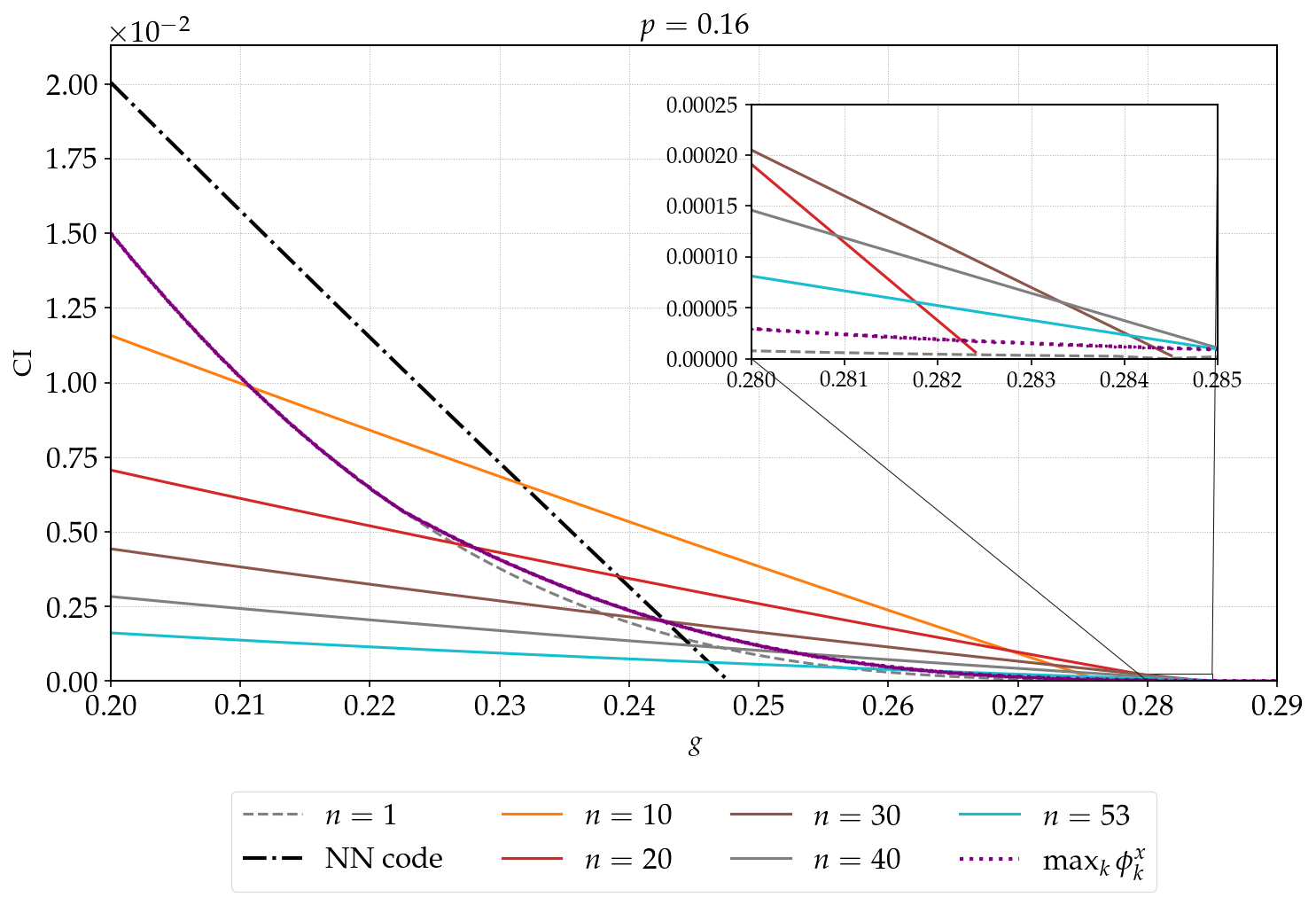}
\end{center}
\caption{Coherent information per channel use for the damping-dephasing channel 
$\mathcal F_{p,g}$ with dephasing probability fixed at $p = 0.16$ ($x$-axis: damping 
probability $g$).  
Solid coloured lines give the rates of numerically optimized permutation-invariant codes 
for $n = 10, 20, 30, 40,$ and $53$ channel copies. The threshold peaks at $n = 53$.  
The dotted grey line is the single-letter coherent information ($n=1$), while the 
black dashed line shows the best neural-network code (given in \Cref{app: nn-damping-dephasing}) up to $5$ channel copies.  
The red dotted line ($\max_k \phi_k^x$) is the analytically tractable weighted 
repetition code of Theorem~\ref{thm: rep-code-ci-damping-deph}.   
For this noise bias ($p=0.16$) the permutation-invariant code found at $g= 0.2$ (see \Cref{tab:perm-inv-codes}) for each $n$ shown has a higher threshold than the best neural-network code at $g= 0.2$ up to $n = 5$, and for $n=10$ has higher rates than the best weighted repetition code for $0.21 \lesssim g \lesssim 0.275$.}
\label{fig: damp-dephasing}
\end{figure}

\subsection{Higher rates in the mid-noise regime with higher-rank input states}\label{app: dampdeph-high-k}

So far our results focused primarily on input states that are convex combinations of two pure i.i.d.~states, i.e., states of the form $\sum_{i=1}^k x_i \ketbra{\psi}_i^{\otimes n}$ with $k=2$. This choice is motivated by the search for codes that achieve high error thresholds at the price of lower rates, which is based on the QEC intuition that one should only try to encode a small number of logical qubits in high-noise regimes. 
A data-processing argument shows that such low-rank codes receive a significant penalty in their \emph{rate}.
However, using a larger number $k$ of states resulting in a mixed input state with higher rank, we observe a different phenomenon for non-Pauli channels: for certain noise parameters and a fixed number of channel copies, employing such higher-rank input states can yield higher communication rates than both the repetition codes and neural network codes even below the thresholds of these codes.

We illustrate this point with the damping-dephasing channel (see \Cref{sec:damping-dephasing}), specifically at the parameter point $(p, g) = (0.16, 0.2)$. This is a region where the $n=1$ coherent information is positive, but one can still obtain examples of superadditive coherent information with higher rates than even the best neural network code up to $n = 5$. We performed optimizations for $k \in \{2, 3, 4, 5\}$ states for $n=5$. Codes found using our optimization for the channel $\cF_{p,g}$ at $(p, g) = (0.16, 0.2)$ and their coherent information are listed in \Cref{tab: damp-deph-higher-k-codes}, with each of the $k$ states found to have roughly equal probability. For a fixed number of states $k$, each state in \Cref{tab: damp-deph-higher-k-codes} is found to have the same $z$-coordinate on the Bloch sphere (up to numerical noise).

Figure~\ref{fig: damp-dephasing-higher-k} shows achievable rates (coherent information per channel use) as a function of the damping parameter $g$ for $n =5$ and $k = 2,3,4,5$. As the plot demonstrates, for a fixed small $n$, the rate consistently increases with $k$. This suggests that while $k=2$ non-orthogonal repetition codes are powerful for pushing error thresholds beyond those of orthogonal repetition codes and neural network codes, exploring codes with $k>2$ is a promising avenue for improving rates in the mid-noise regime for the damping-dephasing channel.

\begin{figure}[H]
\begin{center}
\includegraphics[scale = 0.28]{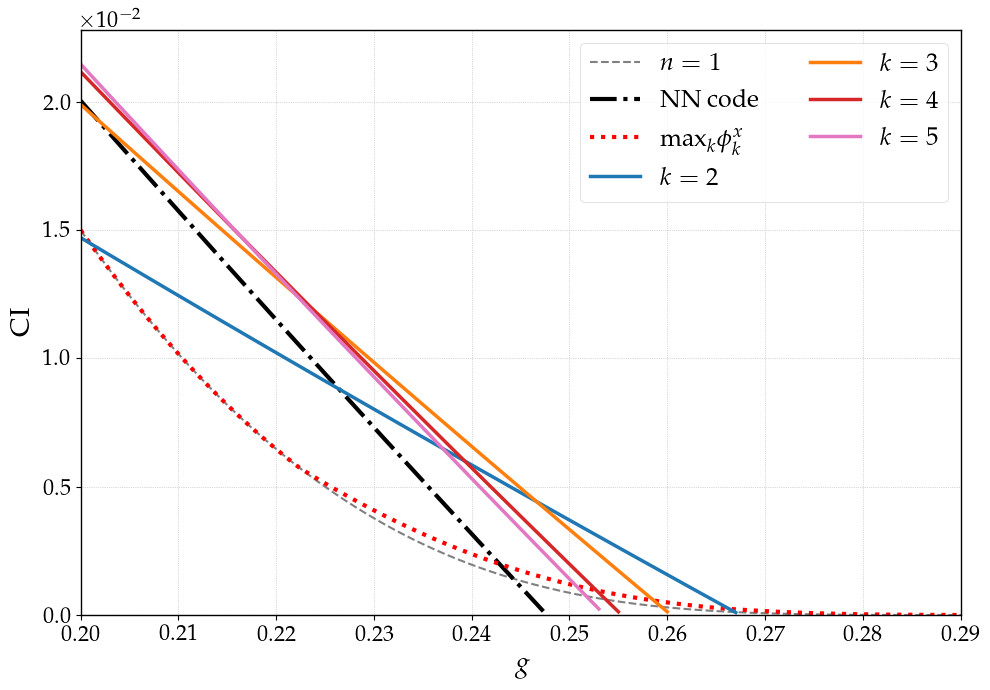}
\end{center}
\caption{Coherent information per channel use for the damping-dephasing channel 
$\mathcal F_{p,g}$ with dephasing probability fixed at $p = 0.16$ ($x$-axis: damping 
probability $g$).  
Solid coloured lines give the rates of numerically optimized $k$-state permutation-invariant codes 
for $n = 5$ channel copies and $k =2,3,4,5$.  
These are benchmarked against the single-letter coherent information (dotted gray), the best neural-network code (dashed black, code given in \Cref{tab:nn-code-dampdeph}) up to $5$ channel copies and the weighted 
repetition code $\max_k \phi_k^x$ (dotted red).   
For this noise bias ($p=0.16$) the $k$-state permutation-invariant code found at $g= 0.2$ at $n=5$ has a higher rate than the weighted repetition code and the best neural-network code at $g= 0.2$ up to $n = 5$, with the rate increasing with the number of states $k.$  The states for each $k$ are given in \Cref{tab: damp-deph-higher-k-codes}.
}
\label{fig: damp-dephasing-higher-k}
\end{figure}

\begin{table}[t]
\centering
\caption{Optimized $k$-state permutation-invariant codes for the damping-dephasing channel for $k = 2,3,4,5$. For each $k$, the table lists the optimal mixture of almost-pure states (defined by their Bloch vectors $\vec r_i$ and probabilities $p_i$) found for $n=5$ channel copies at the noise parameters $(p,g)=(0.16, 0.2)$. The last column gives the corresponding coherent information per channel use.}
\label{tab: damp-deph-higher-k-codes}
\sisetup{exponent-product = \cdot}

\begin{tabular}{
  c
  c
  c
  S[table-format=1.4e-2]
}
\toprule
$k$ & $(\vec r_1, \dots, \vec r_k)$ & $(p_1, \dots, p_k)$ & {CI per channel use} \\
\midrule

 2 &
\makecell[t]{$(-0.2053, -0.3352, 0.9195)$\\$(0.2024, 0.3404, 0.9182)$} &
\makecell[t]{$0.5$\\$0.5$} &
1.4707e-2 \\ 
\cmidrule(lr){1-4}

 3 &
\makecell[t]{$(0.4681    ,0.1773    ,0.8657)$\\$( -0.3875,    0.3147,    0.8665)$\\$(-0.0780,   -0.4946    ,0.8656)$} &
\makecell[t]{     $0.3339$\\
    $0.3337$\\
    $0.3325$} &
1.9899e-2 \\ [4ex]
\cmidrule(lr){1-4}

 4 &
\makecell[t]{$( 0.5122,   -0.2180,    0.8306)$\\$(-0.2214,   -0.5116,    0.8301)$\\$(0.2209,    0.5116,    0.8303)$\\$(-0.5105,    0.2216,    0.8307)$} &
\makecell[t]{ $0.2497$\\
    $0.2523$\\
    $0.2484$\\
    $0.2497$}&
2.1175e-2 \\ [4ex]
\cmidrule(lr){1-4}

 5 &
\makecell[t]{$(-0.5144,   -0.2607,    0.8169)$\\$(0.2582,    0.5166,    0.8164)$\\$(-0.4089,    0.4074,    0.8166)$\\$( 0.0871,   -0.5706,    0.8166)$\\$(0.5722,   -0.0871,    0.8155)$} &
\makecell[t]{  $0.1998$\\
    $0.1986$\\
    $0.1997$\\
    $0.2005$\\
    $0.2014$} &
2.1474e-2 \\ 

\bottomrule
\end{tabular}
\end{table}

\section{Conclusion}
\label{sec:conclusion}

\subsection{Discussion of our results}
\label{sec:discussion}

We have demonstrated that leveraging permutation symmetry provides a computationally tractable framework for obtaining improved quantum communication rates for  quantum channels. By restricting the search space to certain permutation-invariant codes and applying the machinery of Schur-Weyl duality, we transformed the problem of computing coherent information from one that is exponential in the number of channel copies $n$ to one that is polynomial. This enabled us to numerically optimize codes for up to $n = 100$ copies for channels with qubit output and up to $n=30$ copies for channels with qutrit output.
This method goes far beyond the reach of brute-force methods and yields improved communication rates and thresholds for a variety of important channel families.

A key finding is the effectiveness of non-orthogonal repetition codes. Prototypical examples of superadditivity for Pauli channels often rely on orthogonal repetition codes. Our results for the 2-Pauli and BB84 channels (\Cref{sec:2-pauli,sec:BB84}), however, show that better rates can be achieved with non-orthogonal repetition codes. The underlying mechanism for this improvement can be understood through an analysis of each irrep: changing the overlap between the code states alters the coherent information contribution across the different irreps of the general linear group. As shown in \Cref{fig:2pauli-CI-by_partition,fig:bb84-CI-by-partition}, for non-orthogonal codes, the positive coherent information contributions from irreps corresponding to partitions $(n,0)$ and $(n-1,1)$ can overcome the negative contributions from other irreps. This results in a net positive rate in noise regimes where an orthogonal code has zero rate and provides a representation-theoretic explanation for the effectiveness of non-orthogonal code states.

The simple structure of optimal codes found through our approach stands in contrast to other approaches that generate entangled states, such as those constructed with neural networks \cite{bausch2020neural}. For the generalized amplitude-damping channel, for instance, a single non-orthogonal repetition code given by \eqref{eq: gadc-non-orthogonal-code} was found to uniformly outperform noise-adapted neural-network codes across a wide range of noise parameters. The codes that achieve improved rates with our approach also require fewer channel copies than the previously best-known concatenated code constructions, which have large blocklengths from multiple levels of concatenation \cite{fern2008lower}.

Our work also indicates limitations of imposing full-permutation invariance and assuming the simple structure of convex sums of i.i.d.~states. For the depolarizing channel, our optimization recovers the standard repetition codes in the high-noise regime, but does not yield any non-orthogonal repetition codes with higher rates (see \Cref{sec:depolarizing}). Better codes for the high-noise regime of this channel, for example the cat codes family \cite{divincenzo1998capacity,smith2006degenerate,fern2008lower}, possess a more constrained wreath product symmetry which our current framework does not capture. Similarly, for the dephrasure and damping-dephasing channels our codes found improved communication rates in noisy regimes, but they failed to push the capacity thresholds beyond those established by weighted repetition codes. This reinforces the principle that different types of code structures are optimal for different channels and noise regimes. These limitations motivate the proposed future work outlined in \Cref{subsec: open-questions} below, namely generalizing our methods to handle arbitrary permutation-invariant states and restricted symmetry groups.

Our results provide quantitative improvements in quantum communication rates and thresholds across both Pauli and non-Pauli channels together with a deeper qualitative understanding of the structure of permutation-invariant codes based on representation theory. The success of simple non-orthogonal repetition codes opens a promising direction for the design of practical quantum communication schemes.

\subsection{Open questions}\label{subsec: open-questions}

We plan to extend our study of quantum information transmission rates and thresholds using symmetric quantum codes along the following lines:

\subsubsection{Generalization to arbitrary permutation-invariant states}
\label{sec:general-permutation-invariance}

In the present work we restricted the class of permutation-invariant codes to convex combinations of i.i.d.~states such as in \eqref{eq:convex-mixture-iid}.
A priori, this restriction seems reasonable since the resulting class of possible code states includes the weighted repetition codes \eqref{eq:weighted-rep-code} that are known to give superadditive channel coherent information for various channel models.
Indeed, our optimization found \emph{new} examples of superadditive codes in this class, such as the ones in \cref{eq:non-orthogonal-code,eq:non-orthogonal-code-bb84,eq: gadc-non-orthogonal-code}.
Mathematically, the restricted ansatz \eqref{eq:convex-mixture-iid} allowed us to use Molev's approach in \cite{molev2006gelfand} to construct the irreps of $\GL(d)$ (via those of the associated Lie algebra $\gl(d)$) for convex mixtures of i.i.d.~states.

An obvious drawback of this ansatz is that it only includes code states that are \emph{fully separable} across channel input systems.
For such codes, the superadditivity of coherent information stems from the entanglement between the channel inputs and the environment, which is most easily seen using the `purified' formula of coherent information in \eqref{eq:coherent-info-purification}.
Multipartite entanglement of permutation-invariant states is limited due to the `exclusive' nature of entanglement \cite{yang2006simple}, but the coherent information may still benefit from such weak entanglement between channel inputs.
This is for example the case for the dephrasure channel \cite{bausch2020neural} or the damping-dephasing channel \cite{siddhu2024dampingdephasing}.

A general permutation-invariant input state has the form $\rho_{(n)} = \bigoplus_{\lambda\in\Lambda(n,d)} \rho_\lambda \otimes \one_{S_\lambda}$, where $\rho_\lambda$ acts on the $\GL(d)$-irrep $V_\lambda^d$.
To derive an analogue of \Cref{thm:coherent-info-iid-mixture} to efficiently compute the coherent information of $\cN^{\otimes n}(\rho_{(n)})$, we propose to decompose the channel action of $\cN^{\otimes n}$ into a linear combination of completely positive maps $\cN_{\lambda\to\mu}\colon V_\lambda^{d}\to V_\mu^{d'}$ (with $\lambda\in\Lambda(n,d)$ and $\mu\in\Lambda(n,d')$) that map between irreps in the input and output space (see also \cite{harrow2005phd}), so that the channel action of $\cN^{\otimes n}$ can be directly computed on the individual $\rho_\lambda$.

\subsubsection{Restricted symmetry groups}
While imposing full permutation invariance on input codes yields good results for the quantum channels discussed in this paper, this ansatz is too restrictive for certain channels such as the depolarizing channel (see \Cref{sec:depolarizing}).
In the noisy regime, the coherent information of this channel can be increased significantly using cat codes with a Shor code-type symmetry \cite{divincenzo1998capacity,smith2006degenerate,fern2008lower}, corresponding to wreath products of symmetric groups.
Developing an efficient algorithm to compute the coherent information of input states with this symmetry will require a careful adaptation of the representation theory of wreath products to the quantum setting \cite{Ceccherini2014wreath}, and using the insights from the future work outlined in \Cref{sec:general-permutation-invariance}.
We also plan to extend our approach to more general subgroups of the symmetric group with a natural action on quantum codes.

\paragraph{Acknowledgments.}
We appreciate helpful discussions with Johannes Bausch, Philippe Faist, Fernando Granha Jeronimo, Marius Junge, Iman Marvian, Vikesh Siddhu, and Michael Walter. 
This work was supported by University of Illinois Campus Research Board Award No.~RB23076 and National Science Foundation Grant No.~2426103.
Furthermore, to obtain our numerical results we made use of the Illinois Campus Cluster, a computing resource that is operated by the Illinois Campus Cluster Program (ICCP) in conjunction with the National Center for Supercomputing Applications (NCSA), and which is supported by funds from the University of Illinois Urbana-Champaign.

\printbibliography[heading=bibintoc]

\appendix
\section{Optimal two-state permutation-invariant codes}
\label{app:optimal-perminv-codes}
\Cref{tab:perm-inv-codes} lists the optimal permutation-invariant codes with two code states ($k=2$) found by our optimization approach for each channel considered in this paper.

\begin{table}[t]
	\centering
	\caption{Optimized two-state permutation-invariant codes. Each code is a mixture of the pure states with Bloch vectors $\vec r_1$ and $\vec r_2$ with corresponding probabilities $(p_1; p_2)$. The parameters listed are those at which the optimization for each channel was performed. The last column lists the coherent information per channel use, evaluated at the parameters listed using $9$ copies of each channel.}
	\label{tab:perm-inv-codes}
	\sisetup{exponent-product = \cdot}
	
	\begin{tabular}{
			l
			l
			c
			c
			S[table-format=1.4e-2]
		}
		\toprule
		Channel & Parameters & $(\vec r_1; \vec r_2)$ & $(p_1; p_2)$ & {CI per channel use} \\
		\midrule
		\makecell[lt]{2-Pauli\\ (Sec.~\ref{sec:2-pauli})} &
		$p=0.2271$ &
		\makecell[t]{$(-0.0597, 0.7647, -0.6416)$\\$(0.0624, 0.7643, 0.6418)$} &
		\makecell[t]{$0.5001$\\ $0.4999$} &
		1.2475e-04 \\[4ex]
		
		\makecell[lt]{BB84\\ (Sec.~\ref{sec:BB84})} &
		$p=0.112105$ &
		\makecell[t]{$( 0.0383, -0.6934, -0.7196)$\\$(-0.0383, -0.6934, 0.7196)$} &\makecell[t]{$0.5$\\ $0.5$} &
		4.9724e-04 \\[4ex]
		
		\makecell[lt]{GADC\\ (Sec.~\ref{sec:gadc})} &
		\makecell[t]{$\gamma=0.44035$\\$N=0.1$} &
		\makecell[t]{$(0.5437, 0.4813, -0.6875)$\\$(-0.5454, -0.4835, -0.6846)$} &
		\makecell[t]{$0.4995$\\ $0.5005$} &
		8.8918e-04 \\[4ex]
		
		\makecell[lt]{Damping-\\Dephasing\\ (Sec.~\ref{sec:damping-dephasing})} &
		\makecell[t]{$p=0.16$\\ $g=0.2$}  &
		\makecell[t]{$(-0.2053, -0.3352, 0.9195)$\\$(0.2024, 0.3404, 0.9182)$} &
		\makecell[t]{$0.5$\\$ 0.5$} &
		1.2171e-2 \\[4ex]
		
		\cmidrule(lr){1-5}
		\makecell[lt]{Dephrasure\\ (Sec.~\ref{sec:dephrasure})} & \makecell[t]{$q=0.1$\\$p=0.32$} &
		\makecell[t]{$(-0.0421, 0.8554, -0.5163)$\\$(0.0000, -0.0004, -1.0000)$} &\makecell[t]{$0.1215$\\$ 0.8785$} &
		5.2223e-05 \\[4ex]

		& \makecell[t]{$q=0.2$\\$p=0.24$} &
		\makecell[t]{$(-0.0061, -0.9742, -0.2257)$\\$(0.0000, 0.0000, -1.0000)$} &
		\makecell[t]{$0.0317$\\$0.9683$}&
		1.3181e-06 \\[4ex]
		
		& \makecell[t]{$q=0.3$\\$p=0.16$} &
		\makecell[t]{$(-0.6907, -0.0557, 0.7210)$\\$(0.0003, 0.0000, 1.0000)$} &
		\makecell[t]{$0.0207$\\
			$0.9793$}&
		2.3103e-05 \\[4ex]
		
		& \makecell[t]{$q=0.4$\\$p=0.08$} &
		\makecell[t]{$(0.0063, -0.0015, -1.0000)$\\$(-0.4533, 0.1052, -0.8851)$} & \makecell[t]{$0.9465$\\ $0.0535$}
		&
		5.4524e-05 \\
		\bottomrule
	\end{tabular}
\end{table}

\section{Constructing irreps of \texorpdfstring{$\GL(2)$}{GL(2)} on the symmetric subspace}\label{app:gl2_irreps}

We derive the explicit matrix elements for the irreducible representation of $\GL(2)$ corresponding to the partition $\lambda=(m)$ consisting of a single part, i.e., the $m$-th symmetric power $\text{Sym}^m(\mathbb{C}^2)$ denoted by $V_{(m)}$. 
This space is isomorphic to the space of homogeneous polynomials of degree $m$ in two variables $x_1,x_2$.
An unnormalized basis is given by $\{\tilde{e}_k = x_1^k x_2^{m-k}\}_{k=0}^m$. With respect to the inner product $\langle x_1^l x_2^{m-l}, x_1^k x_2^{m-k} \rangle = l! (m-l)! \delta_{lk}$, the squared norm of a basis vector is $\|\tilde{e}_k\|^2 = k!(m-k)!$. This gives an orthonormal basis $\{e_k\}_{k=0}^m$ where $e_k = \frac{1}{\sqrt{k!(m-k)!}} \tilde{e}_k$. We order the basis as $\{e_m, e_{m-1}, \dots, e_0\}$.

Let $A = \begin{pmatrix} a & b \\ c & d \end{pmatrix} \in \text{GL}_2(\mathbb{C})$. Its action on the variables $x_1, x_2$ is determined by $A^T x$:
\begin{align}
    A \cdot x_1 &= ax_1 + cx_2 \\
    A \cdot x_2 &= bx_1 + dx_2
\end{align}
We compute the action of $A$ on a basis vector $e_j$:
\begin{align}
    A \cdot e_j &= \frac{1}{\sqrt{j!(m-j)!}} (ax_1 + cx_2)^j (bx_1 + dx_2)^{m-j}.
\end{align}
Using the binomial theorem for each factor,
\begin{align}
    (ax_1 + cx_2)^j &= \sum_{p=0}^j \binom{j}{p} (ax_1)^p (cx_2)^{j-p} \\
    (bx_1 + dx_2)^{m-j} &= \sum_{q=0}^{m-j} \binom{m-j}{q} (bx_1)^q (dx_2)^{m-j-q},
\end{align}
and multiplying these expansions and collecting terms for the monomial $x_1^k x_2^{m-k} = \tilde{e}_k$ gives:
\begin{align}
    A \cdot e_j = \frac{1}{\sqrt{j!(m-j)!}} \sum_{p=0}^j \sum_{q=0}^{m-j} \binom{j}{p}\binom{m-j}{q} a^p c^{j-p} b^q d^{m-j-q} x_1^{p+q} x_2^{(j-p)+(m-j-q)}.
\end{align}
The exponent of $x_1$ is $k = p+q$ and that of $x_2$ is $m-k$. Substituting $q=k-p$, we extract the coefficient of $\tilde{e}_k = x_1^k x_2^{m-k}$. The summation is over values of $p$ such that $0 \le p \le j$ and $0 \le k-p \le m-j$, which implies $\max(0, k-(m-j)) \le p \le \min(k,j)$.
Hence, 
\begin{align}
    \text{Coeff}(\tilde{e}_k) = \frac{1}{\sqrt{j!(m-j)!}} \sum_{p=\max(0, k-(m-j))}^{\min(k,j)} \binom{j}{p}\binom{m-j}{k-p} a^p c^{j-p} b^{k-p} d^{(m-j)-(k-p)}.
\end{align}
The matrix element $[S_m(A)]_{k,j}$ is the coefficient of the orthonormal basis vector $e_k$ in the expansion of $A \cdot e_j$. Since $e_k = \frac{1}{\sqrt{k!(m-k)!}} \tilde{e}_k$, we have $[S_m(A)]_{k,j} = \sqrt{k!(m-k)!} \cdot \text{Coeff}(\tilde{e}_k)$. This yields Equation~\eqref{eq:Sm_A_formula}. Note that one can view this calculation as a special case of the approach developed by Clebsch and Deruyts for constructing irreps of $\GL(d)$ for any $d$. We refer the reader to \cite[Ex.~15.57]{fulton2013representation} and \cite[Sec.~2.1.1]{mulmuley2007pvsnpgeometric} for further details of this approach.

\section{Weighted repetition codes for Pauli channels}\label{app:ci-pauli-rep}

In this appendix we provide a proof of the formula for the coherent information of repetition codes given in \Cref{thm:pauli_ci_rep}.
A Pauli channel is of the form $\cN_{\mathbf{p}}(\rho) = p_0\rho + p_1 X\rho X + p_2 Y\rho Y + p_3 Z \rho Z$.
The weighted repetition code is $\phi_n^x = x\ketbra{0}{0}^{\otimes n} + (1-x)\ketbra{1}{1}^{\otimes n}.$ A purification of the repetition code is $\ket{\psi_n} = \sqrt{x}\ket{0}^{\otimes n+1} + \sqrt{1-x}\ket{1}^{\otimes n+1}$.
A Pauli channel has output:
\begin{align}
\cN_{\mathbf{p}}(\ketbra{0}{0}) &= (p_0 + p_3) \ketbra{0}{0} + (p_1 + p_2) \ketbra{1}{1}\\
\cN_{\mathbf{p}}(\ketbra{0}{1}) &= (p_0 - p_3) \ketbra{0}{1} + (p_1 - p_2)\ketbra{1}{0}\label{eq:pauli-output2}\\
\cN_{\mathbf{p}}(\ketbra{1}{0}) &= (p_1 - p_2) \ketbra{0}{1} + (p_0 - p_3)\ketbra{1}{0}\label{eq:pauli-output3}\\
\cN_{\mathbf{p}}(\ketbra{1}{1}) &= (p_1 + p_2) \ketbra{0}{0} + (p_0 + p_3) \ketbra{1}{1}.
\end{align}
To compute the coherent information $I_c(\cN_{\mathbf{p}}, \phi_n^x) = S(\cN_{\mathbf{p}}^{\otimes n}(\phi_n^x)) - S(\id_{R} \otimes \cN_{\mathbf{p}}^{\otimes n}(\ketbra{\psi_n}{\psi_n}))$, we need eigenvalues of $\cN_{\mathbf{p}}^{\otimes n}(\phi_n^x)$ and $\id_{R} \otimes \cN_{\mathbf{p}}^{\otimes n}(\ketbra{\psi_n}{\psi_n}).$ The matrix $\cN_{\mathbf{p}}^{\otimes n}(\phi_n^x)$ is a diagonal matrix with entries \begin{align}\left[x(p_0 + p_3)^{n-w}(p_1 + p_2)^w + (1-x) (p_0 +p_3)^{w}(p_1 + p_2)^{n-w}\right] \qquad \text{ for } w = 0, \cdots, n,\end{align} and the entry corresponding to each $w$ appears with multiplicity $\binom{n}{w}$. 
For the second term of the coherent information, $\id_{R} \otimes \cN_{\mathbf{p}}^{\otimes n}(\ketbra{\psi_n}{\psi_n})$ is equal to
\begin{align}
    &(\id_{R} \otimes \cN_{\mathbf{p}}^{\otimes n})(x\ketbra{0}{0}^{\otimes n+1} + \sqrt{x(1-x)}\ketbra{0}{1}^{\otimes n+1}+ \sqrt{x(1-x)}\ketbra{1}{0}^{\otimes n+1}+ (1-x)\ketbra{1}{1}^{\otimes n+1}) \notag\\
    &\qquad =\left[x\ketbra{0}{0} \otimes \cN_{\mathbf{p}}(\ketbra{0}{0})^{\otimes n} + \sqrt{x(1-x)}\ketbra{0}{1} \otimes \cN_{\mathbf{p}}(\ketbra{0}{1})^{\otimes n} \right. + \notag\\
    &\qquad \quad \left.\sqrt{x(1-x)} \ketbra{1}{0} \otimes \cN_{\mathbf{p}}(\ketbra{1}{0})^{\otimes n}+ (1-x)\ketbra{1}{1} \otimes \cN_{\mathbf{p}}(\ketbra{1}{1})^{\otimes n}\right].
\end{align}
This is a matrix of the form
\begin{align}
    M = \begin{pmatrix}
x \cN_{\mathbf{p}}(\ketbra{0}{0})^{\otimes n} & \sqrt{x(1-x)}\cN_{\mathbf{p}}(\ketbra{0}{1})^{\otimes n} \\
\sqrt{x(1-x)}\cN_{\mathbf{p}}(\ketbra{1}{0})^{\otimes n} & (1-x)\cN_{\mathbf{p}}(\ketbra{1}{1})^{\otimes n}
\end{pmatrix} 
\end{align}
with non-zero entries on the diagonal and the anti-diagonal. 
For convenience, we define the following $2^n\times 2^n$ matrices:
\begin{align}
\mathrm{diag}(a_1,a_2,\dots,a_{2^n}) \coloneqq 
\begin{pmatrix}
a_1 & \cdot & \cdots & \cdot \\
\cdot & a_2 & \cdots & \cdot \\
\vdots & \vdots & \ddots & \vdots \\
\cdot & \cdot & \cdots & a_{2^n}
\end{pmatrix},
\end{align}
and
\begin{align}
\mathrm{adiag}(b_1,b_2,\dots,b_{2^n}) \coloneqq
\begin{pmatrix}
\cdot & \cdots & \cdot & b_1\\[1ex]
\cdot & \cdots & b_2 & \cdot\\[1ex]
\vdots & \ddots & \vdots & \vdots\\[1ex]
b_{2^n} & \cdots & \cdot & \cdot
\end{pmatrix}.
\end{align}
Then the matrix $M$ can be written in block form as
\begin{align}
M = \begin{pmatrix}
x\,\mathrm{diag}(a_1,\dots,a_{2^n}) & \sqrt{x(1-x)}\,\mathrm{adiag}(b_1,\dots,b_{2^n}) \\[1ex]
\sqrt{x(1-x)}\,\mathrm{adiag}(c_1,\dots,c_{2^n}) & (1-x)\,\mathrm{diag}(d_1,\dots,d_{2^n})
\end{pmatrix},
\end{align}
with
\begin{align}
a_i &= (p_0+p_3)^{\,n-k_i}(p_1+p_2)^{\,k_i} &
b_i &= (p_0-p_3)^{\,n-k_i}(p_1-p_2)^{\,k_i}\\
c_i &= (p_1-p_2)^{\,n-k_i}(p_0-p_3)^{\,k_i} &
d_i &= (p_1+p_2)^{\,n-k_i}(p_0+p_3)^{\,k_i},
\end{align}
where the integer $k_i$ represents the Hamming weight of the $i$th computational basis element. The Hamming weights can be $0,\cdots, n$, and the terms $a_i, b_i, c_i, d_i$ with Hamming weight $k_i$ occur with multiplicity $\binom{n}{k_i}.$

\subsubsection*{Eigenvalues of M}

 Suppose $N$ is even and that $A$ is an $N\times N$ matrix that is zero except on the diagonal and on the anti-diagonal. That is, assume for $i=1,\dots,N$ that
\begin{align}
A_{ii} &= a_i & 
A_{i, N+1-i} &= b_i.
\end{align}
Define the permutation 
\begin{align}
J\,e_i = e_{N+1-i}.
\end{align}
After a reordering of the basis so that indices $i$ and $N+1-i$ are paired (for $i=1,\dots, N/2$, assuming $N$ is even), the matrix $A$ breaks up into blocks acting on invariant subspaces of dimension $2$. In the subspace spanned by $\{e_i, \, e_{N+1-i}\}$ the restriction of $A$ is
\begin{align}
A^{(i)} = \begin{pmatrix}
a_i & b_i\\[1mm]
b_{N+1-i} & a_{N+1-i}
\end{pmatrix}.
\end{align}
The eigenvalues of this $2\times 2$ block are
\begin{align}
x = \frac{a_i+a_{N+1-i} \pm \sqrt{(a_i-a_{N+1-i})^2+4\,b_i\,b_{N+1-i}}}{2}.
\end{align}

In our situation the matrices $\mathrm{diag}(\cdot)$ and $\mathrm{adiag}(\cdot)$ each have size $2^n\times 2^n$. Since $2^n$ is even, we can pair the indices. Label the indices in the range $1,\dots,2^n$, and define the ``complementary index" that pairs with $i$ to be $\bar{i}=2^n+1-i.$
Then in the matrix $M$ (which is $2^{n+1}\times 2^{n+1}$) the only nonzero entries occur in positions corresponding to pairs $\{\,i,\bar{i}\}$ (from the top-left and bottom-right blocks),
together with their ``cross terms'' coming from the anti-diagonal blocks.

In particular, note that in our block matrix $M$, the top-left block gives the entry
$
M_{i,i} = x\,a_i,
$
and the bottom-right block gives
$
M_{2^n+\bar{i},\,2^n+\bar{i}} = (1-x)\,d_{\bar{i}}.
$
Similarly, the top-right block has
$
M_{i,\,2^n+\bar{i}} = \sqrt{x(1-x)}\,b_i,
$
and the bottom-left block has
$
M_{2^n+\bar{i},\, i} = \sqrt{x(1-x)}\,c_{\bar{i}}.
$ Thus, for each pair $i$ and $\bar{i}$ the $2\times 2$ invariant block is
\begin{align}
M^{(i)} =
\begin{pmatrix}
x\,a_i & \sqrt{x(1-x)}\,b_i \\[1mm]
\sqrt{x(1-x)}\,c_{\bar{i}} & (1-x)\,d_{\bar{i}}
\end{pmatrix}.
\end{align}
A further simplification occurs by observing that because of \Cref{eq:pauli-output2,eq:pauli-output3}, we in fact have $c_{\bar{i}} = b_i$ and $a_i = d_{\bar{i}}$, and so 
\begin{align}
M^{(i)} =
\begin{pmatrix}
x\,a_i & \sqrt{x(1-x)}\,b_i \\[1mm]
\sqrt{x(1-x)}\,b_i & (1-x)\,a_i
\end{pmatrix}.
\end{align}
The eigenvalues of $M^{(i)}$ are obtained by solving
\begin{align}
\Bigl(\eta - x\,a_i\Bigr)\Bigl(\eta - (1-x)\,a_i\Bigr) - x(1-x)\,b_i^2 = 0,
\end{align}
so that
\begin{align}
    \eta_{i,\pm} &= \frac{x\,a_i+(1-x)\,d_{\bar{i}}\pm\sqrt{\Bigl[x\,a_i+(1-x)\,d_{\bar{i}}\Bigr]^2-4x(1-x)(a_i d_{\bar{i}} - \,b_i\,c_{\bar{i}})}}{2}\\
    &=\frac{a_i\pm\sqrt{[(2x -1)a_i]^2+4x(1-x)b_i^2}}{2}.
\end{align}
Substituting the expressions for $a_i, b_i$ and simplifying, we obtain the eigenvalues for the invariant block corresponding to the pair $(i,\bar{i})$ to be
\begin{align}
    \eta_{i,\pm} &= \frac{(p_0+p_3)^{\,n-k_i}(p_1+p_2)^{\,k_i}\pm \sqrt{\Delta_i}}{2}\\
\text{with} \quad\Delta_i &= [(2x -1)(p_0+p_3)^{n-k_i}(p_1 + p_2)^{k_i}]^2+4x(1-x)[(p_0-p_3)^{n-k_i}(p_1 - p_2)^{k_i}]^2.
\end{align}
The eigenvalue pairs $\eta_{i,1}, \eta_{i,2}$ appear with multiplicity $\binom{n}{k_i}.$ The Hamming weights $k_i$ can be one of $0,\ldots, n$, and so the final expression for the coherent information is given by
\begin{align}
    -\sum_{w = 0}^n \binom{n}{w} y_1(w)\log y_1(w)+ \sum_{w = 0}^n \binom{n}{w}  \eta_+(w) \log \eta_+(w) + \sum_{w = 0}^n \binom{n}{w}  \eta_-(w) \log \eta_-(w)
\end{align}
where
\begin{align} 
y_1(w) &= x(p_0 + p_3)^{n-w}(p_1 + p_2)^w + (1-x) (p_0 +p_3)^{w}(p_1 + p_2)^{n-w}\\
     \eta_{\pm}(w) &= \frac{(p_0+p_3)^{n-w}(p_1+p_2)^{w}\pm \sqrt{\Delta(w)}}{2}\\
    \Delta &= [(2x -1)(p_0+p_3)^{n-w}(p_1 + p_2)^{w}]^2+4x(1-x)[(p_0-p_3)^{n-w}(p_1 - p_2)^{w}]^2.
\end{align}

\section{Weighted repetition codes for the damping-dephasing channel}\label{app:ci-damp-deph-rep}

In this appendix, we provide a proof of \Cref{thm: rep-code-ci-damping-deph}. The damping-dephasing channel $\mathcal{F}_{p,g}$ has Kraus operators
\begin{align} K_1 &= \sqrt{1 - p}(|0\rangle\langle0| + \sqrt{1 - g}|1\rangle\langle1|) \\ K_2 &= \sqrt{g}|0\rangle\langle1| \\ K_3 &= \sqrt{p}|0\rangle\langle0| - \sqrt{1 - g}|1\rangle\langle1|.
\end{align}
The action of the channel on a state $\rho$ is given by $\mathcal{F}_{p,g}(\rho) = \sum_{i=1}^3 K_i \rho K_i^\dagger$.  As in Appendix~\ref{app:ci-pauli-rep}, let $\phi_n^x = x\ketbra{0}{0}^{\otimes n} + (1-x)\ketbra{1}{1}^{\otimes n}$ be the repetition code with a purification given by $\ket{\psi_n} = \sqrt{x}\ket{0}^{\otimes n+1} + \sqrt{1-x}\ket{1}^{\otimes n+1}$.
We have the following output of $\cE$ on the basis matrices $\ketbra{i}{j}$ for $i,j \in \{0,1\}$:
\begin{align}
 \mathcal{F}_{p,g}(\ketbra{0}{0}) &= \ketbra{0}{0} \\
     \mathcal{F}_{p,g}(\ketbra{1}{1})  &= g\ketbra{0}{0} + (1-g)\ketbra{1}{1}\\
     \mathcal{F}_{p,g}(\ketbra{0}{1})  &= (1-2p)\sqrt{1-g}\ketbra{0}{1}\\
    \mathcal{F}_{p,g}(\ketbra{1}{0})  &= (1-2p)\sqrt{1-g}\ketbra{1}{0}.   
\end{align}
The output state when the channel acts on the repetition code is $\mathcal{F}_{p,g}^{\otimes n}(\phi_n^x) = x (\ketbra{0}{0})^{\otimes n} + (1-x)(g\ketbra{0}{0} + (1-g)\ketbra{1}{1})^{\otimes n}$. This matrix is diagonal in the computational basis. The eigenvalues $\{p_w\}$ depend on the Hamming weight $w$ of the basis state:
\begin{itemize}
    \item $p_0 = x + (1-x)g^n$, with multiplicity 1.
    \item $p_w = (1-x)g^{n-w}(1-g)^w$, with multiplicity $\binom{n}{w}$ for $w \in \{1, \dots, n\}$.
\end{itemize}
The density matrix corresponding to the purified state on the joint system and reference is
\begin{align} \ketbra{\psi_n}{\psi_n} = x \ketbra{00}{00}^{\otimes n+1} + (1-x)\ketbra{11}{11}^{\otimes n+1} + \sqrt{x(1-x)}\left(\ketbra{00}{11}^{\otimes n+1} + \ketbra{11}{00}^{\otimes n+1}\right). \end{align}
Applying $\id_{R} \otimes \mathcal{F}_{p,g}^{\otimes n}$ to each term, we obtain:
\begin{align}
     \id_{R} \otimes \mathcal{F}_{p,g}^{\otimes n}(\ketbra{\psi_n}{\psi_n}) &= x \ketbra{0}{0}_R \otimes \mathcal{F}_{p,g}^{\otimes n}(\ketbra{0}{0}^{\otimes n}) \notag\\
    &\quad {}+ (1-x) \ketbra{1}{1}_R \otimes \mathcal{F}_{p,g}^{\otimes n}(\ketbra{1}{1}^{\otimes n}) \notag\\
    &\quad {}+ \sqrt{x(1-x)} \ketbra{0}{1}_R \otimes \mathcal{F}_{p,g}^{\otimes n}(\ketbra{0}{1}^{\otimes n}) \notag\\
    &\quad{}+ \sqrt{x(1-x)} \ketbra{1}{0}_R \otimes \mathcal{F}_{p,g}^{\otimes n}(\ketbra{1}{0}^{\otimes n}).
\end{align}
Substituting the channel action on the basis states $\ketbra{i}{j}$, we see that the state $\rho_{RE}$ is almost diagonal. The only off-diagonal terms correspond to the basis states $\ket{0}_R\ket{0\dots0}$ and $\ket{1}_R\ket{1\dots1}$. This means the matrix is block-diagonal, with one $2 \times 2$ block and the rest being $1 \times 1$ blocks (i.e., diagonal entries).

The $2 \times 2$ block corresponds to the subspace spanned by $\{\ket{0}_R\ket{0\dots0}, \ket{1}_R\ket{1\dots1}\}$. The matrix in this basis is
\begin{align} M = \begin{pmatrix}
    x & \sqrt{x(1-x)}\left((1-2p)^2(1-g)\right)^{n/2} \\
    \sqrt{x(1-x)}\left((1-2p)^2(1-g)\right)^{n/2} & (1-x)(1-g)^n
\end{pmatrix}. \end{align}
Let $T = \tr(M) = x + (1-x)(1-g)^n$ and let $C = \left((1-2p)^2(1-g)\right)^{n/2}$. The determinant is $D = \det(M) = x(1-x)(1-g)^n - x(1-x)C^2$. The eigenvalues of this block, which we denote $\eta_{\pm}$, are given by:
\begin{align} \eta_{\pm} = \frac{T \pm \sqrt{T^2-4D}}{2} = \frac{x+(1-x)(1-g)^n \pm \sqrt{(x-(1-x)(1-g)^n)^2 + 4x(1-x)C^2}}{2}. \end{align}
The other eigenvalues are the diagonal entries corresponding to the remaining basis states. These come from the term $(1-x) \ketbra{1}{1}_R \otimes \mathcal{F}_{p,g}^{\otimes n}(\ketbra{1}{1}^{\otimes n})$. The basis vectors are $\ket{1}_R\ket{y}$ for computational basis vectors $y$. The corresponding diagonal entry for a vector $y$ of Hamming weight $w$ is $(1-x)g^{n-w}(1-g)^w$.
Thus, the remaining eigenvalues are:
\begin{align} q_w = (1-x)g^{n-w}(1-g)^w \quad \text{for } w \in \{0, 1, \dots, n-1\}, \end{align}
with multiplicity $\binom{n}{w}$.
The entropy of this state is therefore
\begin{align}  -(\eta_+ \log \eta_+ + \eta_- \log \eta_-) - \sum_{w=0}^{n-1} \binom{n}{w} q_w \log q_w. \end{align}
The coherent information is then 
\begin{align}
    I_c(\mathcal{F}_{p,g}^{\otimes n}, \phi_n^x) &= \left( -p_0\log p_0 - \sum_{w=1}^n \binom{n}{w}p_w\log p_w \right) \notag\\
    &\quad - \left( -(\eta_+\log\eta_+ + \eta_-\log\eta_-) - \sum_{w=0}^{n-1} \binom{n}{w}q_w\log q_w \right).
\end{align}
We note that the eigenvalues $p_w$ and $q_w$ are identical for $w \in \{1, \dots, n-1\}$. The sums can be split:
\begin{align}
    \sum_{w=1}^n \binom{n}{w}p_w\log p_w &= \sum_{w=1}^{n-1} \binom{n}{w}p_w\log p_w + \binom{n}{n}p_n\log p_n, \\
    \sum_{w=0}^{n-1} \binom{n}{w}q_w\log q_w &= \binom{n}{0}q_0\log q_0 + \sum_{w=1}^{n-1} \binom{n}{w}q_w\log q_w.
\end{align}
The sums from $w=1$ to $n-1$ cancel, leaving:
\begin{align}
     -p_0\log p_0 - p_n\log p_n + \eta_+\log\eta_+ + \eta_-\log\eta_- + q_0\log q_0.
\end{align}
Finally, we substitute the following expressions for the eigenvalues:
\begin{itemize}
    \item $p_0 = x + (1-x)g^n$
    \item $p_n = (1-x)(1-g)^n$ (with multiplicity $\binom{n}{n}=1$)
    \item $q_0 = (1-x)g^n$ (with multiplicity $\binom{n}{0}=1$)
    \item $\eta_{\pm}$ are the eigenvalues of the matrix $M$.
\end{itemize}

We have thus proved that the coherent information for the weighted repetition code with the damping-dephasing channel is given by
\begin{align}
I_c(\mathcal{F}_{p,g}^{\otimes n}, \phi_n^x) &= -\left(x + (1-x)g^n\right)\log\left(x + (1-x)g^n\right) \notag\\
&\quad - (1-x)(1-g)^n\log\left((1-x)(1-g)^n\right) \notag\\
&\quad + (1-x)g^n\log\left((1-x)g^n\right) \notag\\
&\quad + \eta_+ \log \eta_+ + \eta_- \log \eta_-,
\end{align}
where
\begin{align} \eta_{\pm} = \frac{\tr(M) \pm \sqrt{(\tr(M))^2 - 4\det(M)}}{2}, \end{align}
with $\tr(M) = x + (1-x)(1-g)^n$ and $\det(M) = x(1-x)\left[(1-g)^n - ((1-2p)^2(1-g))^n\right]$.

\section{Numerical optimization}\label{app: numerics}

\subsection{Particle swarm optimization}

Particle swarm optimization (PSO) is a stochastic global optimization technique. It is a population-based metaheuristic, where a collection of candidate solutions, termed particles, are moved around in the search space. The movement of each particle is influenced by its own best-known position and the best-known position found by any particle in the entire swarm. This collaborative search aims to guide the swarm towards the global optimum of the objective function. In our context, the objective function is the coherent information of a mixture of i.i.d.~states. The parameters defining these states and probabilities constitute the search space for the PSO algorithm. We refer to MATLAB  documentation \cite{mathworks_particleswarm} for details of this algorithm and to our code \cite{perm-inv-codes-github} for choices of parameters in the algorithm used to obtain results in \Cref{sec:results}.
In our optimization we made use of a fast MATLAB implementation of the matrix exponential \cite{hogben2011spinach,kuprov2011diagonalization,fastexpm-github}. 

\subsection{Parametrizing  qubit input states}
The optimization procedure described in \Cref{sec:optimization} requires a method to map a vector of real parameters $x \in \mathbb{R}^N$ to an ensemble of quantum states $\{\rho_j\}_{j=1}^k$ with corresponding probabilities $\{p_j\}_{j=1}^k$, such that $\sum_{j=1}^k p_j = 1$, $p_j \ge 0$, and each $\rho_j$ is a valid quantum state. We list several parametrization schemes.

\subsubsection{Using Bloch vectors}
This method generates $k$ qubit states, which can be mixed, along with their probabilities. The input vector $x$ has $k$ parameters for probabilities and $3k$ parameters for the states.
For each qubit state $\rho_j$, $j \in \{1, \dots, k\}$, three real parameters $(s_{j1}, s_{j2}, s_{j3})$ are extracted from $x$. Let $\mathbf{s}_j = (s_{j1}, s_{j2}, s_{j3}) \in \mathbb{R}^3$. This vector is mapped to a Bloch vector $\mathbf{r}_j = (r_{jx}, r_{jy}, r_{jz}) \in \mathbb{R}^3$ using the transformation
\begin{align}
\mathbf{r}_j = \frac{\mathbf{s}_j}{1 + \|\mathbf{s}_j\|},
\end{align}
where $\|\mathbf{s}_j\| = \sqrt{s_{j1}^2 + s_{j2}^2 + s_{j3}^2}$. This mapping ensures that $\|\mathbf{r}_j\| \le 1$, so that $\mathbf{r}_j$ lies within or on the surface of the Bloch sphere. The density matrix for the $j$-th qubit is then given by its Bloch representation
\begin{align}
\rho_j = \frac{1}{2}(I + r_{jx}X + r_{jy}Y + r_{jz}Z),
\end{align}
where $I$ is the $2 \times 2$ identity matrix and $X,Y,Z$ are Pauli matrices.

\subsection{Parametrizing qudit input states}

\subsubsection{A direct parametrization with $M^\dagger M$}
 The input vector $x$ consists of $k \cdot 2d^2$ real parameters for the states and $k$ real parameters for the probabilities. For each state $\rho_j$, $j \in \{1, \dots, k\}$, a block of $2d^2$ real parameters, $x_j'$, is extracted from $x$. These $2d^2$ real parameters are interpreted as $d^2$ complex numbers, forming a vector $v_j \in \mathbb{C}^{d^2}$. This vector $v_j$ is then reshaped into a $d \times d$ complex matrix $M_j$. The density matrix $\rho_j$ is constructed as
\begin{align}
\rho_j = \frac{M_j^\dagger M_j}{\tr(M_j^\dagger M_j)}.
\end{align}

\subsubsection{Using a measurement}
 The input vector $x$ contains $2kd^2$ real parameters. These parameters are first paired to form a single complex vector $\psi_{\text{raw}} \in \mathbb{C}^{kd^2}$. This vector is then normalized:
$\ket{\Psi} = \frac{\psi_{\text{raw}}}{\|\psi_{\text{raw}}\|_2}.$
The vector $\ket{\Psi}$ is viewed as an element of $\mathcal{H} = \mathcal{H}_S \otimes \mathcal{H}_A$, where $\mathcal{H}_S \cong \mathbb{C}^k$ is an ancillary system indexing the states and $\mathcal{H}_A \cong \mathbb{C}^{d^2}$. Let $\{\ket{j}_S\}_{j=1}^k$ be the standard orthonormal basis for $\mathcal{H}_S$.
For each $j \in \{1, \dots, k\}$, an unnormalized vector $|\phi'_j\rangle_A \in \mathcal{H}_A$ is obtained by projecting $\ket{\Psi}$ onto the subspace of $\mathcal{H}_S$ spanned by $\ket{j}$, i.e., 
$|\phi'_j\rangle_A = (\bra{j}_S \otimes I_A) \ket{\Psi},$
where $I_A$ is the identity operator on $\mathcal{H}_A$. The probability $p_j$ associated with the $j$-th state is then given by $p_j = \langle\phi'_j|\phi'_j\rangle_A$.
Due to the initial normalization of $\ket{\Psi}$, it follows that $\sum_{j=1}^k p_j = \|\ket{\Psi}\|^2 = 1$. The normalized state vector $\ket{\phi_j}_A \in \mathcal{H}_A$ is obtained as $\ket{\phi_j}_A = |\phi'_j\rangle_A / \sqrt{p_j}$ (if $p_j > 0$).
This vector $\ket{\phi_j}_A \in \mathbb{C}^{d^2}$ is interpreted as a pure bipartite state vector $\ket{\phi_j}_{A_1A_2}$ in $\mathcal{H}_{A_1} \otimes \mathcal{H}_{A_2}$, where $\mathcal{H}_{A_1} \cong \mathbb{C}^d$ and $\mathcal{H}_{A_2} \cong \mathbb{C}^d$. The $j$-th density matrix $\rho_j$ is then obtained by tracing out the second system: $\rho_j = \tr_{A_2}(\ket{\phi_j}\bra{\phi_j}).$
The resulting $\rho_j$ is a $d \times d$ density matrix acting on $\mathcal{H}_{A_1}$.

\subsection{Neural network code for the damping-dephasing channel}\label{app: nn-damping-dephasing}
We use a code found using the neural network state ansatz introduced in \cite{bausch2020neural} as a benchmark for the permutation-invariant codes found using our approach for the damping-dephasing channel. This method represents the quantum code as a pure state $|\nu_k\rangle \in \cH_R \otimes \cH_A^{\otimes k}$ whose amplitudes are parametrized by a feed-forward neural network. The network takes as input a binary string $s_n$ (where $n = k + r$ is the sum of the number of channel qubits $k$ and reference qubits $r$) representing a computational basis state of the joint system and outputs its complex amplitude $\psi(s_n)$. The network's parameters are then optimized to maximize the coherent information, which is calculated using the expression in \Cref{eq:coherent-info-purification}. 

Table~\ref{tab:nn-code-dampdeph} lists the non-zero amplitudes for a code found for the damping-dephasing channel with this method for $k = 3$ channel copies. This code has the best threshold and rate among neural network codes found for up to $5$ copies of the damping-dephasing channel. The column `$s_n(A^k|R)$' specifies the basis states of the system ($A^k$) and reference ($R$) parts that have non-zero support, and `$\psi(s_n)$' gives their corresponding complex amplitudes. The threshold (for $p = 0.16$, as a function of the damping error probability $g$) for this code is plotted in \Cref{fig: damp-dephasing}.

\begin{table}[t]
\centering
\caption{Neural network code for the damping-dephasing channel $\cF_{p,g}$ with $(p,g) = (0.16, 0.2)$ using $3$ channel copies. The column `$s_n(A^k|R)$' specifies the basis states of the system ($A^k$) and reference ($R$) parts that have non-zero support, and `$\psi(s_n)$' gives their corresponding complex amplitudes. The threshold (for $p = 0.16$, as a function of $g$) for this code is plotted in \Cref{fig: damp-dephasing}.\label{tab:nn-code-dampdeph}}
\begin{tabular}{l c r c}
\toprule
$|\nu_k\rangle$ & $s_n (A^k|R)$ & \multicolumn{1}{c}{$\psi(s^n)$} & $\frac{1}{k}Q^{(1)}( \cF_{p,g}^{\otimes k},\nu_k)$ \\
\midrule
\multirow{10}{*}{$k=3$} & $000|000$ & $+0.2351 - 0.1769i$ & \multirow{10}{*}{$2.0046\cdot10^{-2}$}\\
                       & $001|000$ & $+0.2351 - 0.1769i$ & \\
                       & $010|000$ & $-0.0306 - 0.4689i$ & \\
                       & $011|001$ & $+0.2351 - 0.1769i$ & \\
                       & $011|010$ & $+0.2351 - 0.1769i$ & \\
                       & $011|100$ & $+0.2351 - 0.1769i$ & \\
                       & $100|000$ & $+0.2351 - 0.1769i$ & \\
                       & $101|000$ & $+0.2351 - 0.1769i$ & \\
                       & $110|000$ & $+0.2351 - 0.1769i$ & \\
                       & $111|000$ & $+0.2351 - 0.1769i$ &  \\
\bottomrule
\end{tabular}
\end{table}

\end{document}